\crefname{corollary}{Cor.}{Cors.}
\crefname{definition}{Def.}{Defs.}
\crefname{equation}{Eq.}{Eqs.}
\crefname{fact}{Fact}{Facts}
\crefname{figure}{Fig.}{Figs.}
\crefname{lemma}{Lemma}{Lemmas}
\crefname{proposition}{Prop.}{Props.}
\crefname{section}{Sec.}{Secs.}
\crefname{theorem}{Thm.}{Thms.}
\crefname{enumi}{}{}
\newtheorem{corollary}{Corollary}
\newtheorem{definition}{Definition}
\newtheorem{fact}{Fact}
\newtheorem{lemma}{Lemma}
\newtheorem{proposition}{Proposition}
\newcommand{\field}{\mathbb{F}\xspace}
\newcommand{\identity}{\mathbb{I}\xspace}
\newcommand{\naturals}{\mathbb{N}\xspace}
\newcommand{\integers}{\mathbb{Z}\xspace}
\newcommand{\defsum}[3]{\sum_{#1}^{#2}{#3}}
\newcommand{\rs}[1]{\operatorname{rs}\lp #1 \rp\xspace}
\newcommand{\cs}[1]{\operatorname{cs}\lp #1 \rp\xspace}
\newcommand{\im}[1]{\operatorname{im}#1\xspace}
\newcommand{\wt}[1]{\text{wt\,}{#1}\xspace}
\newcommand{\stabcode}[3]{\llbracket #1, #2, #3 \rrbracket\xspace}
\newcommand{\supp}{{\mathrm{supp}}}
\newcommand{\bas}{{\mathrm{bas}}}
\newcommand{\spn}{{\mathrm{span}}}
\newcommand{\lp}{\left(}
\newcommand{\rp}{\right)}
\newcommand{\lb}{\left[}
\newcommand{\rb}{\right]}
\newcommand{\lc}{\left\{}
\newcommand{\rc}{\right\}}
\newcommand{\set}[1]{\langle #1 \rangle\xspace}
\newcommand{\numth}{\textsuperscript{th}}
\newcommand{\etal}{\emph{et al.}\xspace}
\newcommand{\<}{{\langle}}
\renewcommand{\>}{{\rangle}}
\newcommand{\bigOmega}[1]{\Omega\lp #1 \rp\xspace}
\newcommand{\bigTheta}[1]{\Theta\lp #1 \rp\xspace}
\newcommand{\revA}[1]{{#1}}
\begin{document}
\title{Qudit low-density parity-check codes}

\author{Daniel J. Spencer}
\affiliation{Joint Center for Quantum Information and Computer Science, NIST/University of Maryland, College Park, MD 20742, USA}
\affiliation{Joint Quantum Institute, NIST/University of Maryland, College Park, MD 20742, USA}
\affiliation{Department of Physics, University of Maryland, College Park, MD 20742, USA}
\orcid{0000-0003-1547-2935}
\email{djspence@umd.edu}

\author{Andrew Tanggara}
\affiliation{Centre for Quantum Technologies, National University of Singapore, 3 Science Drive 2, Singapore 117543}
\affiliation{Nanyang Quantum Hub, School of Physical and Mathematical Sciences, Nanyang Technological University, Singapore 639673}
\email{andrew.tanggara@gmail.com}

\author{Tobias Haug}
\affiliation{Quantum Research Center, Technology Innovation Institute, Abu Dhabi, UAE}
\orcid{0000-0003-2707-9962}
\email{tobias.haug@u.nus.edu}

\author{Derek Khu}
\affiliation{Institute for Infocomm Research (I\textsuperscript{2}R), Agency for Science, Technology and Research (A*STAR),
1 Fusionopolis Way, \#21-01, Connexis South Tower, Singapore 138632, Republic of Singapore}
\email{derek\_khu@a-star.edu.sg}

\author{Kishor Bharti}
\affiliation{Quantum Innovation Centre (Q.InC), Agency for Science, Technology and Research (A*STAR), 2 Fusionopolis Way, Innovis
\#08-03, Singapore 138634, Republic of Singapore}
\affiliation{Institute of High Performance Computing (IHPC), Agency for Science, Technology and Research (A*STAR), 1 Fusionopolis
Way, \#16-16 Connexis, Singapore 138632, Republic of Singapore}
\orcid{0000-0002-7776-6608}
\email{kishor.bharti1@gmail.com}

\maketitle

\begin{abstract}
    Qudits offer significant advantages over qubit-based architectures, including more efficient gate compilation, reduced resource requirements, improved error-correction primitives, and enhanced capabilities for quantum communication and cryptography. Yet, one of the most promising families of quantum error correction codes, namely quantum low-density parity-check (LDPC) codes, have so far been mostly restricted to qubits. Here, we generalize recent advancements in LDPC codes from qubits to qudits. We introduce a general framework for finding qudit LDPC codes and apply our formalism to several promising types of LDPC codes. We generalize bivariate bicycle codes, including their coprime variant; hypergraph product codes, including the recently proposed La-cross codes; subsystem hypergraph product (SHYPS) codes; high-dimensional expander codes, which make use of Ramanujan complexes; and fiber bundle codes. Using the qudit generalization formalism, we then numerically search for and decode several novel qudit codes compatible with near-term hardware. Our results highlight the potential of qudit LDPC codes as a versatile and hardware-compatible pathway toward scalable quantum error correction.
\end{abstract}

\newpage
\tableofcontents
\newpage

\section{Introduction}\label{sec:intro}
While most quantum algorithms and devices assume quantum \emph{bits} (qubits) as the fundamental informational unit, many physical systems allow for more than just two states. Qudits, the $q$-dimensional generalization of qubits, expand the Hilbert space available for computation~\cite{wang2017qudit,wang2020qudits,lu2020quantum,neeley2009emulation,blok2021quantum,chi2022programmable,low2020practical,nguyen2019quantum,nagata2020generalization,luo2014geometry,li2013geometry,luo2014universal,reimer2019high} and allow for an expanded model of digital quantum computation. The utilization of qudits in quantum computation has been demonstrated to offer distinct advantages over qubit-based architectures. Notably, qudits enable exponential improvements in the compilation of multi-control gates with respect to the number of control registers~\cite{gokhale2019asymptotic,kiktenko2020scalable}, an advantage directly applicable to the compilation of subroutines for Grover’s search~\cite{grover1996fastquantummechanicalalgorithm} and Shor’s factoring~\cite{shor1999polynomial} algorithms. Further advantages include a reduction in the number of qudits required for qubit-to-qudit circuit compilations~\cite{luo2014universal,luo2014geometry,wang2020qudits,kiktenko2023realizationquantumalgorithmsqudits,Nikolaeva_2024}, an overhead for fault-tolerant protocols comparable to qubit protocols~\cite{keppens2025qudit}, as well as enhanced yield and fidelity in magic-state distillation protocols, which asymptotically approach optimal performance as the qudit dimension increases~\cite{campbell2012magic,campbell2014enhanced}. In addition, the deployment of qudits has been shown to augment both the security and efficiency of quantum cryptographic schemes~\cite{bruss2002optimal,cerf2002security,durt2003security,durt2004security,bradler2016finite,bouchard2017high}, improve the noise resilience and security of quantum communication channels~\cite{cozzolino2019high}, and facilitate the simulation of high-dimensional quantum systems~\cite{sawaya2020resource,tacchino2021proposal,gonzalez2022hardware,meth2023simulating}. These demonstrated advantages have motivated experimental realizations of qudit-based systems across a variety of physical platforms, including superconducting~\cite{neeley2009emulation,blok2021quantum,morvan2021qutrit,roy2023two,goss2022high,fedorov2012implementation} photonic~\cite{lu2020quantum,wang2018multidimensional,chi2022programmable,reimer2019high}, trapped-ion~\cite{low2020practical,hrmo2023native,leupold2018sustained,ringbauer2022universal}, and circuit QED~\cite{brock2024quantumerrorcorrectionqudits} devices.

However, qudit-based quantum computers face some drawbacks compared to their qubit counterparts. For example, it can be more challenging to control qudit systems as more sophisticated techniques are required to transition qudits between more than two states, which can be harder to implement experimentally. One additional problem that qudit systems face that qubit systems also face is, of course, \emph{noise}. In fact, because qudits have more than two states, the error models are often more complex than those of qubits. The field of \emph{quantum error correction}~\cite{shor1995,steane1996,mackay1997,kitaev2003,resch2021} is one of the most promising paths to overcoming the problem of noise to realize large-scale quantum computers, evidenced by the extensive theoretical work that has resulted in promising quantum error correcting codes. Recently, there have even been several exciting experimental results demonstrating small-scale error-corrected quantum computation and early fault-tolerance~\cite{acharya2024,lacroix2024,eickbusch2024,rodriguez2024,bluvstein2024}.

Quantum low-density parity-check (LDPC) codes~\cite{kovalev2013} are a promising class of error correcting codes, with the surface code~\cite{bravyi1998,kitaev2003,fowler2009,fowler2012,zhao2022,google2023} being a notable example. Recently, a number of quantum LDPC codes have been introduced that show potential improvements over comparable surface codes, which offer more densely encoded logical information with a comparable noise-resilience. Examples include bivariate~\cite{bravyi2024} and multivariate~\cite{voss2024} bicycle codes, hypergraph product codes~\cite{tillich2013,pecorari2025}, and subsystem hypergraph product codes~\cite{li2020,malcolm2025}. The intuitive polynomial structure of these codes makes them easy to work with and provides a nice structure for searching for promising codes. In particular for bivariate bicycle codes, hypergraph product codes, and lifted product codes, fault-tolerant quantum computation architectures and protocols suitable for medium-sized devices have been proposed~\cite{yoder2025tourgrossmodularquantum,he2025extractors,cross2024improved,xu2024constant} and demonstrated to have a promising resource overhead scaling, some outperforming that of the surface code. Other quantum LDPC code constructions, such as high-dimensional expander codes~\cite{evra2020} and fiber bundle  codes~\cite{hastings2020}, are based on notions from algebraic topology and also show some promise.

In this work, we generalize these codes to the case where both the physical and the logical systems are of local dimension greater than 2 to give \emph{qudit LDPC codes}. We develop the theory for finding such codes and apply this formalism to the LDPC code constructions mentioned in the previous paragraph. For a subset of these codes, we numerically search for promising qudit LDPC codes and we propose several novel codes along with decoding results. We obtain qudit codes with parameters comparable to those of their qubit counterparts. Our results establish qudit LDPC codes as a promising framework for advancing scalable quantum error correction. We give a summary of our results and the current state of the qudit versions of various LDPC codes in~\cref{tab:qudit-code-summary}.

\begin{table*}[tb]
    \centering
    \begin{tblr}{colspec={|Q[c,m]|Q[c,m]|Q[c,m]|Q[c,m]|Q[c,m]|Q[c,m]|Q[c,m]|Q[c,m]|},row{even}={bg=lightgray}}
        \hline
        \textbf{Code type} & {\textbf{Original}\\\textbf{reference(s)}} & {\textbf{Quditization}\\\textbf{reference}}\\
        \hline[1pt]
        Bivariate bicycle & \cite{bravyi2024} & \cref{sec:qudit-bb-codes}\\
        \hline
        Coprime bivariate bicycle & \cite{wang2025,postema2025} & \cref{subsec:coprime-qudit-bb-codes}\\
        \hline
        Hypergraph product & \cite{tillich2013} & \cref{sec:qudit-hgp-codes}\\
        \hline
        La-cross & \cite{pecorari2025} & \cref{subsec:qudit-la-cross-codes}\\
        \hline
        {Subsystem hypergraph product\\ simplex (SHYPS)} & \cite{li2020,malcolm2025} & \cref{sec:qudit-shyps-codes}\\
        \hline
        Lifted product & \cite{panteleev2021a, panteleev2022} & \cite{panteleev2022}\\
        \hline
        {Two-block group-algebra\\ (2BGA)} & \cite{kovalev2013,wang2023,lin2023,lin2024quantum} & \cite{kovalev2013,wang2023,lin2023,lin2024quantum}\\
        \hline
        {High-dimensional\\ expander (HDX)} & \cite{evra2020} & \cref{sec:qudit-hdx-codes}\\
        \hline
        Fiber bundle & \cite{hastings2020} & \cref{sec:qudit-fiber-bundle-codes}\\
        \hline
    \end{tblr}
    \caption{Summary of previous results and the contributions of this paper in qudit LDPC codes. For prior results, often the codes were introduced in a general way (i.e., over $\field_q$) that does not require explicit quditization.}
    \label{tab:qudit-code-summary}
\end{table*}

The rest of this paper is organized as follows. In~\cref{sec:prelims}, we give some mathematical background needed to understand our results, where we give primers on finite field theory, ring theory, homological algebra, and coding theory. We also specify what we mean by the "quditization" of a qubit LDPC code. Then, the main part of the paper is a series of qudit generalizations of promising quantum LDPC codes. In each section, we give the formalism for generalizing the code from qubits to qudits and in some sections we give numerical results, introducing several novel qudit error correcting codes. We start with qudit bivariate bicycle codes in~\cref{sec:qudit-bb-codes}, including coprime bivariate bicycle codes, a recently proposed subclass. Then, we consider hypergraph product codes in~\cref{sec:qudit-hgp-codes}, including the La-cross codes, which are a specific instance of hypergraph product codes that are easy to work with numerically. In~\cref{sec:qudit-shyps-codes}, we generalize subsystem hypergraph product simplex codes to qudits and give some numerical results. For the last two code constructions, high-dimensional expander codes and fiber bundle codes in~\cref{sec:qudit-hdx-codes,sec:qudit-fiber-bundle-codes}, respectively, we eschew numerical results and only present the formalism for their qudit generalization. Finally, we give some concluding remarks in~\cref{sec:discussion}. In our Appendix, we provide a primer on qudit error correction in~\cref{app-sec:qudit-ec-primer}.

\section{Preliminaries}\label{sec:prelims}
We start by giving a brief overview of several mathematical tools needed to understand our results. In~\cref{subsec:fields-rings-primer}, we introduce some notions from abstract algebra, in particular finite field theory and ring theory. Then, in~\cref{subsec:homological-algebra-primer}, we give a brief overview of some notions from algebraic topology, namely chain complex homology, which is needed in the construction of hypergraph product codes, high-dimensional expander codes, and fiber bundle codes. Finally, in~\cref{subsec:coding-theory-primer}, we give a short introduction to some helpful results from classical coding theory. For a basic introduction to qudit error correction, see~\cref{app-sec:qudit-ec-primer}.

\subsection{Fields and rings}\label{subsec:fields-rings-primer}
We begin with finite field theory and ring theory~\cite{dummit2004}, which play important roles in most of the results in this work. In essence, moving from qubits to qudits means expanding the "alphabet" that we use for our codes. For qubits, we work over the binary field $\mathbb{F}_{2}$ (with elements 0 and 1, where $1 + 1 = 0$). For qudits, we use a larger field $\field_q$. This richer mathematical structure allows for more complex error models and potentially more efficient codes, but it also introduces subtleties in how we ensure that different types of parity checks commute. To understand these subtleties, we give a brief overview of several ideas from algebra here for the reader's convenience. Though many readers are likely familiar with these notions, we include them for completion and consistency of notation. We start by defining a group:

\begin{definition}[Group]\label{def:group}
    Given a set $G$ and binary operation $* : G \times G \to G$, we call $(G,*)$ a \emph{group} if all of the following conditions hold:
    \begin{enumerate}
        \item \textbf{Associativity}: $(a*b)*c = a*(b*c)$ for all $a,b,c \in G$
        \item \textbf{Identity}: There exists a unique \emph{identity} element $e \in G$ such that $e*a = a*e = a$ for all $a \in G$
        \item \textbf{Inverse}: For all $a \in G$, there exists a unique \emph{inverse} element $a^{-1}$ such that $a*a^{-1} = a^{-1}*a = e$.
    \end{enumerate}
\end{definition}

If we have for elements $a,b \in G$ that $a*b = b*a$, then we say that $a$ and $b$ \emph{commute}. If \emph{all} elements of the group $G$ commute under binary operation $*$, then we say that $(G,*)$ is an \emph{Abelian group}; otherwise, it is considered \emph{non-Abelian}. With the notion of a group in mind, we can now define a ring:

\begin{definition}[Ring]\label{def:ring}
    A \emph{ring}\footnote{Unless otherwise stated, all "rings" in this paper are commutative.} is an Abelian group under addition equipped with a second binary operation called multiplication, which is commutative, associative, and distributive over addition.
\end{definition}

In this work, we further assume that a ring has a multiplicative identity as well. One important notion from ring theory is that of an \emph{ideal} of a ring:

\begin{definition}[Ring ideal]\label{def:ring-ideal}
    An \emph{ideal} of a (commutative) ring is an Abelian group under addition that is closed under multiplication by ring elements on the left or right.
\end{definition}

We will find the notion of a \emph{principal} ideal particularly useful later on, so we define that here:

\begin{definition}[Principal ideal]\label{def:principal-ideal}
    A \emph{principal ideal} is an ideal of a ring that is generated by a single element of the ring (through its multiplication by every element of the ring).
\end{definition}

If $\iota \in \mathcal{R}$ is a principal ideal of a ring $\mathcal{R}$, we denote it using the notation $(\iota)$. With these notions defined from ring theory in hand, we can now define a field:

\begin{definition}[Field]\label{def:field}
    A \emph{field} is a ring containing the identity element in which every nonzero element has a multiplicative inverse.
\end{definition}

Finally, we define a finite (or Galois) field:

\begin{definition}[Finite field]\label{def:galois-field}
    A \emph{finite} or \emph{Galois field} is a field with a finite number of elements.
\end{definition}

We denote a finite field by $\field_q$, which has \emph{order} $q = p^s$ for some prime $p$, which we call the \emph{characteristic} of the field, and some $s \in \naturals$. Explicitly, $\field_q$ is a field that contains the $0$ element and $q-1$ other nonzero elements. If $q$ is prime (i.e., $s = 1$), then $\field_q$ is isomorphic to the integers modulo $q$: $\field_q \cong \integers/q\integers$. For example, for qubits, the field $\field_2$ contains the elements $\{0,1\}$ and is isomorphic to $\integers/2\integers$. When $s \geq 2$, this is not true and $\field_q$ is instead a (nontrivial) finite field \emph{extension} of $\field_p$, where we have $\field_q \cong \field_p[x] / \text{irr}(x)$ for an irreducible polynomial in $x$ of degree $s$. It is important to note that when $s \geq 2$, the field $\field_q$ is not simply isomorphic to the integers modulo $q$, $\mathbb{Z}/q\mathbb{Z}$. Instead, it is constructed using polynomials. This construction is necessary because $\mathbb{Z}/q\mathbb{Z}$ does not form a field when $q$ is not prime (e.g., in $\mathbb{Z}/4\mathbb{Z}$, 2 has no multiplicative inverse). The field extension construction ensures that every nonzero element remains invertible. If we denote the nonzero elements of $\field_q$ by $\field_q^\times$, then a primitive element $\omega \in \field_q$ is an element of $\field_q$ such that each element of $\field_q^\times$ can be written as $\omega^i$ for some $i \in \naturals$. If $n$ is a positive integer coprime to $q$ and $m$ is the smallest positive integer such that $q^m = 1 \pmod{n}$, the field $\field_{q^m}$ is the smallest extension field of $\field_q$ containing a primitive $n\numth$ root of unity. Fixing a primitive element $\omega$ of $\field_{q^m}$, a primitive $n\numth$ root of unity in $\field_{q^m}$ is given as
\begin{align}
    \beta = \omega^{\frac{q^m - 1}{n}}\,.
\end{align}
For several code constructions, we will be working with polynomials over finite fields, which we denote by $\field_q[x]$. In particular, we will find the notion of a \emph{primitive polynomial} useful. To understand primitive polynomials, though, we need to define the notion of an \emph{irreducible polynomial}:

\begin{definition}[Irreducible polynomial]\label{def:irreducible-polynomial}
    A polynomial $f(x) \in \field_q[x]$ is said to be \emph{irreducible} if it cannot be factored into two polynomials $g(x), h(x) \in \field_q[x]$ such that the degrees of both $g(x)$ and $h(x)$ are less than the degree of $f(x)$.
\end{definition}

Now, we define primitive polynomials:

\begin{definition}[Primitive polynomial]\label{def:primitive-polynomial}
    Let $\field_q$ be a finite field, where $q = p^s$ for prime $p$ and $s \in \naturals$. A degree-$s$ polynomial $f(x)$ over $\field_p$ is called \emph{primitive} if $f(x)$ is an irreducible polynomial in which the smallest integer $n$ for which $f(x)$ divides $x^n-1$ is $n=p^s-1$.
\end{definition}

\subsection{Homological algebra and chain complexes}\label{subsec:homological-algebra-primer}
We now give a basic introduction to homological algebra, a concept from algebraic topology; for a more detailed exposition, we refer the reader to~\cite{hatcher2001,nakahara2003}. Homological algebra provides a powerful, abstract language for describing quantum error-correcting codes, especially Calderbank--Shor--Steane (CSS) codes. A chain complex organizes the components of the system (the physical qudits and the stabilizer checks) into a sequence of vector spaces connected by ``boundary operators'' $\partial_j$. We start with a \emph{chain complex}:
\begin{align}
    \cdots \xrightarrow{\partial_{j+1}} \mathcal{A}_j \xrightarrow{\partial_j} \mathcal{A}_{j-1} \xrightarrow{\partial_{j-1}} \cdots \xrightarrow{\partial_1} \mathcal{A}_0 \xrightarrow{\partial_0 = 0} 0
\end{align}
for vector spaces $\mathcal{A}_j$ over $\field_q$ and linear maps $\partial_j \colon \mathcal{A}_j \to \mathcal{A}_{j-1}$, called \emph{boundary operators}, with the property $\partial_j\partial_{j+1} = 0$ (or equivalently, $\im \partial_{j+1} \subseteq \ker {\partial_j}$). The crucial property $\partial_{j}\partial_{j+1} = 0$ captures a fundamental topological idea: the boundary of a boundary is empty. For example, the boundary of a solid disk is a circle, but that circle itself has no boundary. In the context of quantum error correction, this condition is directly related to the requirement that $X$ and $Z$ stabilizers must commute. A \emph{$k$-complex} is a chain complex with $\mathcal{A}_j = 0$ for all $j > k$ and $\mathcal{A}_k \neq 0$. The basis elements of $\mathcal{A}_j$ are called \emph{$j$-cells} and vectors in $\mathcal{A}_j$ are called \emph{$j$-chains}. The Hamming weight of a chain is the number of cells in the chain with nonzero coefficients. The \emph{homology} of a chain complex $\mathcal{A}$ is a sequence of vector spaces
\begin{align}
    H_j(\mathcal{A}) = \ker {\partial_j} / \im \partial_{j+1}\,.
\end{align}
The $j\numth$ \emph{Betti number} is defined as $b_j(\mathcal{A}) = \dim_{\field_q}{H_j(\mathcal{A})}$. An element of $\ker{\partial_j}$ is called a \emph{$j$-cycle}. The homology groups $H_{j}(\mathcal{A})$ measure the presence of ``cycles'' that are not boundaries. In quantum codes, these groups correspond to the logical operators, and the dimension of the homology group (the Betti number $b_j$) typically tells us the number of encoded logical qudits $k$. The \emph{cohomology} $H^j(\mathcal{A})$ is the homology of the dual chain:
\begin{align}
    &\lp \cdots \xleftarrow{\partial^*_{j+1}} \mathcal{A}_j^* \xleftarrow{\partial_j^*} \mathcal{A}_{j-1}^* \xleftarrow{\partial_{j-1}^*} \cdots \xleftarrow{\partial_1^*} \mathcal{A}_0^* \xleftarrow{0} 0 \rp\\
    \cong &\lp \cdots \xleftarrow{\partial_{j+1}^\intercal} \mathcal{A}_j \xleftarrow{\partial_j^\intercal} \mathcal{A}_{j-1} \xleftarrow{\partial_{j-1}^\intercal} \cdots \xleftarrow{\partial_1^\intercal} \mathcal{A}_0 \xleftarrow{0} 0 \rp\,,
\end{align}
where $\mathcal{A}_j^*$ is the vector space of linear functionals on $\mathcal{A}_j$.

\subsubsection{Tensor product of chain complexes}\label{subsubsec:tensor-product-chain}
A standard construction in algebraic topology, which is also central to the construction of many quantum error-correcting codes (such as hypergraph product codes), is the tensor product of chain complexes. This allows us to build larger, more complex structures from simpler components. Suppose we have two chain complexes, $(A, \partial^A)$ and $(B, \partial^B)$. We define their tensor product $C = A \otimes B$. The vector spaces of the resulting complex $C$ are formed by combining the spaces of $A$ and $B$ such that their dimensions add up. Specifically, the space of $k$-chains, $C_k$, is defined as the direct sum of the tensor products of the constituent spaces:
\begin{align}
    C_k = \bigoplus_{i+j=k} A_i \otimes_{\field_q} B_j\,, \label{eq:tensor_product_spaces}
\end{align}
where the tensor product is taken over the field $\field_q$ underlying the vector spaces. (We will omit the subscript $\field_q$ in subsequent notation when the underlying field is obvious.) An element in $C_k$ is a linear combination of elements of the form $a \otimes b$, where $a \in A_i$ and $b \in B_j$ with $i + j = k$. The crucial step is defining the boundary operator $\partial^C_k : C_k \to C_{k-1}$ for the product complex. We want this operator to utilize the existing boundary operators $\partial^A$ and $\partial^B$, and it must satisfy the fundamental property of a chain complex: $(\partial^C)^2 = 0$. One might intuitively try to define the boundary operator analogously to the product rule in calculus:
\begin{equation}
    \partial_{\text{naive}}(a \otimes b) \stackrel{?}{=} (\partial^A a) \otimes b + a \otimes (\partial^B b).
\end{equation}
Let us check if this definition satisfies the required property. Applying the boundary operator twice:
\begin{align}
    \partial_{\text{naive}}^2(a \otimes b) &= \partial_{\text{naive}}\left((\partial^A a) \otimes b + a \otimes (\partial^B b)\right)\\
    &= \left((\partial^A)^2 a\right) \otimes b + (\partial^A a) \otimes (\partial^B b) + (\partial^A a) \otimes (\partial^B b) + a \otimes \left((\partial^B)^2 b\right)\,.
\end{align}
Since $A$ and $B$ are chain complexes, $(\partial^A)^2=0$ and $(\partial^B)^2=0$. The expression simplifies to:
\begin{equation}
    \partial_{\text{naive}}^2(a \otimes b) = 2 \left((\partial^A a) \otimes (\partial^B b)\right)\,.
\end{equation}
This expression is not necessarily zero unless the underlying field $\mathbb{F}_q$ has characteristic 2 (i.e., where $1+1=0$). For general fields, this naive definition fails to produce a valid chain complex. To ensure that the cross-terms cancel and $(\partial^C)^2=0$ holds regardless of the field characteristic, we must introduce a sign adjustment known as the \emph{Koszul sign rule}. The rule introduces a sign based on the dimension of the element that the second operator effectively ``jumps over.'' The correct boundary operator $\partial^C$ acting on an element $a \otimes b$, where $a \in A_i$, is thus defined as:
\begin{equation}
    \partial^C(a \otimes b) = (\partial^A a) \otimes b + (-1)^i a \otimes (\partial^B b)\,.\label{eq:koszul_boundary}
\end{equation}
We can now verify that this definition ensures $(\partial^C)^2=0$. Note that $\partial^A a \in A_{i-1}$.
\begin{align}
    (\partial^C)^2(a \otimes b) &= \partial^C\left((\partial^A a) \otimes b + (-1)^i a \otimes (\partial^B b)\right)\\
    \begin{split}
        &= (\partial^A)^2 a \otimes b + (-1)^{i-1} (\partial^A a) \otimes (\partial^B b)\\
        &\quad + (-1)^i (\partial^A a) \otimes (\partial^B b) + (-1)^i a \otimes (\partial^B)^2 b
    \end{split}\\
    &= 0 + \left((-1)^{i-1} + (-1)^i\right) (\partial^A a) \otimes (\partial^B b) + 0\\
    &= 0\,,
\end{align}
as desired.

\subsubsection{Tensor product of co-chain complexes}\label{subsubsec:tensor-product-co-chain}
The construction is entirely analogous for the tensor product of \emph{co-chain complexes}. Recall that a co-chain complex utilizes \emph{co-boundary operators}, which we denote by $\delta$. Unlike boundary operators, these maps increase the dimension. That is, $\delta_j: A^*_j \to A^*_{j+1}$, but they must still satisfy the fundamental property $\delta^2 = 0$. When working over a field, $\delta$ corresponds to the transpose of the boundary operator, $\partial^\intercal$. Suppose we have two co-chain complexes $(A^*, \delta^A)$ and $(B^*, \delta^B)$. We define their tensor product $C^* = A^* \otimes B^*$. The underlying vector spaces are defined similarly to the chain complex case in \cref{eq:tensor_product_spaces}:
\begin{align}
    C^*_k = \bigoplus_{i+j=k} A^*_i \otimes B^*_j\,.
\end{align}
We now seek a co-boundary operator $\delta^C_k: C^*_k \to C^*_{k+1}$. Just as before, we might try a naive definition analogous to the product rule:
\begin{align}
    \delta_{\text{naive}}(a \otimes b) \stackrel{?}{=} (\delta^A a) \otimes b + a \otimes (\delta^B b).
\end{align}
However, applying this operator twice yields the same problem encountered previously:
\begin{align}
    \delta_{\text{naive}}^2(a \otimes b) &= \delta_{\text{naive}}\left((\delta^A a) \otimes b + a \otimes (\delta^B b)\right)\\
    &= \lp (\delta^A)^2 a \rp \otimes b + (\delta^A a) \otimes (\delta^B b) + (\delta^A a) \otimes (\delta^B b) + a \otimes \lp (\delta^B)^2 b \rp\\
    &= 2 \lp (\delta^A a) \otimes (\delta^B b) \rp\,.
\end{align}
This is not generally zero unless the field has characteristic 2. The \emph{Koszul sign rule} is therefore also essential for the tensor product of co-chain complexes to ensure the resulting structure is a valid complex. For $a \in A^*_i$ and $b \in B^*_j$, the co-boundary operator $\delta^C$ is defined as:
\begin{align}
    \delta^C(a \otimes b) = (\delta^A a) \otimes b + (-1)^i a \otimes (\delta^B b)\,.\label{eq:koszul_co-boundary}
\end{align}
We can verify that this definition ensures that $(\delta^C)^2=0$. It is crucial to note how the degree changes: if $a \in A^*_i$, then $\delta^A a \in A^*_{i+1}$. When we apply $\delta^C$ a second time, the Koszul sign rule must account for this increased degree
\begin{align}
    (\delta^C)^2(a \otimes b) &= \delta^C\lp (\delta^A a) \otimes b + (-1)^i a \otimes (\delta^B b) \rp\\
    &= \delta^C\lp (\delta^A a) \otimes b \rp + (-1)^i \delta^C\lp a \otimes (\delta^B b) \rp\\
    \begin{split}
        &= \lp (\delta^A)^2 a \otimes b + (-1)^{i+1} (\delta^A a) \otimes (\delta^B b) \rp\\
        &\quad + (-1)^i \lp (\delta^A a) \otimes (\delta^B b) + (-1)^i a \otimes (\delta^B)^2 b \rp
    \end{split}\\
    &= 0 + \left((-1)^{i+1} + (-1)^i\right) (\delta^A a) \otimes (\delta^B b) + 0\\
    &= 0\,.
\end{align}
The introduction of the sign $(-1)^i$, adjusted appropriately when applying the operator sequentially (becoming $(-1)^{i+1}$ because the degree is increased), ensures the cancellation of the cross-terms, guaranteeing that the tensor product of two co-chain complexes is itself a valid co-chain complex.

\subsection{Coding theory}\label{subsec:coding-theory-primer}
Finally, we give a brief overview of coding theory. We start by defining a \emph{linear code}:

\begin{definition}[Linear code]\label{def:linear-code}
    A \emph{linear code} $C \subseteq \field_q^n$ is a linear subspace of the $n$-dimensional vector space $\field_q^n$. The elements $c$ of the code $C$ are called the \emph{codewords}.
\end{definition}

The cardinality of the code $C$ is given as $\abs{C} = q^k$, where $k$ is called the code \emph{dimension}. To generate the set of codewords, we need the \emph{generator matrix} $G^{k \times n}$, where $uG$ for all $u \in \field_q^k$ gives us the codewords. Its \emph{parity-check matrix} $H$ annihilates all codewords, namely $GH^\intercal = 0$. Given two codewords $c, c' \in \field_q^n$, the \emph{Hamming distance} is given by
\begin{align}
    d(c, c') = \abs{\{1 \leq i \leq n | c_i \neq c'_i\}}\,.
\end{align}
The \emph{Hamming weight} $w_H(c)$ of an element $c \in C$ is the number of nonzero elements of $c$. The minimum distance $d$ of $C$ is then the minimum Hamming distance between distinct codewords, which for linear codes is given by
\begin{align}
    d = \min\{w_H(c) | c \in C \backslash \{\boldsymbol{0}\}\}\,,
\end{align}
that is, the minimum Hamming weight of any nonzero codeword in the code $C$. The code $C$ can then correct $d-1$ erasures and $\lfloor \frac{d-1}{2} \rfloor$ errors.

The above formalism applies to both classical and quantum codes. We now focus our attention on just quantum codes. Given two classical codes $C_Z^\perp$ and $C_X$ such that $C_Z^\perp \subseteq C_X \subseteq \field_q^n$, we can use the Calderbank--Shor--Steane (CSS) construction~\cite{calderbank1996good,steane1996} to define the corresponding CSS code as
\begin{align}
    \text{CSS}(C_X, C_Z^\perp) &= \mathcal{Q}_X \oplus \mathcal{Q}_Z\\
    &= C_Z / C_X^\perp \oplus C_X / C_Z^\perp \,.
\end{align}
We denote this code using the familiar $\stabcode{n}{k}{d}_q$ notation, where $n$ is the number of physical qudits (the code length), $k$ is the number of logical qudits (the code dimension), and $d$ is the code distance. We denote the local qudit dimension with the subscript $q$. When $q=2$, we have a qubit code and we usually omit the subscript. The code dimension $k$ can be found as
\begin{align}
    k = n - \rank_{\field_q}{H_X} - \rank_{\field_q}{H_Z}
\end{align}
for $X$- and $Z$-parity check matrices $H_X$ and $H_Z$, respectively, where we explicitly denote that all operations are done over the field $\field_q$. The minimum distance is given as
\begin{align}
    d &= \min_{\substack{c \in C_Z/C_X^\perp\\ \text{ or } c \in C_X/C_Z^\perp}}w_H(c)\\
    &= \min\{d_X, d_Z\}\,.
\end{align}
A CSS code is considered a \emph{low-density parity-check} (LDPC) code if the row and column weights of its parity check matrices are bounded by constants $\Delta_{\text{col}}$ and $\Delta_{\text{row}}$. In this work, we mostly consider LDPC codes of weights 4, 5 and 6 as these are amenable to near-term hardware.

Finally, to reference the mathematical background given in the previous section, we can describe an $n$-qudit CSS code $\text{CSS}(C_X,C_Z^\perp)$ by a length-3 chain complex $C$ over the $\field_q$ vector space with boundary map $\partial$, which we write as
\begin{align}
    C = 0\xrightarrow{} C_2 \xrightarrow{\partial_2} C_1 \xrightarrow{\partial_1} C_0 \xrightarrow{} 0
\end{align}
where $C_2 = \field_q^{m_Z}$, $C_1 = \field_q^n$, and $C_0 = \field_q^{m_X}$, with boundary maps given by $\partial_2 = H_Z^\intercal$ and $\partial_1 = H_X$ which satisfy $\partial_1 \circ \partial_2 = 0$. This chain complex representation elegantly captures the structure of a CSS code. The physical qudits reside in the middle space $C_{1}$. The $X$-checks are defined by $\partial_{1} = H_{X}$, and the $Z$-checks are defined by $\partial_{2} = H_{Z}^{\intercal}$. Crucially, the homological requirement $\partial_{1}\circ\partial_{2}=0$ is precisely the CSS compatibility condition $H_{X}H_{Z}^{T}=0$. This will be useful when we generalize some of the codes that are derived from chain complexes, such as high-dimensional expander codes.

\subsection{Quditization}
Finally, before presenting our first qudit LDPC code, we specify what we mean by generalizing the error correction codes in this paper from qubits to qudits, a process we call \emph{quditization}. Intuitively (or physically), this is simple: instead of considering a two-level system, a qubit, we now consider a multilevel system, a qudit. Mathematically, this introduces some subtleties, which we deal with throughout the rest of the paper.

The starting point for quditization is the promotion of the field that we work over from $\field_2$ to $\field_q$, where $q$ is a prime power. This expands the space over which the mathematical objects that we work with are defined (i.e., vector spaces, polynomials, coefficients, chain complexes, etc), and it is this expanded space that we take advantage of to find new error correcting codes. One implication of working with $\field_q$ instead of $\field_2$ is a modified CSS condition, where, recall, the CSS condition is that, given parity-check matrices $H_X$ and $H_Z$, the following must hold:
\begin{align}
    H_X H_Z^\intercal = 0\,.\label{eqn:css-condition}
\end{align}
Since we mostly consider CSS codes in this work, this condition must still be true for the qudit codes we present, but the way in which this is achieved for some codes is slightly different. The easiest way to ensure that~\cref{eqn:css-condition} holds is to introduce coefficients at appropriate places, such as at the point of defining the parity-check matrices, as we will see in several codes later on.

Once the subtleties that arise when promoting codes from $\field_2$ to $\field_q$ are handled appropriately, the primary thing left to do is to find the code parameters, that is, find $\stabcode{n}{k}{d}_q$. After this is done, in this work, we consider the code "quditized," though of course for a fully-fledged qudit LDPC code, we must also provide a decoding algorithm, describe the logical operators and gates, etc. We leave these questions to future work.

\section{Qudit bivariate bicycle codes}\label{sec:qudit-bb-codes}
The first LDPC code that we quditize is the \emph{bivariate bicycle code}, which was introduced in the work by Bravyi \etal~\cite{bravyi2024}. We define a qudit bivariate bicycle code of local qudit dimension $q = p^s$ for prime $p$ and $s \geq 1$ by the polynomials
\begin{equation}
    \begin{aligned}
        A &= \alpha_1 A_1 + \alpha_2 A_2 + \alpha_3 A_3\,,\\
        B &= \beta_1 B_1 + \beta_2 B_2 + \beta_3 B_3\,,
    \end{aligned}
    \label{eqn:qudit-bb-AB-defs}
\end{equation}
where $\alpha_i, \beta_i \in \field_q$ and $A_i, B_i$ are powers of the commuting matrices $x = S_\ell \otimes I_m$ and $y = I_\ell \otimes S_m$. Here, $\ell, m \in \naturals$, $I_m$ is the $m \times m$ identity matrix and $S_\ell$ is the \emph{cyclic shift matrix} of size $\ell \times \ell$, which is defined such that the $i\numth$ row has a single nonzero element equal to 1 in column $(i+1) \mod{\ell}$. For example, the first two cyclic shift matrices are
\begin{align}
    S_2 =
    \begin{pmatrix}
        0 & 1\\
        1 & 0
    \end{pmatrix}
    \text{ and }
    S_3 =
    \begin{pmatrix}
        0 & 1 & 0\\
        0 & 0 & 1\\
        1 & 0 & 0
    \end{pmatrix}\,.
\end{align}
$A$ and $B$ are defined such that each $A_i$ is unique and each $B_i$ is unique; that is, $A_i \neq A_j$ and $B_i \neq B_j$ for $i \neq j$. The $t$-variate generalization introduced in~\cite{voss2024} is defined similarly, except each $A_i$ and $B_i$ is a power of one of $\{x_1, \ldots, x_t\}$; for example, for $t = 3$ (the trivariate case), we have $x \equiv x_1 = S_\ell \otimes I_m$, $x_2 \equiv y = I_\ell \otimes S_m$, and $x_3 \equiv z = S_\ell \otimes S_m$ and each $A_i$ and $B_i$ is a power of $x$, $y$, or $z$. Like the toric code, the physical qudits of a bivariate or multivariate bicycle code can be arranged on a two-dimensional grid with periodic boundary conditions, where the check operators can be obtained from a single pair of $X$- and $Z$-checks by applying horizontal and vertical shifts on the grid. Unlike the toric code, though, these check operators are not geometrically local.

We define $A$ and $B$ as we do in~\cref{eqn:qudit-bb-AB-defs} over the polynomial ring $\field_q[x,y]$. This is effectively generalizing the cyclic shift matrices used to define the qubit bivariate bicycle codes to $\field_q$. Alternatively, we could have quditized the qubit bivariate bicycle codes using the $\ell \times \ell$ \emph{circulant matrix} over $\field_q$, which is defined as
\begin{align}
    C_\ell =
    \begin{pmatrix}
        c_0 & c_1 & c_2 & \dots & c_{\ell-1} \\
        c_{\ell-1} & c_0 & c_1 & \dots & c_{\ell-2} \\
        c_{\ell-2} & c_{\ell-1} & c_0 & \dots & c_{\ell-3} \\
        \vdots & \vdots & \vdots & \ddots & \vdots \\
        c_1 & c_2 & c_3 & \dots & c_0
    \end{pmatrix}\,,
    \label{eqn:fq-circulant-matrix}
\end{align}
where $c_0, \dots, c_{\ell-1} \in \mathbb{F}_q$. Defining the polynomial ring $\field_q[S_\ell]$, where $S_\ell$ is the $\ell \times \ell$ cyclic shift matrix, $C_\ell$ is then just an element of this ring:
\begin{align}
    C_\ell = c_0 S_\ell^0 + c_1 S_\ell^1 + \dots + c_{\ell-1} S_\ell^{\ell-1}\,.
\end{align}
Circulant matrices have the nice property that any two circulant matrices commute, that is, $[C_\ell, C'_\ell] = 0$. These circulant matrices then define the variables $x$ and $y$ used to construct the polynomials $A$ and $B$. We note that the construction via the cyclic shift matrices and the construction via circulant matrices are equivalent, but for the rest of this analysis, we adopt the former.

We note a few properties of the objects that we have introduced. First, $x$ and $y$ must commute:
\begin{align}
    xy &= (S_\ell \otimes I_m)(I_\ell \otimes S_m) = S_\ell \otimes S_m\\
    yx &= (I_\ell \otimes S_m)(S_\ell \otimes I_m) = S_\ell \otimes S_m\\
    \implies xy &= yx\,.
\end{align}
We also have $x^\ell = y^m = I_{\ell m}$:
\begin{align}
    x^\ell &= (S_\ell \otimes I_m)^\ell\\
    &= S_\ell^\ell \otimes I_m^\ell\\
    &= I_\ell \otimes I_m\\
    &\equiv I_{\ell m}\,,
\end{align}
and likewise for $y^m$. Using the commutation relation of $x$ and $y$, we can easily check that $A$ and $B$, which comprise sums of powers of $x$ and $y$, also commute. Note that $A, B \in \field_q^{\ell m}$, so $A$ and $B$ define a qudit bivariate bicycle code with length $n = 2\ell m$. With all of this in hand, we define the parity-check matrices:
\begin{align}
    H_X = [\gamma_1 A | \gamma_2 B] \text{ and } H_Z = [\delta_1 B^\intercal | \delta_2 A^\intercal]\,,
\end{align}
where we have introduced a set of nonzero \emph{block coefficients} $\gamma_1, \gamma_2, \delta_1, \delta_2 \in \field_q$; for qubits, these coefficients are simply all 1. The CSS condition,
\begin{align}
    H_XH_Z^\intercal = 0\,,
\end{align}
must hold, which gives us a constraint on $\{\gamma_1, \gamma_2, \delta_1, \delta_2\}$:
\begin{align}
    H_XH_Z^\intercal &= [\gamma_1 A | \gamma_2 B]
        \lb
        \begin{array}{c}
            \delta_1 B\\
            \hline
            \delta_2 A
        \end{array}
        \rb\\
    &= \gamma_1\delta_1 AB + \gamma_2\delta_2 BA\\
    &= \gamma_1\delta_1 AB + \gamma_2\delta_2 AB\\
    &= (\gamma_1\delta_1 + \gamma_2\delta_2)AB\,,
\end{align}
where we used the fact that $A$ and $B$ commute. For this to be equal to 0 (if $AB \ne 0$), we must have $\gamma_1\delta_1 + \gamma_2\delta_2 = 0$, where arithmetic is done over the field $\field_q$. For any $q$, this is true if one of the coefficients is equal to $-1$ and the others are equal to 1 (or three are equal to $-1$ and one is equal to 1), such as $\gamma_1 = \gamma_2 = \delta_1 = 1$ and $\delta_2 = -1$, giving the following parity-check matrices for qutrit bivariate bicycle codes:
\begin{align}
    H_X = [A|B] \text{ and } H_Z = [B^\intercal|-A^\intercal]\,. \label{eqn:simplified_HX_HZ}
\end{align}
In general, any set of $\{\gamma_1, \gamma_2, \delta_1, \delta_2\}$ that satisfy $\gamma_1\delta_1 + \gamma_2\delta_2 = 0$ is a valid set of block coefficients. However, when searching for codes for a fixed $\{A_i, B_i\}_{i=1}^3$ and varying $\{\alpha_i, \beta_i\}_{i=1}^3$ and $\{\gamma_i, \delta_i\}_{i=1}^2$, we may without loss of generality restrict to the case above with $\gamma_1 = \gamma_2 = \delta_1 = 1$ and $\delta_2 = -1$. To see this, observe that scaling the parity check matrices does not change the code spaces, so we may replace $H_X = [\gamma_1 A | \gamma_2 B]$ and $H_Z = [\delta_1 B^\intercal | \delta_2 A^\intercal]$ with $H_X = [\gamma_1 \gamma_2^{-1} A | B]$ and $H_Z = [ B^\intercal | \delta_2 \delta_1^{-1}  A^\intercal]$. The condition $\gamma_1\delta_1 + \gamma_2\delta_2 = 0$ then implies that $\delta_2 \delta_1^{-1}  A^\intercal = -\gamma_1 \gamma_2^{-1} A^\intercal$. Now, replacing each $\alpha_i$ by $\gamma_1^{-1} \gamma_2 \alpha_i$ instead, we see that $A$ is replaced by $\gamma_1^{-1} \gamma_2 A$, which in turn means we have $H_X = [ A | B]$ and $H_Z = [ B^\intercal | -A^\intercal]$, as desired. In fact, swapping $A$ and $B$ only permutes the elements of the codewords and yields an equivalent code, so we may without loss of generality let $A_1$ correspond to the lowest-degree monomial (e.g., in lexicographic order) among the $A_i, B_j$ with nonzero coefficient. Since scaling $H_X$ and $H_Z$ does not change the code space, it is also possible to fix $\alpha_1 = 1$, leaving just the five variables $\alpha_2, \alpha_3, \beta_1, \beta_2, \beta_3$ and the choice of monomials $\{A_i, B_i\}_{i=1}^3$ with $A_1$ corresponding to the lowest-degree monomial.

We now formally present the code parameters of our qudit bivariate bicycle codes in the following theorem, which is the main result of this section. Here, we denote by $\ker(\cdot)$ the kernel and by $\rs{\cdot}$ the row space:

\begin{proposition}[Qudit multivariate bicycle code parameters]\label{prop:qudit-bb-code-parameters}
    The matrices $A, B \in \field_q^{\ell m \times \ell m}$, defined above, define a qudit bivariate bicycle code with parameters $\stabcode{n}{k}{d}_q$, where
    \begin{align}
        n &= 2\ell m\,,\\
        k &= 2 \cdot \dim\lp \ker(A) \cap \ker(B) \rp\,,\\
        d &= \min\{d_X, d_Z\}\,,
    \end{align}
    where $d_X$ and $d_Z$ are the code distances for $X$-type and $Z$-type errors defined as
    \begin{align}
        d_X &= \min\{\abs{v} : v \in \ker(H_Z) \backslash \rs{H_X}\}\,,\\
        d_Z &= \min\{\abs{v} : v \in \ker(H_X) \backslash \rs{H_Z}\}\,,
    \end{align}
    where $\abs{v}$ is the Hamming weight of a vector $v \in \field_q^n$, that is, the number of nonzero elements in $v$. Furthermore, $d_X = d_Z$, so the distance is simply $d = d_X = d_Z$.
\end{proposition}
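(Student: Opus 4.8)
\medskip
\noindent\emph{Proof strategy.} The plan is to read off all four assertions from the explicit block form $H_X=[A\mid B]$ and $H_Z=[B^\intercal\mid -A^\intercal]$. Two of them are essentially free: $A$ and $B$ are $\ell m\times\ell m$ matrices, so $H_X$ and $H_Z$ have $2\ell m$ columns, giving $n=2\ell m$; and since $H_XH_Z^\intercal=0$ was already checked above (with the block coefficients chosen so that $\gamma_1\delta_1+\gamma_2\delta_2=0$), the equality $d=\min\{d_X,d_Z\}$ for the stated $d_X,d_Z$ is just the definition of the CSS distance recalled in~\cref{subsec:coding-theory-primer}. So the content is the formula for $k$ and the equality $d_X=d_Z$, and both rest on a single structural fact. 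View $A$ and $B$ as elements of the commutative group algebra $\field_q[G]$, $G=\mathbb{Z}_\ell\times\mathbb{Z}_m$, acting on $\field_q^{\ell m}$ by the regular representation (equivalently, as $\field_q$-polynomials in the commuting shift matrices $x,y$). Let $\tau$ be the permutation matrix that implements group inversion, $e_g\mapsto e_{-g}$. A short computation on the standard basis shows $\tau^2=I$ and, for \emph{every} $M\in\field_q[G]$, $\tau M\tau^{-1}=M^\intercal$; in particular $\tau A\tau^{-1}=A^\intercal$ and $\tau B\tau^{-1}=B^\intercal$. Because $\tau$ is invertible it maps $\ker A\cap\ker B$ isomorphically onto $\ker A^\intercal\cap\ker B^\intercal$, and because $\tau$ is a genuine permutation it preserves Hamming weight; these two consequences are all I will use.

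For $k$, I would start from $k=n-\rank_{\field_q}H_X-\rank_{\field_q}H_Z$. Rank is invariant under transposition, so $\rank H_X=\rank[A\mid B]=\rank\begin{pmatrix}A^\intercal\\B^\intercal\end{pmatrix}=\ell m-\dim(\ker A^\intercal\cap\ker B^\intercal)=\ell m-\dim(\ker A\cap\ker B)$, the last step by the observation above; and, the sign on the second block being irrelevant to rank, $\rank H_Z=\rank[B^\intercal\mid A^\intercal]=\rank\begin{pmatrix}B\\A\end{pmatrix}=\ell m-\dim(\ker A\cap\ker B)$ likewise. Substituting into $k=2\ell m-\rank H_X-\rank H_Z$ gives $k=2\dim(\ker A\cap\ker B)$. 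In particular $\rank H_X=\rank H_Z$, so $\dim\ker H_X=\dim\ker H_Z$.

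For $d_X=d_Z$, I would exhibit a weight-preserving linear bijection of $\field_q^{2\ell m}$ that interchanges the $X$- and $Z$-sectors, namely $\Phi(v_1,v_2)=(\tau v_2,-\tau v_1)$, which satisfies $\Phi^2=-I$ and is hence invertible. Using $B^\intercal\tau=\tau B$ and $A^\intercal\tau=\tau A$, one computes $H_Z\,\Phi(v)=\tau(Av_1+Bv_2)$, so $\Phi$ sends $\ker H_X$ into $\ker H_Z$, hence onto it by equality of dimensions; and writing the row spaces as $\rs{H_X}=\{(A^\intercal u,B^\intercal u):u\in\field_q^{\ell m}\}$ and $\rs{H_Z}=\{(Bu,-Au):u\in\field_q^{\ell m}\}$, the same two relations give $\Phi(\rs{H_Z})=\rs{H_X}$. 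Recalling $\rs{H_X}\subseteq\ker H_Z$ and $\rs{H_Z}\subseteq\ker H_X$ from the CSS condition, $\Phi$ therefore restricts to a weight-preserving bijection $\ker H_X\backslash\rs{H_Z}\to\ker H_Z\backslash\rs{H_X}$, which forces $d_Z=d_X$. The same argument, with $G$ replaced by the relevant product of cyclic groups, covers the $t$-variate codes of~\cite{voss2024}; and I note that no coprimality among $q,\ell,m$ is needed anywhere.

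The step I expect to be the main obstacle is pinning down the identity $\tau M\tau^{-1}=M^\intercal$ cleanly, together with the small but load-bearing observation that $\tau$ is an honest permutation (and hence preserves Hamming weight). This is precisely where the special circulant/group-algebra structure of $A$ and $B$ is used: it is what makes $d_X=d_Z$ true and what lets the rank computation treat $H_X$ and $H_Z$ on the same footing. The remaining ingredients---rank-nullity, transpose-invariance of rank, and the fact that an invertible linear map preserves dimensions of intersections---are routine.
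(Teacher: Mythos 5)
Your proof is correct, and for $n$ and $k$ it follows essentially the same route as the paper: your $\tau$ is exactly the paper's permutation $\Pi=\Pi_\ell\otimes\Pi_m$, and your identity $\tau M\tau^{-1}=M^\intercal$ is its \cref{eqn:AB-transpose-permutation}; the rank computation via rank--nullity and transpose-invariance is the same (the paper computes $\rank H_Z$ directly and then transports it to $H_X$ via $\Pi$, while you compute both ranks directly, a cosmetic difference). Where you genuinely diverge is the proof of $d_X=d_Z$. The paper argues operator-by-operator: it takes a minimum-weight $f\in\ker H_Z\setminus\rs{H_X}$, invokes the existence of a symplectic partner $g$ with $f^\intercal g\neq 0$, builds images $h,e$ under $\Pi$, and certifies $h\notin\rs{H_Z}$ by computing $h^\intercal e\neq 0$. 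You instead exhibit a single weight-preserving involution-up-to-sign $\Phi(v_1,v_2)=(\tau v_2,-\tau v_1)$ and verify directly that it maps $\ker H_X$ onto $\ker H_Z$ \emph{and} $\rs{H_Z}$ onto $\rs{H_X}$, so it bijects the two sets over which the minima are taken. Your version is cleaner and more self-contained: it avoids the pairing step, which the paper asserts without proof ("there exists $Z(g)$ such that $f^\intercal g\neq 0$"), and it makes the duality between the $X$- and $Z$-sectors an explicit code isomorphism rather than a statement about individual logicals. One small presentational caveat: you work with the normalized coefficients $(\gamma_1,\gamma_2,\delta_1,\delta_2)=(1,1,1,-1)$, whereas the paper's proof carries general nonzero block coefficients through every step; your reduction is justified by the scaling argument the paper gives just before \cref{prop:qudit-bb-code-parameters} (rescaling blocks is a weight-preserving change of basis), but it is worth saying so explicitly since the proposition is stated for general $H_X=[\gamma_1 A\,|\,\gamma_2 B]$, $H_Z=[\delta_1 B^\intercal\,|\,\delta_2 A^\intercal]$.
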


\begin{proof}
    We can see that $n = 2\ell m$ by looking at the number of columns in $H_X$ and $H_Z$ (which both have the same dimension). Since $H_X = \lb \gamma_1 A | \gamma_2 B \rb$ and $A$ and $B$ are $\ell m\times \ell m$ matrices, $H_X$ has dimension $(\ell m) \times (2\ell m)$, so $n = 2\ell m$. Thus, $H_X$ and $H_Z$ are of dimension $\frac{n}{2} \times n$.

    To find $k$, we follow the proof for qubits in~\cite[Lemma 1]{bravyi2024}. We start by using a known result for CSS codes, which we presented in~\cref{subsec:coding-theory-primer}:
    \begin{align}
        k = n - \rank{H_X} - \rank{H_Z}\,,
    \end{align}
    We start by finding $\rank{H_Z}$:
    \begin{align}
        \rank{H_Z} &= \frac{n}{2} - \dim{\ker{H_Z^\intercal}}\\
        &= \frac{n}{2} - \dim{\ker{\lb
        \begin{array}{c}
            \delta_1 B\\
            \hline
            \delta_2 A
        \end{array}
        \rb}}\\
        &= \frac{n}{2} - \dim(\ker(A) \cap \ker(B))\,,
    \end{align}    
    where the first line follows from a combination of the relation $\rank{H} = \rank{H^\intercal}$ and the rank-nullity theorem. It turns out that $\rank{H_X} = \rank{H_Z}$, which we can see by the following transformation. Take $\Pi_\ell$ to be a self-inverse permutation matrix such that the $i\numth$ column of $\Pi_\ell$ has one nonzero entry equal to 1 (the multiplicative identity of $\field_q$) at row $j = -i \bmod{\ell}$. Define $\Pi_m$ in the same way and take $\Pi = \Pi_\ell \otimes \Pi_m$. Using these matrices, we have $\Pi_\ell S_\ell \Pi_\ell = S_\ell^\intercal$ and $\Pi_m S_m \Pi_m = S_m^\intercal$, which yields
    \begin{align}
        A^\intercal = \Pi A \Pi,\quad B^\intercal = \Pi B \Pi\,.\label{eqn:AB-transpose-permutation}
    \end{align}
    Thus, we can write
    \begin{align}
        H_Z &= [\delta_1 B^\intercal | \delta_2 A^\intercal] \\
        &= \Pi [\delta_2(\gamma_1\gamma_1^{-1})A | \delta_1(\gamma_2\gamma_2^{-1})B]
        \begin{bmatrix}
            0 & \Pi\\
            \Pi & 0
        \end{bmatrix} \\
        &= \Pi H_X
        \begin{bmatrix}
            0 & \delta_2\gamma_1^{-1}\Pi\\
            \delta_1\gamma_2^{-1}\Pi & 0
        \end{bmatrix}\,.
    \end{align}
    The matrix on the left hand side of $H_X$ has an inverse since the permutation matrix $\Pi$ is invertible, that is, $\Pi\Pi^{-1} = I_{\ell m}$, and both $\delta_2\gamma_1^{-1},\delta_1\gamma_2^{-1} \in \mathbb{F}_q$ have a multiplicative inverse in $\mathbb{F}_q$\footnote{This is because $\gamma_1,\delta_1,\gamma_2,\delta_2$ are nonzero by assumption, hence $\gamma_1,\gamma_2$ have multiplicative inverses $\gamma_1^{-1},\gamma_2^{-1}$ which are nonzero. Thus $\delta_2\gamma_1^{-1},\delta_1\gamma_2^{-1}$ are also nonzero and have nonzero multiplicative inverses.}.
    Denoting the multiplicative inverses of $\delta_2\gamma_1^{-1}$ and $\delta_1\gamma_2^{-1}$ as $\xi_2$ and $\xi_1$, respectively, we have
    \begin{align}
        \begin{bmatrix}
            0 & \delta_2\gamma_1^{-1}\Pi\\
            \delta_1\gamma_2^{-1}\Pi & 0
        \end{bmatrix}
        \begin{bmatrix}
            0 & \xi_1\Pi^{-1}\\
            \xi_2\Pi^{-1} & 0
        \end{bmatrix}
        &=
        \begin{bmatrix}
            \xi_2\delta_2\gamma_1^{-1} I_{\ell m} & 0\\
            0 & \xi_1\delta_1\gamma_2^{-1} I_{\ell m}
        \end{bmatrix}\\
        &= I_{2\ell m}\,.
    \end{align}
    Since we can obtain $H_X$ by multiplying $H_Z$ on the left and right by an invertible matrix, we conclude that $\rank{H_X} = \rank{H_Z}$. Using this, we have
    \begin{align}
        k &= n - \rank{H_X} - \rank{H_Z}\\
        &= n - 2\lp \frac{n}{2} - \dim\lp \ker(A) \cap \ker(B) \rp \rp\\
        &= 2 \cdot \dim\lp \ker(A) \cap \ker(B) \rp\,,
    \end{align}
    as stated in the theorem.

    We now prove the statements for the distance $d$. Recall, the Hamming weight $w_H$ of $v$ is the number of nonzero elements in $v$: $w_H(v) = \abs{\{i : v_i \neq 0\ \forall i \in [n]\}}$. Thus, to find $d$, we find the set of vectors $v$ that are in the space $\ker{H_Z} \backslash \rs{H_X}$ and $\ker{H_X} \backslash \rs{H_Z}$, take the vector of smallest Hamming weight for $d_X$ and $d_Z$, and take the minimum value of $d_X$ and $d_Z$.
    
    Finally, we prove that finding both $d_X$ and $d_Z$ and taking the minimum of the two is unnecessary as $d_X = d_Z$. Let $f \in \ker{H_Z} \backslash \rs{H_X}$ such that $|f| = d_X$ so that $X(f) = \prod_{j=1}^n X_j^{f_j}$ is a logical operator with minimum weight $d_X$. Then, there exists $Z(g) = \prod_{j=1}^n Z_j^{g_j}$ such that $f^\intercal g = a\neq 0$ and $g \in \ker{H_X} \backslash \rs{H_Z}$. 
    Here, $f$ and $g$ are length $n$ vectors in $\field_q^n$ which we take to be $f = (f_1, f_2)$ and $g = (g_1, g_2)$, where $f_i, g_i$ are length $n/2$ vectors for $i = 1,2$. The notation $(f_1, f_2)$ and $(g_1, g_2)$ follows from the two-block structure of bivariate bicycle codes. To have $f \in \ker{H_Z} \backslash \rs{H_X}$ and $g \in \ker{H_X} \backslash \rs{H_Z}$, we must have $H_Z f = 0$ and $H_X g = 0$, which is true if and only if the following holds:
    \begin{equation}\label{eqn:equiv-conditions-logicals}
        \begin{gathered}
            H_Z f = 0 \iff \delta_1 B^\intercal f_1 = -\delta_2 A^\intercal f_2 \\
            H_X g = 0 \iff \gamma_1 A g_1 = -\gamma_2 B g_2\,.
        \end{gathered}
    \end{equation}
    Now, for the $\Pi$ matrices defined above, we define
    \begin{align}
        h &= 
        \begin{bmatrix}
            \gamma_1^{-1}\gamma_2 \Pi f_2\\
            \delta_1\delta_2^{-1} \Pi f_1
        \end{bmatrix}\\
        \intertext{and}
        e &= 
        \begin{bmatrix}
            \Pi g_2\\
            \delta_1\delta_2^{-1}\gamma_1\gamma_2^{-1} \Pi g_1
        \end{bmatrix}\,.
    \end{align}
    We now show that $H_X h = 0$:
    \begin{align}
        H_X h &= [\gamma_1 A| \gamma_2 B] 
        \begin{bmatrix}
            \gamma_1^{-1}\gamma_2 \Pi f_2\\
            \delta_1\delta_2^{-1} \Pi f_1
        \end{bmatrix}\\
        &= \Pi (\gamma_1^{-1}\gamma_2\gamma_1 A^\intercal f_2 + \delta_1\delta_2^{-1}\gamma_2 B^\intercal f_1) \\
        &= \Pi (\gamma_1^{-1}\gamma_2\gamma_1 A^\intercal f_2 - \delta_1\delta_2^{-1}\gamma_2\delta_2\delta_1^{-1} A^\intercal f_2) \\
        &= (\gamma_1^{-1}\gamma_2\gamma_1 - \delta_1\delta_2^{-1}\gamma_2\delta_2\delta_1^{-1}) \Pi A^\intercal f_2 \\
        &= 0\,.
    \end{align}
    Likewise, $H_Z e = 0$:
    \begin{align}
        H_Z e &= [\delta_1 B^\intercal | \delta_2 A^\intercal]
        \begin{bmatrix}
            \Pi g_2\\
            \delta_1\delta_2^{-1}\gamma_1\gamma_2^{-1} \Pi g_1
        \end{bmatrix}\\
        &= \delta_1 B^\intercal\Pi g_2 + \delta_1\delta_2^{-1}\gamma_1\gamma_2^{-1} \delta_2 A^\intercal\Pi g_1\\
        &= \Pi\lp \delta_1 Bg_2 + \delta_1\delta_2^{-1}\gamma_1\gamma_2^{-1} \delta_2 Ag_1 \rp\,.
    \end{align}
    Using $\gamma_1 Ag_1 = -\gamma_2 Bg_2 \implies Ag_1 = -\gamma_1^{-1}\gamma_2 Bg_2$ from~\cref{eqn:equiv-conditions-logicals}, we have
    \begin{align}
        H_Z e &= \Pi\lp \delta_1 Bg_2 - \delta_1\delta_2^{-1}\gamma_1\gamma_2^{-1} \delta_2\gamma_1^{-1}\gamma_2 Bg_2 \rp\\
        &= \lp \delta_1 - \delta_1\delta_2^{-1}\gamma_1\gamma_2^{-1} \delta_2\gamma_1^{-1}\gamma_2 \rp\Pi Bg_2 \\
        &= 0\,,
    \end{align}
    Thus, we conclude that $h \in \ker{H_X}$ and $e \in \ker{H_Z}$. Furthermore, since $f^\intercal g = a$ for some $a \in \field_q$ and $a \neq 0$, we have
    \begin{align}
        h^\intercal e &= (\gamma_1\gamma_2^{-1})f_2^\intercal \Pi\Pi g_2 + (\delta_1\delta_2^{-1}\delta_1\delta_2^{-1}\gamma_1\gamma_2^{-1})f_1^\intercal \Pi\Pi g_1\\
        &= \gamma_1\gamma_2^{-1}(f_2^\intercal g_2 + f_1^\intercal g_1)\\
        &= \gamma_1\gamma_2^{-1}(f^\intercal g)\\
        &= a\gamma_1\gamma_2^{-1}\,,
    \end{align}
    which implies that $h \in \field_q^n \backslash \rs{H_Z}$ and $e \in \field_q^n \backslash \rs{H_X}$ because $a\gamma_1\gamma_2^{-1}$ is nonzero. Then, $X(e) = \prod_{j=1}^n X_j^{e_j}$ and $Z(h)=\prod_{j=1}^n Z_j^{h_j}$ correspond to logical operators with weights $|e|$ and $|h|$, respectively, since $h \in \ker{H_X} \backslash \rs{H_Z}$ and $e \in \ker{H_Z} \backslash \rs{H_X}$. Hence, by using the relation $d_Z \leq \abs{h}$, we conclude that
    \begin{align}
        d_Z &\leq \abs{h}\\
        &= \abs{\gamma_1^{-1}\gamma_2 \Pi f_2} + \abs{\delta_1\delta_2^{-1} \Pi f_1}\\
        &= \abs{f_2} + \abs{f_1}\\
        &= \abs{f}\\
        &= d_X\,,
    \end{align}
    Starting with $Z(g)=\prod_{j=1}^n Z_j^{g_j}$ such that $|g| = d_Z$, we can follow a similar argument to obtain a logical operator $X(e) = \prod_{j=1}^n X_j^{e_j}$ such that $|e| = |g| = d_Z$, which implies that $d_X \leq d_Z$. Thus we have $d_Z = d_X$, which we simply call $d$, concluding the proof.
\end{proof}

We conclude our discussion of qudit bivariate bicycle codes with a result that is helpful in looking for new codes in higher local qudit dimensions:

\begin{lemma}[Qudit bivariate bicycle codes via extension of scalars]\label{lemma:qudit-code-scalar-extension}
    Fix a prime power $q$ and an integer $t \ge 2$. Given an $\stabcode{n}{k}{d}_q$ qudit bivariate bicycle code over $\field_q$ with $H_X = [\gamma_1 A | \gamma_2 B]$ and $H_Z = [\delta_1 B^\intercal | \delta_2 A^\intercal]$ for some integers $\ell, m \geq 1$, matrices $A, B \in \field_q^{\ell m \times \ell m}$ and $\gamma_1, \gamma_2, \delta_1, \delta_2 \in \field_q^\times$, the qudit bivariate bicycle code over $\field_{q^t}$ defined by extension of scalars along $\field_q \to \field_{q^t}$, i.e., with the same $H_X$ and the same $H_Z$, has parameters $\stabcode{n}{k}{d}_{q^t}$.
\end{lemma}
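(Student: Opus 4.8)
The plan is to check the three parameters $n$, $k$, $d$ in turn; only the distance requires real work. Write $K = \field_q$ and $L = \field_{q^t}$, and recall that the extended code has literally the same matrices $H_X, H_Z$, now read over $L$. First I would note that this is again a genuine qudit bivariate bicycle code over $L$: the shift matrices $x, y$ have entries in $\{0,1\} \subseteq K \subseteq L$, the coefficients $\alpha_i, \beta_i, \gamma_i, \delta_i$ all lie in $K \subseteq L$, the relation $\gamma_1\delta_1 + \gamma_2\delta_2 = 0$ still holds over $L$, and a field embedding sends nonzero entries to nonzero entries, so the row and column weights of $H_X, H_Z$ are unchanged and the LDPC property persists; in particular \cref{prop:qudit-bb-code-parameters} is available over $L$ too. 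The length is immediate, since $n = 2\ell m$ counts the columns of $H_X$, independently of the field. For the dimension I would invoke the standard fact that the rank of a matrix with entries in $K$ is the same computed over $K$ or over any extension field (Gaussian elimination uses only arithmetic in $K$, and the reduced row echelon form is unchanged; equivalently, rank is the size of the largest nonvanishing minor, a condition internal to $K$), and then apply it to the CSS formula $k = n - \rank{H_X} - \rank{H_Z}$. Equivalently, the dimension of the solution space of a homogeneous $K$-linear system is field-extension-invariant, which together with $k = 2\dim(\ker(A)\cap\ker(B))$ gives the same conclusion.

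The distance needs an inequality in each direction; write $d_K$, $d_L$ for the distances of the $K$- and $L$-code. For $d_L \le d_K$: a minimum-weight $X$-logical $v \in K^n$, i.e. $v \in \ker(H_Z)$ and $v \notin \rs{H_X}$ over $K$, still satisfies $H_Z v = 0$ over $L$, and it still lies outside $\rs{H_X}$ over $L$, because a vector of $K^n$ belongs to the row space of $H_X$ over a given field iff adjoining it as an extra row leaves $\rank{H_X}$ unchanged, and that rank equality is field-independent by the previous paragraph; hence $v$ remains a logical of the same weight over $L$.

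The step I expect to be the main obstacle is the reverse inequality $d_K \le d_L$, i.e. pushing a minimum-weight logical over the big field down to the subfield without increasing its weight. Here I would fix a $K$-basis $e_1 = 1, e_2, \dots, e_t$ of $L$ and take a minimum-weight $X$-logical $v \in L^n$, decomposing it coordinatewise as $v = \sum_{i=1}^{t} v^{(i)} e_i$ with $v^{(i)} \in K^n$. A coordinate of $v$ vanishes iff all the corresponding coordinates of the $v^{(i)}$ vanish, so $\supp(v^{(i)}) \subseteq \supp(v)$ and $\abs{v^{(i)}} \le \abs{v}$. Since $H_Z$ has entries in $K$, the identity $0 = H_Z v = \sum_i (H_Z v^{(i)}) e_i$ with $H_Z v^{(i)} \in K^{n/2}$ and the $K$-linear independence of the $e_i$ force $H_Z v^{(i)} = 0$ for every $i$. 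Finally, if every $v^{(i)}$ lay in $\rs{H_X}$ over $K$, then $v = \sum_i v^{(i)} e_i$ would lie in $\rs{H_X}$ over $L$ — an $L$-subspace containing $\rs{H_X}$ over $K$ — contradicting the choice of $v$; hence some $v^{(i_0)} \notin \rs{H_X}$ over $K$, and that $v^{(i_0)}$ is an $X$-logical over $K$ with $\abs{v^{(i_0)}} \le \abs{v} = d_L$, giving $d_K \le d_L$. Combining the two bounds shows the $X$-distance is preserved; the identical argument preserves the $Z$-distance, and since $d_X = d_Z$ by \cref{prop:qudit-bb-code-parameters} the distance $d$ is preserved, which finishes the proof.
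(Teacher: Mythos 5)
Your proof is correct, and while it follows the same overall skeleton as the paper's (length and dimension are immediate from rank invariance under field extension; the distance needs two inequalities), the key step — descending a minimum-weight logical from $\field_{q^t}$ to $\field_q$ — is done by a genuinely different mechanism. The paper expands the minimal $\field_{q^t}$-logical $v'$ in a basis of $\ker(H_Z)$ chosen to extend a basis of $\rs{H_X}$, rescales so that the offending coefficient has nonzero field trace, and applies the trace to produce an $\field_q$-logical of no larger weight; this relies on the separability of $\field_{q^t}/\field_q$ (so that the trace is not identically zero) and on the coordinatewise-support argument to control the weight. You instead decompose $v = \sum_i v^{(i)} e_i$ along a $\field_q$-basis of $\field_{q^t}$ and argue by contradiction that some component $v^{(i_0)}$ must lie outside $\rs{H_X}$ over $\field_q$, since otherwise $v$ itself would lie in the $\field_{q^t}$-row space. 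Your route avoids the trace, the rescaling, and any appeal to separability, at the mild cost of being nonconstructive about which component is the logical; both arguments hinge on the same two facts (component supports are contained in $\supp(v)$, and $H_Z$ kills each component because its entries lie in $\field_q$). Your justification of the easier inequality via field-independence of the rank of the augmented matrix $[H_X ; v]$ is also a clean substitute for the paper's compatible-bases argument.
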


\begin{proof}
    First, it is clear that the same $H_X$ and $H_Z$ gives rise to a valid bivariate bicycle code because the condition $\gamma_1 \delta_1 + \gamma_2 \delta_2 = 0$ continues to hold. Let the parameters of the new code over $\field_{q^t}$ be $\stabcode{n'}{k'}{d'}_{q^t}$. From~\cref{prop:qudit-bb-code-parameters}, the number of physical qubits does not change since $n' = 2\ell m = n$. Similarly, $k = 2 \dim (\ker(A) \cap \ker(B))$ where $\ker(A) \cap \ker(B)$ is the kernel of the matrix $\lb \begin{array}{c} A \\ \hline B \end{array}\rb$ when viewed as a matrix over $\field_q$, while $k'$ is defined similarly except the matrix is defined over $\field_{q^t}$. These quantities are equal since the dimension of the kernel of a matrix does not change under field extensions. One way to see this to note this is true for ranks, for example, by row reduction, and then apply the rank-nullity theorem\footnote{More abstractly, this invariance is a consequence of the fact that tensoring with a field extension is an exact functor, as field extensions are flat; thus, the kernel of the linear map represented by the matrix is preserved under base change.}.

    It remains to check that $d = \min\{\abs{v} : v \in \ker(H_Z) \backslash \rs{H_X}\}$ does not change when we consider $H_Z$ and $H_X$ as matrices over $\field_{q^t}$ (and their kernels and row spaces as subspaces of $\field_{q^t}^n$ instead of $\field_q^n$). To see this, choose $v \in \ker_{\field_q^n}(H_Z) \backslash \operatorname{rs}_{\field_q^n}(H_X)$ and $v' \in \ker_{\field_{q^t}^n}(H_Z) \backslash \operatorname{rs}_{\field_{q^t}^n}(H_X)$ achieving the minimum weights $d$ and $d'$ respectively. We can fix a basis $\{v_1, \cdots, v_b\}$ for $\ker_{\field_q^n}(H_Z)$ such that $\{v_1, \cdots, v_c\}$ is a basis for $\operatorname{rs}_{\field_q^n}(H_X)$ (for $c < b$). Note that these are also bases for $\ker_{\field_{q^t}^n}(H_Z)$ and $\operatorname{rs}_{\field_{q^t}^n}(H_X)$ respectively\footnote{This is because $\ker_{\field_{q^t}^n}(H_Z) = \ker_{\field_{q}^n}(H_Z) \otimes_{\field_q} \field_{q^t}$ and $\operatorname{rs}_{\field_{q^t}^n}(H_X) = \operatorname{rs}_{\field_q^n}(H_X) \otimes_{\field_q} \field_{q^t}$, which follow from the exactness of tensoring with a field extension.}. Since $v \in \ker_{\field_q^n}(H_Z) \backslash \operatorname{rs}_{\field_q^n}(H_X)$, $v$ is an $\field_q$-linear combination of $v_1, \ldots, v_b$, with a nonzero coefficient for some $v_i$ with $i > c$. Then, $v \in \ker_{\field_{q^t}^n}(H_Z) \backslash \operatorname{rs}_{\field_{q^t}^n}(H_X)$ as well, so its weight $d$ must be greater than or equal to $d'$, by definition of $d'$. Similarly, $v'$ is an $\field_{q^t}$-linear combination of $v_1, \ldots, v_b$, with a nonzero coefficient for some $v_j$ with $j > c$. We may assume without loss of generality (by rescaling $v'$ by a nonzero constant, which does not change $d'$) that the coefficient of $v_j$ has a nonzero field trace\footnote{The field trace of $y \in \field_{q^t}$ is the trace of the $\field_q$-linear transformation from $\field_{q^t}$ to $\field_{q^t}$ defined by $x \mapsto xy$. It is a fact that this is not the zero map, since $\field_{q^t}/\field_q$ is a separable field extension.}. Now, we can consider the vector $v'' \in \field_q^n$ formed by applying the field trace map to each coefficient of $v_1, \ldots, v_b$ in the expansion of $v'$. Since its coefficient of $v_j$ ($j > c$) is nonzero by assumption, $v'' \in \ker_{\field_q^n}(H_Z) \backslash \operatorname{rs}_{\field_q^n}(H_X)$. Thus, the weight of $v''$ is at least $d$. But the weight of $v''$ is also at most that of $v'$ (since the trace of the zero element is zero), so $d' \geq d$. Therefore, it must be the case that $d' = d$.
\end{proof}

There are a few useful implications of this result. First, it limits the search space for qudit bivariate bicycle codes in higher dimensions. That is, given a good code found over $\field_q$, this automatically provides a code for all powers of $q$. Another implication is that, for the same set of $n$, $k$, and $d$, a power $t$ of a code over $\field_q$ has more codewords, namely $q^{tk}$ codewords instead of $q^k$.

\subsection{Coprime qudit bivariate bicycle codes}\label{subsec:coprime-qudit-bb-codes}
Bivariate bicycle codes were further developed in~\cite{wang2025,postema2025}, where their algebraic properties were explored in more detail, resulting in \emph{coprime} bivariate bicycle codes. We extend this formalism to \emph{qudit coprime bivariate bicycle} codes and give analytic results for calculating the code dimension, avoiding some of the costly numerical search required for the more general qudit bivariate bicycle codes presented above.

Instead of performing the numerical search for $k$ as is done for general bivariate bicycle codes by calculating the rank of the parity-check matrices, we can get some analytic insight on $k$ by using the algebraic properties of these codes. Let $\ell$, $m$, and $q$ be mutually coprime and define $z = xy$, where $x$ and $y$ are defined as they were above. Then, the bivariate polynomials $A(x,y)$ and $B(x,y)$ can be expressed as univariate polynomials $A(z)$ and $B(z)$:

\begin{fact}[Univariate generator~{\cite[Thm. 2.2]{postema2025}}]\label{fact:qubit-bb-code-univariate-generator}
    Let $q = p^s$ for prime $p$ and $s \in \naturals$ and $\ell$, $m$, and $q$ all be mutually coprime. Then, bivariate bicycle codes allow for a univariate representation. Set $z = xy$. Since $\set{x}$ and $\set{y}$ are cyclic groups of order $\ell$ and $m$, respectively, $\set{xy}$ is a cyclic group of order $\ell m$. More concretely, there exists a bijective ring homomorphism $\psi$ such that
    \begin{align}
        \psi : \frac{\field_q[x,y]}{(x^\ell - 1, y^m - 1)} \to \frac{\field_q[z]}{(z^{\ell m} - 1)}\,.
    \end{align}
    This mapping is carried out by
    \begin{align}
        \psi : x \mapsto z^{m^{-1_\ell} m}, y \mapsto z^{\ell^{-1_m} \ell}\,,
    \end{align}
    where $m^{-1_\ell}$ is the multiplicative inverse of $m$ in $\integers / \ell\integers$ and $\ell^{-1_m}$ is defined similarly.
\end{fact}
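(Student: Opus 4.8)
The statement is cited from~\cite{postema2025}; here is how I would prove it. The plan is to recognize both quotient rings as \emph{group algebras} over $\field_q$ and to reduce the claim to an isomorphism of the underlying finite abelian groups together with the functoriality of the group-algebra construction. Concretely, $\field_q[x,y]/(x^\ell-1,y^m-1)\cong\field_q[\set{x}\times\set{y}]$, where $\set{x}$ and $\set{y}$ are the cyclic groups generated by the images of $x$ and $y$, of orders $\ell$ and $m$, and $\field_q[z]/(z^{\ell m}-1)\cong\field_q[\set{z}]$ with $\set{z}$ cyclic of order $\ell m$. Any isomorphism of groups extends $\field_q$-linearly to an isomorphism of their group algebras, so it suffices to exhibit a group isomorphism $\set{x}\times\set{y}\cong\set{z}$ and to check that the prescribed $\psi$ realizes it.

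First I would establish the group isomorphism. Since $\gcd(\ell,m)=1$, the Chinese remainder theorem gives $\integers/\ell\integers\times\integers/m\integers\cong\integers/\ell m\integers$; equivalently, as $\set{x}$ and $\set{y}$ commute and intersect trivially, $\set{x}\times\set{y}$ is cyclic of order $\ell m$, generated by $xy=z$. This already settles the assertion that $\set{xy}$ is cyclic of order $\ell m$. Next I would check that the exponents defining $\psi$ realize the inverse of the CRT isomorphism: with $a=m^{-1_\ell}m$ and $b=\ell^{-1_m}\ell$ one has $a\equiv1\pmod\ell$, $a\equiv0\pmod m$, $b\equiv0\pmod\ell$, and $b\equiv1\pmod m$, whence $a+b\equiv1\pmod{\ell m}$. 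Using $\gcd(\ell,m)=1$ and that $m^{-1_\ell}$ is a unit modulo $\ell$, one computes $\gcd(a,\ell m)=m$, so $z^a$ has order $\ell m/m=\ell$, and symmetrically $z^b$ has order $m$; thus $\psi(x)=z^a$ and $\psi(y)=z^b$ are legitimate images of elements of orders $\ell$ and $m$.

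With these computations in hand, well-definedness is immediate: $z^{a\ell}=(z^{\ell m})^{m^{-1_\ell}}=1$ and $z^{bm}=(z^{\ell m})^{\ell^{-1_m}}=1$, so the assignment $x\mapsto z^a$, $y\mapsto z^b$ annihilates the ideal $(x^\ell-1,y^m-1)$ and hence descends, via the universal property of the polynomial ring followed by the quotient, to a well-defined $\field_q$-algebra homomorphism $\psi$. For surjectivity, $\psi(xy)=z^{a+b}=z$, so the image of $\psi$ contains the generator $z$ and therefore all of $\field_q[z]/(z^{\ell m}-1)$; since source and target are both $\field_q$-vector spaces of dimension $\ell m$, a surjective $\field_q$-linear map between them is a bijection. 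Hence $\psi$ is a ring isomorphism, as claimed.

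The only genuine obstacle is the bookkeeping: making the congruences for $a$ and $b$ modulo $\ell$ and $m$ fully explicit and confirming the gcd computation so that the orders of $\psi(x)$ and $\psi(y)$ come out to exactly $\ell$ and $m$. Everything else is the standard dictionary between cyclic quotient rings $\field_q[w]/(w^n-1)$ and group algebras $\field_q[\integers/n\integers]$, together with CRT. I would also remark that only $\gcd(\ell,m)=1$ is needed for this particular statement; coprimality of $q$ with $\ell m$ plays no role here and becomes relevant only later, when one uses Maschke's theorem to split these group algebras into products of finite fields.
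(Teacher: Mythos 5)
Your proof is correct and complete. The paper itself does not prove this fact---it defers entirely to \cite[Thm.~2.2]{postema2025}---so there is no internal argument to compare against; your group-algebra/CRT route is the standard one, and every step checks out: the congruences $a\equiv 1,\ b\equiv 0\pmod{\ell}$ and $a\equiv 0,\ b\equiv 1\pmod{m}$ give $\psi(xy)=z^{a+b}=z$, the identities $z^{a\ell}=(z^{\ell m})^{m^{-1_\ell}}=1$ and $z^{bm}=(z^{\ell m})^{\ell^{-1_m}}=1$ give well-definedness, and surjectivity plus equality of the $\field_q$-dimensions (both $\ell m$) gives bijectivity. Your closing remark is also correct and worth keeping: only $\gcd(\ell,m)=1$ is used here, while coprimality with $q$ enters only later when one needs the semisimple (Maschke) factorization of these group algebras.
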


For proof of this statement, we refer the reader to~\cite[Theorem 2.2]{postema2025}. Then, we have the following result:

\begin{proposition}[Qudit coprime bivariate bicycle code dimension]\label{prop:qudit-coprime-bb-dimension}
    Given $q = p^s$ for prime $p$ and $s \geq 1$, coprime integers $\ell$ and $m$ that are also coprime to $q$, and $z = xy$ for $x$ and $y$ defined above, let $h(z) = \gcd(A(z), B(z), z^{\ell m} - 1)$. Then, the bivariate bicycle code defined by $A(z)$ and $B(z)$ of length $n = 2\ell m$ has dimension
    \begin{align}
        k = 2\deg(h(z))\,.
    \end{align}
\end{proposition}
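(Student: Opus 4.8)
The plan is to convert the formula $k = 2\dim_{\field_q}\bigl(\ker(A)\cap\ker(B)\bigr)$ from~\cref{prop:qudit-bb-code-parameters} into a statement about the univariate ring $R := \field_q[z]/(z^{\ell m}-1)$ via~\cref{fact:qubit-bb-code-univariate-generator}, and then to evaluate the relevant dimension as a $\gcd$ degree by elementary commutative algebra. So it suffices to prove that $\dim_{\field_q}\bigl(\ker(A)\cap\ker(B)\bigr) = \deg h(z)$.

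First I would set up the dictionary between matrices and $R$. Since $\ell$ and $m$ are coprime, $\langle x,y\rangle$ is cyclic of order $\ell m$ with generator $z = xy$, so~\cref{fact:qubit-bb-code-univariate-generator} supplies a $\field_q$-algebra isomorphism $\psi\colon \field_q[x,y]/(x^\ell-1,y^m-1)\xrightarrow{\sim} R$ sending $A(x,y)\mapsto A(z)$ and $B(x,y)\mapsto B(z)$. Identifying $\field_q^{\ell m}$ with $R$ as the regular module (equivalently, relabelling the tensor-indexed standard basis of $\field_q^{\ell m}$ along the CRT isomorphism $\integers/\ell\integers\times\integers/m\integers\cong\integers/\ell m\integers$), the matrix $A$ becomes the operator ``multiplication by $A(z)$'' on $R$ --- concretely an element of $\field_q[S_{\ell m}]$, up to a transpose that does not affect kernel dimensions --- and likewise for $B$. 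Hence $\ker(A)\cap\ker(B)$ is carried isomorphically onto $\{f\in R : A(z)f = B(z)f = 0\} = \operatorname{Ann}_R(J)$, where $J$ is the ideal of $R$ generated by $A(z)$ and $B(z)$; in particular the $\field_q$-dimensions match.

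Next I would identify $\operatorname{Ann}_R(J)$. Because $\field_q[z]$ is a principal ideal domain and $z^{\ell m}-1$ vanishes in $R$, the ideal $J$ coincides with the ideal of $R$ generated by $h(z) = \gcd\bigl(A(z),B(z),z^{\ell m}-1\bigr)$, which divides $z^{\ell m}-1$. Writing $z^{\ell m}-1 = h(z)g(z)$, the multiplication-by-$h(z)$ endomorphism of $R$ has image $h(z)R\cong \field_q[z]/(g(z))$ of dimension $\deg g = \ell m - \deg h$, and kernel $\operatorname{Ann}_R\bigl(h(z)R\bigr) = \operatorname{Ann}_R(J)$; the rank-nullity theorem then gives $\dim_{\field_q}\operatorname{Ann}_R(J) = \deg h(z)$. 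Combining the two steps yields $k = 2\deg h(z)$. (One could further invoke $\gcd(q,\ell m) = 1$ to see that $z^{\ell m}-1$ is squarefree and $R$ a product of fields, but this is not needed: the argument only uses that $\field_q[z]$ is a PID and that $h\mid z^{\ell m}-1$.)

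The step I expect to be the main obstacle is the bookkeeping in the first paragraph above: one must check carefully that the $\ell m\times\ell m$ matrix $A$, defined through the Kronecker products $x = S_\ell\otimes I_m$ and $y = I_\ell\otimes S_m$, genuinely corresponds under $\psi$ to multiplication by $A(z)$ on $R$, keeping track of how the tensor-product basis is permuted into the cyclic basis $\{1,z,\dots,z^{\ell m-1}\}$ and of the harmless transposes arising from left- versus right-multiplication conventions in the group algebra. Everything after that reduction is routine linear and commutative algebra.
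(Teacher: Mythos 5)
Your proof is correct and follows essentially the same route as the paper's: both pass to the univariate ring $\field_q[z]/(z^{\ell m}-1)$ via the coprime isomorphism of \cref{fact:qubit-bb-code-univariate-generator} and exploit that the ideal generated by $A(z)$ and $B(z)$ is the principal ideal $(h(z))$ with $h(z) = \gcd(A(z),B(z),z^{\ell m}-1)$. The only difference is dual bookkeeping: the paper starts from $k = n - 2\rank{H_X}$ and computes $\cs{H_X}$ as that ideal, of dimension $\ell m - \deg h$, whereas you start from $k = 2\dim(\ker(A)\cap\ker(B))$ and compute the annihilator of the same ideal via rank--nullity on multiplication by $h(z)$; the two starting points are equivalent by \cref{prop:qudit-bb-code-parameters}.
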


\begin{proof}
    The proof is essentially the same as the proof for the equivalent statement in~\cite{postema2025}. Let $x, y \in \field_q^{\ell m \times \ell m}$, and note that $\set{x}$ and $\set{y}$ are cyclic groups of orders $\ell$ and $m$, respectively. Define $z = xy$. Then, $\set{z}$ is a cyclic group of order $\ell m$, so any monomial in the set $\{x^i y^j | 0 \leq i \leq \ell, 0 \leq j \leq m\}$ can be expressed as a power of $z$. Then, any polynomial in $\field_q[x,y] / (x^\ell - 1, y^m - 1)$ can be expressed as one in $\field_q[z] / (z^{\ell m} - 1)$ under the isomorphism $\psi$ in~\cref{fact:qubit-bb-code-univariate-generator}. The column space of the parity-check operator $H_X = [\gamma_1 A | \gamma_2 B]$ is
    \begin{align}
        \cs{H_X} &= \{H_X x | x \in \field_q^{2\ell m}\}\\
        &= \{\gamma_1 Au + \gamma_2 Bv | u, v \in \field_q^{\ell m}\}\,,
    \end{align}
    which differs from the qubit case only by the introduction of the block coefficients $\gamma_1, \gamma_2 \in \field_q$, which we can drop by absorbing these into $u, v$ respectively. This can be written as
    \begin{align}
        \cs{H_X} = \{A(z)u(z) + B(z)v(z) | u(z), v(z) \in \field_q[z]/(z^{\ell m} - 1)\}\,.
    \end{align}
    Since $\field_q[z]/(z^{\ell m} - 1)$ is the quotient of a univariate polynomial ring, all its ideals are principal (see~\cref{def:ring-ideal}). Thus, $\cs{H_X}$, which is the ideal generated by $A(z)$ and $B(z)$, is a principal ideal and is generated by $h(z) = \gcd(A(z), B(z), z^{\ell m} - 1)$ and $\rank{H_X} = \dim(\cs{H_X}) = \ell m - \deg(h(z))$. The code dimension is then
    \begin{align}
        k &= 2\ell m - 2\rank{H_X}\\
        &= 2\ell m - 2(\ell m - \deg(h(z)))\\
        &= 2\deg(h(z))\,.
    \end{align}
\end{proof}

\subsection{Numerical code search}\label{subsec:qudit-bb-code-numerics}
We now detail our numerical search for novel qudit bivariate bicycle codes and present several novel codes in \cref{tab:new-qudit-bb-codes}. For the standard (non-coprime) qudit bivariate bicycle codes, we run a thorough numerical search for codes by choosing random values for the various input parameters: $\ell$, $m$, $A$ (both the coefficients $\alpha_i$ and the monomials $A_i$), and $B$ (both the coefficients $\beta_i$ and the monomials $B_i$). We choose the block coefficients $(\gamma_1, \gamma_2, \delta_1, \delta_2)$ as $(1,1,1,-1)$. Then, we construct the associated $X$- and $Z$-type parity-check matrices and numerically calculate the number of logical qudits and the distance, keeping only those codes that have a high rate and good distance ($d \geq 4$). To compute the distance, we  use an algorithm based on mixed-integer programming where the code is available on Github in Ref.~\cite{haug2025qudit}. Here, we find the logical operator with lowest weight, where the weight corresponds to the code distance. We present some candidate qudit bivariate bicycle codes of various weights in~\cref{tab:new-qudit-bb-codes}.

\begin{table*}[tb]
    \centering
    \begin{tblr}{colspec={|Q[c,m]|Q[c,m]|Q[c,m]|Q[c,m]|Q[c,m]|Q[c,m]|Q[c,m]|Q[c,m]|},row{even}={bg=lightgray}}
        \hline
        $\stabcode{n}{k}{d}_q^w$ & {Net encoding\\ rate $= k/2n$} & $(\ell,m)$ & Coprime? & $A$ & $B$\\
        \hline[1pt]
        $\stabcode{24}{4}{4}_3^5$ & $1/12$ & $(4,3)$ & No & $x + x^2$ & $x^3 + 2y + 2y^2$\\
        \hline
        $\stabcode{30}{4}{5}_3^5$ & $1/15$ & $(5,3)$ & No & $I + y + x^2y^2$ & $2I + x$\\
        \hline
        $\stabcode{48}{4}{7}_3^5$ & $1/24$ & $(8,3)$ & No & $I + 2x$ & $I + y + x^3y^2$\\
        \hline
        $\stabcode{88}{8}{5}_3^5$ & $1/22$ & $(4,11)$ & Yes & $2y^3 + y^8$ & $y + I + y^9$\\
        \hline
        $\stabcode{84}{6}{5}_5^4$ & $1/28$ & $(7,6)$ & Yes & $4y^3 + 4x$ & $3x^2 + 2x^6$\\
        \hline
        $\stabcode{30}{4}{5}_5^5$ & $1/15$ & $(5,3)$ & No & $I + y + x^2y^2$ & $2I + 3x$\\
        \hline
        $\stabcode{48}{4}{7}_5^5$ & $1/24$ & $(8,3)$ & Yes & $2I + 4x$ & $3I + 3y + x^3y^2$\\
        \hline
        $\stabcode{54}{6}{6}_5^6$ & $1/18$ & $(3,9)$ & No & $y^2 + 2y^7 + 2y^4$ & $3y^6 + 3y^3 + 4y^8$\\
        \hline
        $\stabcode{64}{8}{5}_5^6$ & $1/16$ & $(8,4)$ & No & $x^3 + 3x^5 + 4I$ & $3x + 3x^6 + 2x^7$\\
        \hline
        $\stabcode{28}{4}{5}_5^6$ & $1/14$ & $(7,2)$ & Yes & $3x^6 + 3x^2 + 4x^3$ & $3x^5 + x + I$\\
        \hline
        $\stabcode{30}{4}{5}_7^6$ & $1/15$ & $(3,5)$ & Yes & $4x + 2y + x^2$ & $y^4 + 2y^3 + 4y^2$\\
        \hline
    \end{tblr}
    \caption{New qudit bivariate bicycle codes of varying weights (superscript $w$) for $q = 3, 5$ (subscript). $I$ denotes the identity matrix (i.e., one of $x$ or $y$ raised to the power 0). We also delineate between regular and coprime codes (i.e., the mutual coprimeness of $\ell$, $m$, and $q$).}
    \label{tab:new-qudit-bb-codes}
\end{table*}

When considering the coprime qudit bivariate bicycle codes discussed in~\cref{subsec:coprime-qudit-bb-codes}, we can greatly reduce the search space. We restrict the values of $\ell$ and $m$ to those that are coprime with each other and pairwise coprime with $q$. Then, we check whether the choice of input will yield a high-rate code by algebraically calculating the code dimension $k$. Only in the case where the encoding rate is sufficiently high, we construct the actual parity-check matrix and compute the code distance.

\begin{figure*}[tb]
    \centering
    \subfloat[]{%
        \includegraphics[width=0.32\textwidth]{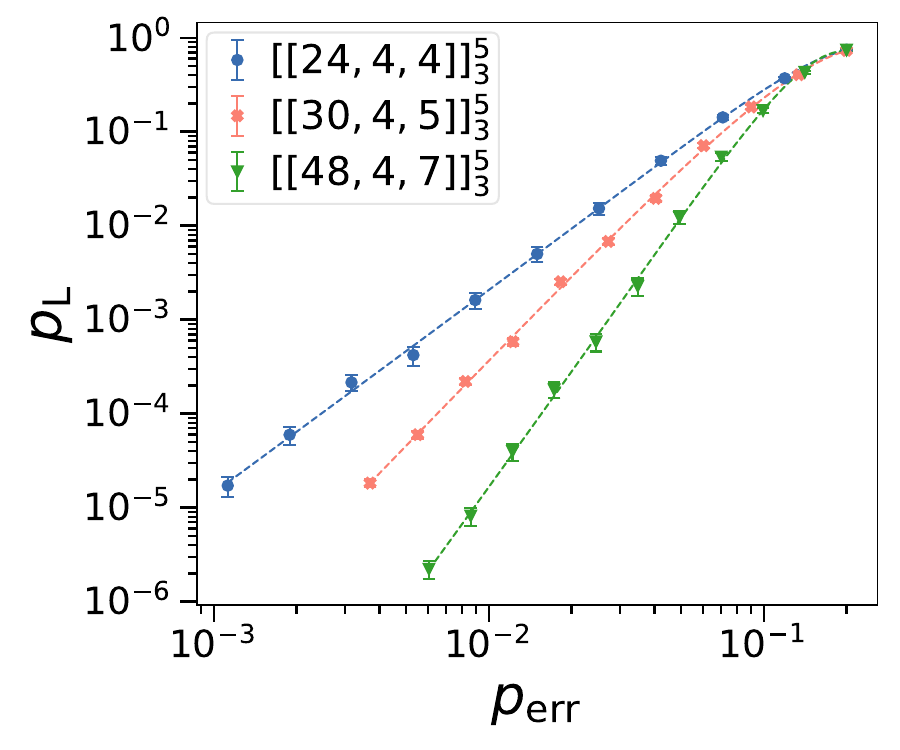}
    }
    \subfloat[]{%
        \includegraphics[width=0.32\textwidth]{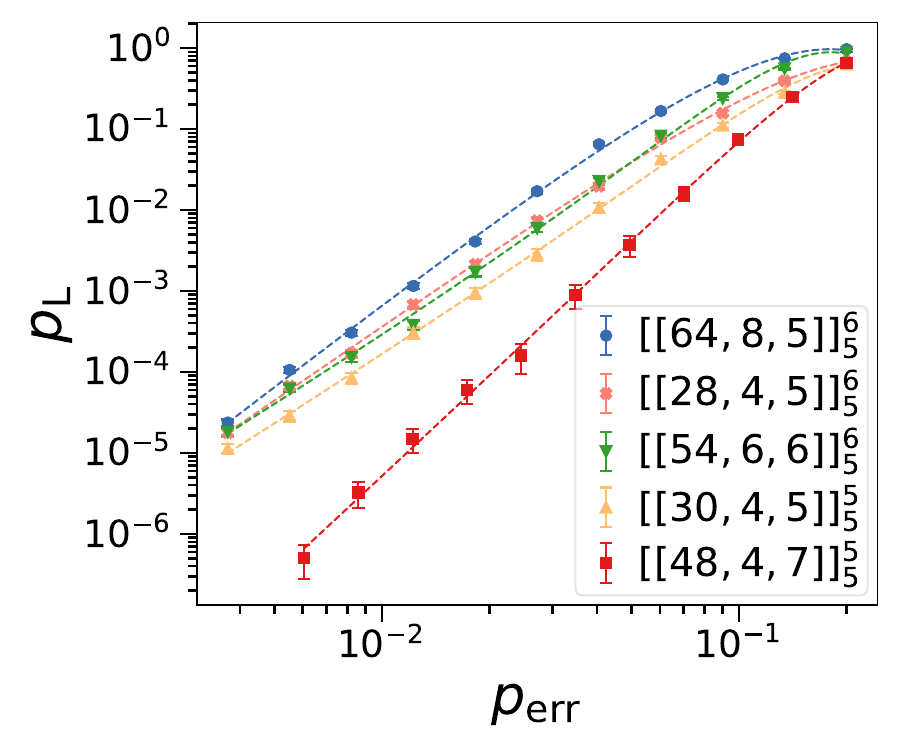}
    }
    \subfloat[]{%
        \includegraphics[width=0.32\textwidth]{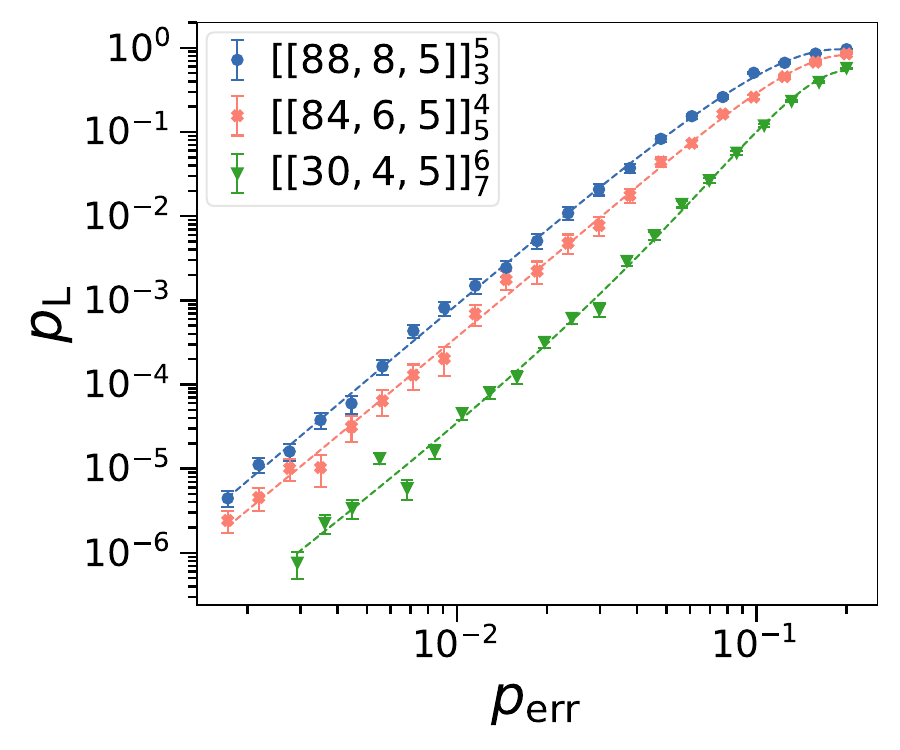}
    }
    \caption{Code capacity for the best qudit bivariate bicycle codes we have found. We plot logical error $p_\text{L}$ against physical error $p$. We show (a) regular, $q = 3$ with $d_\text{fit}=\{3.3, 5.1, 7.0\}$; (b) regular, $q = 5$ with $d_\text{fit}=\{5.8, 5.1, 4.4, 4.6, 7.1\}$; (c) coprime, $q = 3,5,7$ with $d_\text{fit}=\{5.0,4.8,4.5\}$.
    The subscript on each code in the legend corresponds to the local qudit dimension $q$ and the superscript denotes the weight of the parity check operators of the code. See~\cite[Fig. 3]{bravyi2024} and~\cite[Fig. 6]{wang2025} for comparable qubit bivariate bicycle codes.}
    \label{fig:bb-code-decoding}
\end{figure*}

Next, we numerically simulate the capability of our code to preserve quantum information. We simulate the code capacity, where we assume that the syndrome measurements are noise-free. As we consider CSS codes, we can restrict ourselves to $X$ errors, while $Z$ errors follow analogously. We assume that each qudit is subject to $X$ errors with probability $p_{\text{err}}$, where for each qudit there are $i=1, \ldots, q-1$ possible errors $X^i$, each appearing with probability $p_{\text{err}}/(q-1)$. After the qudits are subjected to errors, we measure the parity-checks to get their corresponding syndrome.

After this, we decode the syndrome to get the most likely error. There has been some work on decoding algorithms for qudits~\cite{bimberg2010,beermann2011,anwar2014fast,watson2014,geller2016}, but compared to the qubit case, there is a lack of open-source software for qudit decoding. As such, we wrote our own decoders based on mixed-integer programming, which searches for the optimal solution of the (non-degenerate) maximum-likelihood error problem~\cite{iyer2015}. After the decoding, we apply the correction operator and compute the probability $p_\text{L}$ that a logical error has occurred, that is, at least one of the logical $Z$ operator has flipped. We fit the logical error $p_\text{L}$ with the heuristic formula~\cite{bravyi2013,bravyi2024}
\begin{align}
    p_\text{L} = p_{\text{err}}^{(d_\text{fit}+1)/2 }\exp(c_0 + c_1p_{\text{err}} + c_2p_{\text{err}}^2)\label{eq:heuristic_fit}
\end{align}
where for qubit surface codes and bicycle codes it has been observed that the fitted distance corresponds to the actual distance~\cite{bravyi2013}. In~\cref{fig:bb-code-decoding}, we present the decoding results for the best qudit bivariate bicycle codes that we found, for both the regular and coprime cases. We find that the fitted error suppression $d_\text{fit}$ matches the distance $d$ of the codes up to its parity, i.e. $(d_\text{fit}+1)/2\approx \lceil d/2 \rceil$.

\section{Qudit hypergraph product codes}\label{sec:qudit-hgp-codes}
We now move to \emph{qudit hypergraph product} (HGP) codes~\cite{tillich2013,pecorari2025}. The HGP construction provides a powerful method to generate large quantum codes from two smaller, classical codes. Conceptually, it generalizes the product structure of the toric code. By taking the "product" of the parity-check matrices of the classical codes, we can obtain quantum LDPC codes with significantly better parameters (i.e., the encoding rate and the distance) than the surface code. There are two frameworks with which we can work: an algebraic approach similar to what we presented above for bivariate bicycle codes or a homological approach. For completion, we quditize HGP codes in both frameworks. We start with the algebraic framework, then quditize the HGP codes in the homological framework, and finally we conclude by studying in detail a particular type of HGP codes, the La-cross codes.

\subsection{Algebraic framework}\label{subsec:algebraic-framework}
We start with two classical linear LDPC codes $C_A$ and $C_B$ over $\field_q$, where, as usual, $q = p^s$ for prime $p$ and $s \in \naturals$. The codes have parameters $[ n_i, k_i, d_i]_q$ for $i = A,B$, where $n_i$, $k_i$, and $d_i$ are the code size, dimension, and distance, respectively. These codes are represented by their parity-check matrix $H_i \in \field_q^{n_i^\intercal \times n_i}$, where $n_i^\intercal := n_i - k_i$ is the number of checks. In terms of the parity-check matrix, the code is simply given as $C_i = \ker{H_i}$. For each of these classical codes, we have a corresponding \emph{transposed code}, denoted $C_i^\intercal$, with parameters $[ n_i^\intercal, k_i^\intercal, d_i^\intercal]_q$ with parity-check matrix $H_i^\intercal$. Thus, the transposed code is $C_i^\intercal = \ker{H_i^\intercal}$. The HGP construction then combines the classical linear codes along with their transposes to give an $\stabcode{n}{k}{d}_q$ quantum stabilizer code with quantum parity-check matrices
\begin{align}
    H_X &= [\gamma_1 H_A \otimes I_{n_B} | \gamma_2 I_{n_A^\intercal} \otimes H_B^\intercal]\,,\\
    H_Z &= [\delta_1 I_{n_A} \otimes H_B | \delta_2 H_A^\intercal \otimes I_{n_B^\intercal}]\,,
\end{align}
where we again introduce the block coefficients $\gamma_1, \gamma_2, \delta_1, \delta_2 \in \field_q$ to satisfy the CSS condition:
\begin{align}
    H_X H_Z^\intercal
    &= [\gamma_1 H_A \otimes I_{n_B} | \gamma_2 I_{n_A^\intercal} \otimes H_B^{\intercal}]
    \lb
    \begin{array}{c}
        \delta_1 I_{n_A} \otimes H_B^\intercal\\
        \hline
        \delta_2 H_A \otimes I_{n_B^\intercal}
    \end{array}
    \rb\\
    &= \gamma_1 \delta_1 H_A \otimes H_B^\intercal + \gamma_2\delta_2 H_A \otimes H_B^\intercal\\
    &= (\gamma_1 \delta_1 + \gamma_2\delta_2)H_A \otimes H_B^\intercal\,,
\end{align}
which must be equal to 0. This is trivially true in $\field_2$ for $\gamma_1 = \gamma_2 = \delta_1 = \delta_2 = 1$, but does not hold over $\field_q$ for odd $q$. Thus, to satisfy the CSS condition, we require
\begin{align}
    \gamma_1\delta_1 + \gamma_2\delta_2 = 0\,,
\end{align}
where arithmetic is done over $\field_q$. The easiest choice (which we made for the qudit bivariate bicycle codes) is to take $\gamma_1 = \gamma_2 = \delta_1 = 1$ and $\delta_2 = -1$, but any choice of block coefficients that satisfy $\gamma_1\delta_1 + \gamma_2\delta_2 = 0$ suffices\footnote{By a similar reasoning as the one presented in the paragraph after~\cref{eqn:simplified_HX_HZ}, it is in fact sufficient (for the purpose of searching for new codes) to restrict the search space to the special case $\gamma_1, \gamma_2, \delta_1 = 1$ and $\delta_2 = -1$.}. The number of stabilizers is $n_1 n_2^\intercal + n_2 n_1^\intercal$, which is the number of rows in $H$. The quantum code parameters are the same as they are for the qubit case~\cite{pecorari2025}, which are given as $n = n_An_B^\intercal+n_A^\intercal n_B$, $k = k_Ak_B^\intercal + k_A^\intercal k_B$, $d = \min\{d_A, d_B, d_A^\intercal, d_B^\intercal\}$, all of which we get from our choice of the classical, linear LDPC seed codes.

\subsection{Homology framework}\label{subsec:qudit-hgp-homology}
Alternatively, since an HGP code is a CSS code, it can be described using the language of chain complexes and homology~\cite{zeng2019,zeng2020}; see~\cref{subsec:coding-theory-primer} for a brief introduction to homology. In particular, we take the tensor product of two length-2 chain complexes describing a pair of classical linear codes. We again consider two classical codes $C_A$ and $C_B$ over $\field_q$ with parameters $\stabcode{n_A}{k_A}{d_A}_q$ and $\stabcode{n_B}{k_B}{d_B}_q$, respectively. We describe these codes in terms of chain complexes:
\begin{align}\label{eqn:classical-code-chain-complexes}
    A &: 0 \xrightarrow{\partial_2^A} A_1 \xrightarrow{\partial_1^A} A_0 \xrightarrow{\partial_0^A} 0\\
    B &: 0 \xrightarrow{\partial_2^B} B_1 \xrightarrow{\partial_1^B} B_0 \xrightarrow{\partial_0^B} 0\,.
\end{align}
where $A_1 \cong \field_q^{n_A}, A_0 \cong \field_q^{n_A^\intercal}$ and $B_1 \cong \field_q^{n_B}, B_0 \cong \field_q^{n_B^\intercal}$. Here, the boundary operators $\partial_1^A$ and $\partial_1^B$ can be taken as the parity-check matrices of $C_A = \ker{\partial_1^A}$ and $C_B = \ker{\partial_1^B}$, respectively. We also consider their transposed codes, defined by
\begin{align}\label{eqn:transposed-codes}
    C_A^\intercal &= \ker(\partial_1^A)^\intercal = \{c\in\field_q^{n_A^\intercal} : (\partial_1^A)^\intercal c = 0\}\,,\\
    C_B^\intercal &= \ker(\partial_1^B)^\intercal = \{c\in\field_q^{n_B^\intercal} : (\partial_1^B)^\intercal c = 0\}\,,
\end{align}
with parameters $\stabcode{n_A^\intercal}{k_A^\intercal}{d_A^\intercal}_q$ and $\stabcode{n_B^\intercal}{k_B^\intercal}{d_B^\intercal}_q$, respectively. The tensor product between chain complexes $A$ and $B$ is denoted by $A \otimes B$, where its $k\numth$ vector space is given by
\begin{align}
    (A \otimes B)_k = \bigoplus_{i,j:i+j=k} A_i \otimes B_j\,.
\end{align}
For $u \in A_i$ and $v \in B_j$, the boundary operator is given by
\begin{align}
    \partial_{i+j}^{A \otimes B}(u \otimes v) &= (\partial_i^A \otimes I_j^B)(u\otimes v) + (-1)^i (I_i^A \otimes \partial_j^B)(u \otimes v)\\
    &= \partial_i^A(u) \otimes v + (-1)^i u \otimes \partial_j^B(v)\,,
\end{align}
where $I_i^A$ and $I_j^B$ are identity operators over $A_i$ and $B_j$, respectively. Taking the chain complexes $A$ and $B$ in~\cref{eqn:classical-code-chain-complexes}, we obtain the tensor product chain complex $A \otimes B$ given by
\begin{align}
    0 &\xrightarrow{\partial_3^{A \otimes B}} A_1 \otimes B_1 \\
    &\xrightarrow{\partial_2^{A \otimes B}} (A_0 \otimes B_1) \oplus (A_1 \otimes B_0) \\
    &\xrightarrow{\partial_1^{A \otimes B}} A_0 \otimes B_0 \\
    &\xrightarrow{\partial_0^{A \otimes B}} 0\,,\label{eqn:hypergraph-code-chain-complex}
\end{align}
which gives us a hypergraph product code with $X$ and $Z$ parity check matrices
\begin{align}
    H_X &= \partial_1^{A \otimes B} = \lb I_0^A \otimes \partial_1^B | \partial_1^A \otimes I_0^B \rb\\
    \intertext{and}
    H_Z &= \lp \partial_2^{A \otimes B} \rp^\intercal = \lb \lp \partial_1^A \rp^\intercal \otimes I_1^B | -I_1^A \otimes \lp \partial_1^B \rp^\intercal \rb\,,
\end{align}
satisfying the CSS code condition
\begin{align} 
    H_X H_Z^\intercal &= (I_0^A \otimes \partial_1^B)(\partial_1^A \otimes I_1^B) - (\partial_1^A \otimes I_0^B)(I_1^A\otimes \partial_1^B)\\
    &= \partial_1^A\otimes \partial_1^B - \partial_1^A\otimes \partial_1^B\\
    &= 0\,.
\end{align}
The number of logical qudits $k$ encoded is given by the dimension of the first homology group of $A\otimes B$, that is, $H_1(A\otimes B) = \ker{\partial_1^{A\otimes B}} / \im{\partial_2^{A\otimes B}}$.
Note that this construction is the HGP construction in Section~\ref{subsec:algebraic-framework} with block coefficients given by $\gamma_1 = \gamma_2 = \delta_1 = 1$ and $\delta_2 = -1$.

\begin{proposition}[Qudit hypergraph product code parameters]\label{prop:qudit-hgp-dimension}
    The $\llbracket n,k,d\rrbracket_q$ parameters of a qudit hypergraph product code defined by parity check matrices $H_X,H_Z$ in~\cref{eqn:hypergraph-code-chain-complex} such that $\partial_1^A,\partial_1^B,(\partial_1^A)^\intercal,(\partial_1^B)^\intercal$ are not full-rank is given by $n = n_An_B^\intercal + n_A^\intercal n_B$, $k = k_Ak_B^\intercal + k_A^\intercal k_B$, and $d = \min\{d_A,d_A^\intercal,d_B,d_B^\intercal\}$.
\end{proposition}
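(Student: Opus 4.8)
The plan is to argue in the homological picture of \cref{subsec:qudit-hgp-homology}, where the physical qudits form the middle chain space of $A\otimes B$, the $Z$-type logical operators are represented by $H_1(A\otimes B)$, and the $X$-type logicals by its dual, so that $n$, $k$ and $d$ can be read directly off the tensor product complex in \cref{eqn:hypergraph-code-chain-complex}. Each assertion is the qudit analogue of the familiar (qubit) hypergraph product theorem, and the work of the proof is to check that those arguments survive the passage from $\field_2$ to $\field_q$.

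The count of $n$ is immediate: the middle space $(A\otimes B)_1 = (A_0\otimes B_1)\oplus(A_1\otimes B_0)$ has $\field_q$-dimension $n_A^\intercal n_B + n_A n_B^\intercal$, using $\dim A_0 = n_A^\intercal$, $\dim A_1 = n_A$ and the analogues for $B$. For $k$, I would invoke the K\"unneth formula for chain complexes over a field, which carries no $\mathrm{Tor}$ term (see, e.g., \cite{hatcher2001}), so
\begin{align}
    H_1(A\otimes B) \;\cong\; \bigl(H_0(A)\otimes H_1(B)\bigr)\oplus\bigl(H_1(A)\otimes H_0(B)\bigr)\,.
\end{align}
For the length-$2$ complex $A$ one has $H_1(A) = \ker\partial_1^A = C_A$, hence $\dim H_1(A) = k_A$, whereas $H_0(A) = \operatorname{coker}\partial_1^A$, so by rank--nullity $\dim H_0(A) = n_A^\intercal - \operatorname{rank}\partial_1^A = n_A^\intercal - (n_A - k_A) = \dim\ker(\partial_1^A)^\intercal = k_A^\intercal$; the same holds with $B$ in place of $A$. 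Therefore $k = \dim H_1(A\otimes B) = k_A^\intercal k_B + k_A k_B^\intercal$, as claimed, matching the value quoted in \cref{subsec:algebraic-framework}. The hypothesis that $\partial_1^A,\partial_1^B,(\partial_1^A)^\intercal,(\partial_1^B)^\intercal$ are not full rank is not needed for this count, but it guarantees that all four of $C_A,C_A^\intercal,C_B,C_B^\intercal$ are nontrivial so that their distances genuinely enter the distance formula.

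The distance is proved by a two-sided bound, split into the $X$- and $Z$-types. For the upper bound I would exhibit explicit low-weight representatives: if $c\in C_A$ has weight $d_A$ and $e$ is a standard basis vector of $B_0$ lying outside $\operatorname{im}\partial_1^B$ (such $e$ exists exactly because $\partial_1^B$ is not full rank), then $c\otimes e\in A_1\otimes B_0\subseteq (A\otimes B)_1$ is a $1$-cycle by $\partial_1^{A\otimes B}\partial_2^{A\otimes B} = 0$, its class in $H_1(A\otimes B)$ is $[c]\otimes[e]\neq 0$ under the K\"unneth identification, and its Hamming weight is $w_H(c) = d_A$; the analogous construction from $C_B$ gives a $Z$-logical of weight $d_B$, and the two symmetric constructions on the $X$-side --- elements of $\ker H_Z\backslash\operatorname{rs} H_X$ built from $C_A^\intercal$ and $C_B^\intercal$ --- give $X$-logicals of weights $d_A^\intercal$ and $d_B^\intercal$. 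Hence $d = \min\{d_X,d_Z\} \le \min\{d_A,d_A^\intercal,d_B,d_B^\intercal\}$. For the matching lower bound I would show that every nonzero class in $H_1(A\otimes B)$, and symmetrically in its dual, has a minimum-weight representative supported inside a single one of the two blocks $A_0\otimes B_1$ or $A_1\otimes B_0$; restricting to that block and using the K\"unneth decomposition forces the restriction to contain a nonzero codeword of one of $C_A,C_A^\intercal,C_B,C_B^\intercal$ as a row or column, so its weight is at least the corresponding classical minimum distance.

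The hard part is precisely this last step: the reduction of a minimum-weight logical representative to a single block together with the ensuing weight bound. This is the technical core of the original hypergraph product distance analysis~\cite{tillich2013}; I expect the qubit argument to transfer with only cosmetic changes, being purely linear-algebraic and combinatorial, but two points need care over $\field_q$ with $q$ odd. First, the Koszul sign $(-1)^i$ in $\partial^{A\otimes B}$ must be tracked, although it is already absorbed into $H_X,H_Z$ of \cref{eqn:hypergraph-code-chain-complex} through the block coefficient $\delta_2 = -1$. Second, ``adding a stabilizer'' now means adding an arbitrary $\field_q$-linear combination of rows of $H_X$ or $H_Z$, so the block-reduction bookkeeping must carry these field coefficients rather than Boolean flags. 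An alternative that bypasses re-deriving the reduction lemma is to phrase the distance homologically --- $d$ is the minimum weight of a nontrivial element of $H_1$, respectively its dual, of a tensor product of chain complexes over a field --- and appeal to the known expression for this quantity in terms of the systolic and cosystolic distances of the two factor complexes, which here specialize to $d_A,d_A^\intercal$ and $d_B,d_B^\intercal$.
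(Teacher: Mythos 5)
Your computations of $n$ and $k$ (dimension of the middle chain space, K\"unneth with $\dim H_1(A)=k_A$ and $\dim H_0(A)=k_A^\intercal$), and your upper bound on $d$ via representatives of the form $c\otimes e$ with $e$ a weight-one basis vector outside $\im\partial_1^B$, match the paper's proof essentially step for step; the non-full-rank hypothesis enters exactly where you say it does.

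The gap is in the distance lower bound, which you correctly identify as the technical core but then resolve with a reduction claim that is false. You assert that every nonzero class in $H_1(A\otimes B)$ has a minimum-weight representative supported inside a single block $A_0\otimes B_1$ or $A_1\otimes B_0$. Already for the toric code (the hypergraph product of two length-$L$ repetition codes), the class corresponding to the sum of the two K\"unneth summands --- the ``diagonal'' homology class --- has no representative at all supported in a single block: any representative must use both horizontal and vertical edges. So the proposed reduction lemma cannot be the route to the lower bound, and deferring to ``the qubit argument transfers'' does not repair it, because the qubit argument is not the one you describe. What the paper actually does is the support-restriction argument: given a cycle $w=p\oplus q$ with $|w|<\min\{d_A,d_B\}$, restrict $A_1$ to the basis elements appearing in the support of the $A_1\otimes B_0$ component and $B_1$ to those appearing in the $A_0\otimes B_1$ component; the restricted spaces have dimension below $d_A$ and $d_B$, so the restricted complexes $A(p)$, $B(q)$ have trivial first homology (a nonzero kernel element would be a classical codeword of weight below the minimum distance), whence by K\"unneth the restricted product complex has $H_1=\{0\}$ and $w$, being a cycle there, is a boundary there and hence in $A\otimes B$; the mirror argument on the cochain side handles $d_X$. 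Your suggested fallback --- quoting the known systolic/cosystolic distance of a tensor product of complexes over a field --- is in substance this same argument, but as written your proof leaves the lower bound either resting on a false lemma or outsourced to an uncited result, so it is incomplete as it stands.
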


\begin{proof}
    The number of data qudits of a CSS code is the dimension of the first vector space of the chain complex that defines it.
    For the qudit HGP code the first vector space is $(A_0 \otimes B_1) \oplus (A_1 \otimes B_0)$, which is a vector space over $\field_q$ of dimension $n_An_B^\intercal+n_A^\intercal n_B$.

    Now we derive the number of logical qudits. By the K\"unneth formula~\cite{maclane1995}, the $k\numth$ homology group of $A \otimes B$ is given by
    \begin{align}
        H_k(A \otimes B) = \bigoplus_{i,j:i+j=k} H_i(A) \otimes H_j(B)\,.
    \end{align}
    The number of logical qudits $k$ in the HGP code defined by $H_X, H_Z$ in~\cref{eqn:hypergraph-code-chain-complex} is given by the first homology group of $A \otimes B$ in~\cref{eqn:hypergraph-code-chain-complex}:
    \begin{align}\label{eqn:H1-tensor-product-complex}
        k &= \dim{H_1\lp A \otimes B \rp}\\
        &= \dim{H_1(A)}\dim{H_0(B)} + \dim{H_0(A)}\dim{H_1(B)}\\
        &= \dim{\ker{\partial_1^A} / \im{\partial_2^A}}\dim{\ker{\partial_0^B} / \im{\partial_1^B}} + \dim{\ker{\partial_0^A} / \im{\partial_1^A}}\dim{\ker{\partial_1^B} / \im{\partial_2^B}}\,,
    \end{align}
    where the second equality comes from the K\"unneth formula and the third equality comes from the definition of the first and zeroth homology groups of $A$ and $B$. Note that $\im{\partial_2^A} = \im{\partial_2^A} = \{0\}$, since the second vector space of $A$ and $B$ are both trivial. Thus, we have
    \begin{align}
        \dim{\ker{\partial_1^A} / \im{\partial_2^A}} &= \dim{\ker{\partial_1^A}}\\
        &= \dim{C_A}\\
        &= k_A \label{eqn:h1-dimensions-A} \\
        \intertext{and}
        \dim{\ker{\partial_1^B} / \im{\partial_2^B}} &= \dim{\ker{\partial_1^B}}\\
        &= \dim{C_B}\\
        &= k_B\,. \label{eqn:h1-dimensions-B}
    \end{align}
    Also, note that $\ker{\partial_0^A} = A_0$ and $\ker{\partial_0^B} = B_0$, since the $-1\textsuperscript{st}$ vector spaces of $A$ and $B$ are trivial. On the other hand, $\im{\partial_1^A} = \rs{\partial_1^A}^\intercal$ and $\im{\partial_1^B} = \rs{\partial_1^B}^\intercal$. At the same time, the transposed codes $C_A^\intercal$ and $C_B^\intercal$ are given by $\ker{\partial_1^A}^\intercal$ and $\ker{\partial_1^B}^\intercal$, respectively. Therefore, we have
    \begin{align}
        \dim{\ker{\partial_0^A} / \im{\partial_1^A}} &= \dim{A_0 / \rs{\partial_1^A}^\intercal}\\
        &= \dim A_0 - \dim \rs{\partial_1^A}^\intercal \\
        &= \dim A_0 - \dim \cs{\partial_1^A}^\intercal \\
        &= \dim \ker {\partial_1^A}^\intercal \\
        &= \dim{C_A^\intercal}\\
        &= k_A^\intercal \label{eqn:h0-dimensions-A}  \\ 
        \intertext{and}
        \dim{\ker{\partial_0^B} / \im{\partial_1^B}} &= \dim{B_0 / \rs{\partial_1^B}^\intercal}\\
        &= \dim{C_B^\intercal}\\
        &= k_B^\intercal\,. \label{eqn:h0-dimensions-B} 
    \end{align}
    Substituting~\cref{eqn:h1-dimensions-A,eqn:h1-dimensions-B,eqn:h0-dimensions-A,eqn:h0-dimensions-B} into the expression for $k$ in~\cref{eqn:H1-tensor-product-complex} gives $k = k_Ak_B^\intercal + k_A^\intercal k_B$.

    Finally, we derive the distance. Recall that the distance of a CSS code with parity check matrices $H_X$ and $H_Z$ with corresponding classical codes $C_X$ and $C_Z$ with parameters $[n_X,k_X,d_X]$ and $[n_Z,k_Z,d_Z]$, respectively, is given by $d = \min\{d_X,d_Z\}$ where 
    \begin{equation}\label{eqn:HGP_XZ_distances}
    \begin{aligned}
        d_X &= \min\{\abs{v} : v\in\ker \partial_1^{A\otimes B}\backslash\im \partial_2^{A\otimes B} \}\\
        d_Z &= \min\{\abs{v} : v\in\ker (\partial_2^{A\otimes B})^\intercal\backslash\im (\partial_1^{A\otimes B})^\intercal \} \,,
    \end{aligned}
    \end{equation}
    where $\partial_1^{A\otimes B}=H_X$ and $(\partial_2^{A\otimes B})^\intercal=H_Z$.
    The distances of the seed codes $C_A$ and $C_B$ and their transpose codes $C_A^\intercal$ and $C_B^\intercal$ are given by
    \begin{align}
        d_A &= \min\{\abs{v} : v\in\ker\partial_1^A \setminus\{0\} \}\\
        d_B &= \min\{\abs{v} : v\in\ker\partial_1^B\setminus\{0\} \}\\
        d_A^\intercal &= \min\{|v| : v\in\ker(\partial_1^A)^\intercal\setminus\{0\} \} \\
        d_B^\intercal &= \min\{|v| : v\in\ker(\partial_1^B)^\intercal\setminus\{0\} \} \,,
    \end{align}
    respectively. 

    Let $\bas(X)$ be the presumed basis of vector space $X$ (for vector spaces $A_j,B_j$ in the chain complex $A,B$, their bases $\bas(A_j),\bas(B_j)$ are determined by the parity check matrices $\partial_1^A,\partial_1^B$) and $\supp(v)$ be the support of vector $v$.
    Then consider
    \begin{equation}
    \begin{gathered}
        \mathcal{A}_1(p) = \{ a' \in \bas(A_1) \mid \exists b \in \bas(B_0) \text{ such that } a' \otimes b \in \supp(p) \} \,, \\
        \mathcal{B}_1(q) = \{ b' \in \bas(B_1) \mid \exists a \in \bas(A_0) \text{ such that } a \otimes b' \in \supp(q) \}
    \end{gathered}
    \end{equation}
    for $p\in A_1\otimes B_0$ and $q\in A_0\otimes B_1$.
    Now let $\partial_1^{A_1(p)}$ being the restriction of $\partial_1^A$ to $A_1(p):= \spn(\mathcal{A}_1(p))$ and $\partial_1^{B(q)}$ the restriction of $\partial_1^B$ to $B_1(q) := \spn(\mathcal{B}(q))$ so that for all $u\in A_1(p)$ and $v\in B_1(q)$ it holds that
    \begin{equation}\label{eqn:HGP_restriction_boundary_map}
        \partial_1^A(u) = \partial_1^{A(p)}(u) 
        \quad\text{and}\quad
        \partial_1^B(v) = \partial_1^{B(q)}(v) \,,
    \end{equation}
    which are the boundary maps of chain complexes
    \begin{equation}
        A(p): A_1(p) \xrightarrow{\partial_1^{A(p)}} A_0 
        \quad\text{and}\quad
        B(q): B_1(q) \xrightarrow{\partial_1^{B(q)}} B_0 \,,
    \end{equation}
    respectively. The product chain complex $C(p,q) := A(p)\otimes B(q)$ therefore is given by
    \begin{equation}
    \begin{gathered}
        C(p,q) : A_1(p)\otimes B_1(q) \rightarrow A_0 \otimes B_1(q) \oplus A_1(p)\otimes B_0 \rightarrow A_0\otimes B_0 \,,
    \end{gathered}
    \end{equation}
    so~\cref{eqn:HGP_restriction_boundary_map} implies that
    \begin{equation}
        \partial_1^A\otimes I_0^B(p) = \partial_1^{A(p)}\otimes I_0^B(p) 
        \quad\text{and}\quad
        I_0^A\otimes\partial_1^B(q) = I_0^A\otimes\partial_1^{B(q)}(q) \,.
    \end{equation}

    Now consider a $w\in (A\otimes B)_1 = (A_0\otimes B_1)\oplus(A_1\otimes B_0)$ and its restriction to $A_0\otimes B_1$ (resp. $A_1\otimes B_0$), denoted by $p$ (resp. $q$) so that $w=p\oplus q$.
    We show that if $w\in\ker\partial_1^{A\otimes B}$ and $|w|<\min\{d_A,d_B\}$, then $w\in\im\partial_2^{A\otimes B}$.
    Note that all $u\in A_1(p)$ has weight $|u|<d_A$ since $\dim A_1(p) = |\mathcal{A}_1(p)| \leq |w| < d_A$.
    Thus, because $|u'|\geq d_A$ for all $u'\in\ker\partial_1^A\setminus\{0\}$ it must hold that 
    \begin{equation}
        u\in A_1(p) \Rightarrow u\notin (\ker\partial_1^A\setminus\{0\}) \,.
    \end{equation}
    By a similar argument, all $v\in B_1(q)$ has weight $|v|<d_B$ since $\dim B_1(q) = |\mathcal{B}_1(q)| \leq |w| < d_B$, so that
    \begin{equation}
        v\in B_1(q) \Rightarrow v\notin(\ker\partial_1^B\setminus\{0\}) \,.
    \end{equation}
    Therefore the first homology group of $A(p)$ and $B(q)$ are trivial, i.e.
    \begin{equation}
        H_1(A(p)) = \{0\}
        \quad\text{and}\quad
        H_1(B(q)) = \{0\} \,,
    \end{equation}
    which implies that the first homology group of $C(p,q)$ is also trivial:
    \begin{equation}
    \begin{aligned}
        H_1(C(p,q)) = H_0(A(p)) \otimes H_1(B(q)) \oplus H_1(A(p)) \otimes H_0(B(q)) = \{0\} \,,
    \end{aligned}
    \end{equation}
    where the first equality is by the K\"unneth formula.
    
    Now, by the assumption that $w=p\oplus q\in\ker\partial_1^{A\otimes B}$ we have $w\in\ker\partial_1^{C(p,q)}$ because $w=p\oplus q \in C(p,q)_1 = A_0 \otimes B_1(q) \oplus A_1(p)\otimes B_0 \subseteq (A\otimes B)_1$.
    But because $H_1(C(p,q))=\{0\}$ and because $\im\partial_2^{C(p,q)}\subseteq\ker\partial_1^{C(p,q)}$ (due to $\partial_1^{C(p,q)}\partial_2^{C(p,q)}=0$) we have $\im\partial_2^{C(p,q)}=\ker\partial_1^{C(p,q)}$.
    So there exists some $w'\in C(p,q)_2 = A_1(p)\otimes B_1(Q)$ such that $\partial_2^{C(p,q)}(w') = w$.
    Since $A_1(p)\otimes B_1(Q) \subseteq A_1\otimes B_1 = (A\otimes B)_2$ therefore $w'\in (A\otimes B)_2$ such that $\partial_2^{A\otimes B}(w') = \partial_2^{C(p,q)}(w') = w$.
    Thus $w\in\im\partial_2^{A\otimes B}$.

    By the same line of argument, we can also show that if $w\in\ker(\partial_2^{A\otimes B})^\intercal$ and $|w|<\min\{d_A^\intercal,d_B^\intercal\}$, then $w\in\im(\partial_1^{A\otimes B})^\intercal$. Thus, combining both bounds we have
    \begin{equation}
        d = \min\{d_X,d_Z\} \geq \min\{d_A,d_A^\intercal,d_B,b_B^\intercal\} \,,
    \end{equation}
    where $d_X,d_Z$ are given in~\cref{eqn:HGP_XZ_distances}.

    Now we show that $d \leq \min\{d_A,d_A^\intercal,d_B,b_B^\intercal\}$. 
    We adopt the proof of Eq. (19) in~\cite{zeng2019}.
    For $j\in\{0,1\}$, let $\mathcal{A}_j$ and $\mathcal{B}_j$ be sets of basis for $H_j(A)$ and $H_j(B)$, respectively.
    Then by the K\"unneth formula
    \begin{align}
        H_1(A\otimes B) &= H_0(A)\otimes H_1(B) \oplus H_1(A)\otimes H_0(B) \\
        &= (A_0/\im\partial_1^A)\otimes(\ker\partial_1^B/\{0\}) \oplus (\ker\partial_1^A/\{0\})\otimes(B_0/\im\partial_1^B) \,.
    \end{align}
    $\mathcal{A}_0\otimes\mathcal{B}_1 \cup \mathcal{A}_1\otimes\mathcal{B}_0$ is a basis of $H_1(A\otimes B)$. Suppose that $\mathcal{A}_1$ (resp. $\mathcal{B}_1$) contains a class $[u]\in H_1(A)$ where $u\in \ker\partial_1^A$ is a minimum weight codeword i.e. $|u|=d_A$ (resp. contains class $[v]\in H_1(B)$ such that $v\in \ker\partial_1^B$ and $|v|=d_B$). Note also that for a basis element $e\in A_0\setminus\im\partial_1^A$ (which always exist since $\partial_1^A$ is not full-rank), the zeroth homology group $H_0(A)=A_0/\im\partial_1^A$ contains a class $[e]=\{e+f : f\in\im\partial_1^A\}$. Since $[e]$ contains $e$, which is a basis element (of weight 1), therefore $\Tilde{d}_A := \min\{|t| : [t]\in H_0(A)\} = 1$. By a similar argument, we also have $\Tilde{d}_B := \min\{|t| : [t]\in H_0(B)\} = 1$. Combining these facts with $\mathcal{A}_0\otimes\mathcal{B}_1 \cup \mathcal{A}_1\otimes\mathcal{B}_0$ being a basis of $H_1(A\otimes B)$, we have
    \begin{align}
        d_X \leq \min \{\Tilde{d}_A d_B , d_A \Tilde{d}_B\} = \min\{d_B , d_A\}\,.
    \end{align}
    By considering cohomology group 
    \begin{align}
        H^1(A\otimes B) &= H^0(A)\otimes H^1(B) \oplus H^1(A)\otimes H^0(B)\\
        &= (\ker(\partial_1^A)^\intercal/\{0\}) \otimes (B_1/\im(\partial_1^B)^\intercal) \oplus (A_1/\im(\partial_1^A)^\intercal) \otimes (\ker(\partial_1^B)^\intercal/\{0\})\,,
    \end{align}
    the same line of argument give us
    \begin{align}
        d_Z \leq \min\{\Tilde{d}_A d_B^\intercal, d_A^\intercal \Tilde{d}_B\} = \min\{d_B, d_A\} \,.
    \end{align}
    Therefore, the distance of the HGP code is upper bounded as
    \begin{align}
        d = \min\{d_X,d_Z\} \leq \min\{d_A,d_B,d_A^\intercal,d_B^\intercal\} \,,
    \end{align}
    which completes the proof.
\end{proof}

\subsection{Qudit La-cross codes}\label{subsec:qudit-la-cross-codes}
We now specify the details of a specific type of qudit HGP code, namely the quditized \emph{La-cross codes}~\cite{pecorari2025}. The La-cross code is obtained by taking both seed codes of the HGP code to be a length-$n_c$ cyclic code described by parity-check matrix $H$, which gives a CSS code with check matrices
\begin{align}
    H_X &= [I_n\otimes H | H^\intercal\otimes I_{n^\intercal}] \\
    \intertext{and}
    H_Z &= [H \otimes I_n | -I_{n^\intercal}\otimes H^\intercal] \,.
\end{align}
Again, we take the block coefficients to be $\gamma_1 = \gamma_2 = \delta_1 = 1$ and $\delta_2 = -1$. Before moving on with the qudit La-cross code construction, we formally define cyclic codes over $\field_q$ and discuss their properties relevant to La-cross codes:

\begin{definition}[$\field_q$ cyclic codes]\label{def:fq-cyclic-codes}
    A length-$n$ linear code $C \subseteq \field_q^n$ is a \emph{cyclic code} if, for any $c = (c_0, \ldots, c_{n-1}) \in C$, it holds that $(c_{n-1}, c_0, \ldots, c_{n-2})\in C$.
\end{definition}

A cyclic code $C$ can be described in terms of polynomials in a quotient polynomial ring $\mathcal{R} := \field_q[x]/(x^n-1)$, where a codeword $c \in C$ corresponds to a polynomial $\defsum{i=0}{n-1}{c_i x^i}$ (for a more extensive discussion, see, for example,~\cite[Chapter 4]{huffman2003}). In its polynomial representation, $C$ can be described as an ideal $I(C)$ of $\mathcal{R}$ (see~\cref{def:ring-ideal} and discussion in~\cref{subsec:fields-rings-primer}). If we take $I(C)$ to be the set of polynomials $I(C) = \{\defsum{i=0}{n-1}{c_ix^i} : c \in C\}$, it holds that:
\begin{enumerate}
    \item The zero polynomial $0$ is in $I(C)$ since $0^n\in C$.
    \item For polynomials $\defsum{i=0}{n-1}{c_ix^i}$ and $\defsum{i=0}{n-1}{c_i'x^i}$ in $I(C)$, it holds that $\defsum{i=0}{n-1}{(c_i+c_i')x^i}$, since for $c,c' \in C$ we have $c + c' \in C$.
    \item For any $f' = \defsum{i=0}{n-1}{f_i'x^i} \in \mathcal{R}$ and $\defsum{i=0}{n-1}{c_ix^i} \in I(C)$, it holds that the polynomial $f = f' \defsum{i=0}{n-1}{c_ix^i}$ is in $I(C)$ since
    \begin{align}
        f = \defsum{j=0}{n-1}{f_j'\defsum{i=0}{n-1}{c_ix^{i+j}}}
    \end{align}
    is nothing but an $\field_q$-linear combination of $\{ \text{shift}_j(c)\}_j$, where $\text{shift}_j(c)$ is a shift of elements of $c$ to the right $j$ times, that is, $c \mapsto (c_{n-1-j}, \ldots, c_{n-1}, c_0, \ldots, c_{n-j-2})$. Namely, $\defsum{j}{}{f_j'\,\text{shift}_j(c)} \in C$.
\end{enumerate}
Thus, we can always describe $C$ as an ideal $I(C) \subseteq \mathcal{R}$. On the other hand, since $\mathcal{R}$ is a principal ideal ring (see~\cref{def:principal-ideal}), there always exists a \emph{generator polynomial} $g \in \mathcal{R}$ such that $C$ is described by an ideal
\begin{align}
    I = \< g \>_{\mathcal{R}} = \{fg : f\in \mathcal{R}\}\,.
\end{align}
We may always replace $g$ by the greatest common divisor of $g$ and $x^n-1$ in $\field_q[x]$, so we can assume $g$ divides $x^n-1$ when viewed as polynomials in $\field_q[x]$. For a polynomial $f = \defsum{i=0}{n-1}{f_i x^i} \in \mathcal{R}$, the elements of $I$ are of the form
\begin{align}
    fg = \defsum{i,j=0}{n-1}{f_ig_j x^{i+j}}\,,
\end{align}
where the polynomial $f = \defsum{i=0}{n-1}{f_i x^i} \in I$ corresponds to codeword $(f_0, \ldots, f_{n-1})$. For any given generator polynomial $g$, it can be checked that for an ideal $I = \<g\>_{\mathcal{R}}$, the set $C(I) = \{(f_0, \ldots, f_{n-1}) : f \in \<g\>_{\mathcal{R}}\} \subseteq \field_q^n$ is a cyclic code:
\begin{enumerate}
    \item $0^n\in C(I)$ because $0 \in I$.
    \item If $(c_0, \cdots, c_{n-1}), (c_0', \cdots, c_{n-1}') \in C(I)$, then $(c_0 - c_0', \cdots, c_{n-1} - c_{n-1}') \in C(I)$ because for $\defsum{i=0}{n-1}{c_ix^i}$ and $\defsum{i=0}{n-1}{c_i'x^i}$ in $I$, it must be that $\defsum{i=0}{n-1}{(c_i-c_i')x^i}$ is also in $I$.
    \item If $(c_0, \cdots, c_{n-1}) \in C(I)$, then $(c_{n-1}, c_0, \cdots, c_{n-2}) \in C(I)$ because $\defsum{i=0}{n-1}{c_ix^i}$ in $I$ implies that $x\defsum{i=0}{n-1}{c_ix^i} = \defsum{i=0}{n-1}{c_ix^{i+1}} = c_{n-1} + \defsum{i=0}{n-1}{c_{i-1}x^{i}}$ is in $I$ (using that $x^n = 1$ in $\mathcal{R}$).
\end{enumerate}
Thus, we can see that for any $c \in C(I)$, it holds that $(c_{n-1}, c_0, \ldots, c_{n-2}) \in C(I)$, which implies that $C(I)$ is a cyclic code.

A parity-check matrix $H$ of a code $C$ (i.e., a matrix $H$ such that $Hc = 0$ for all $c \in C$) can also be described by a polynomial $h \in \field_q[x]$, called the \emph{check polynomial}, which we encountered for the coprime qudit bivariate bicycle codes in~\cref{subsec:coprime-qudit-bb-codes}. For a generator polynomial $g$ (dividing $x^n-1$) of the ideal $I$, the check polynomial $h$ is given by
\begin{align}\label{eqn:cyclic_code_check_polynomial}
    h = \frac{x^n-1}{g}
\end{align}
so that for a polynomial $c = fg = \defsum{i,j=0}{n-1}{f_ig_j x^{i+j}} \in I = \< g \>_{\mathcal{R}}$ for some $f \in \mathcal{R}$, it holds that
\begin{align}
    hc = hfg = (x^n-1)f = 0\,,
\end{align}
since $x^n-1 = 0$ in $\mathcal{R} = \field_q[x]/(x^n-1)$.

The dimension $k$ of an $[n,k]$ cyclic code $C$ with generator polynomial $g$ and check polynomial $h$ can be determined from either the degree of $g$ or the degree of $h$. In particular, the dimension $k$ of $C$ is given by
\begin{align}
    k = n - \text{deg}(g)\,,
\end{align}
as can be seen in, for example,~\cite[Theorem 4.2.1]{huffman2003}). Since $hg = x^n-1$ by the definition of the check polynomial of $h$ in~\cref{eqn:cyclic_code_check_polynomial}, the degree of $h$ is $\text{deg}(h) = n - \text{deg}(g)$. Therefore, the dimension of $C$ is given by the degree of its check polynomial:
\begin{align}\label{eqn:dimension_cyclic_code_check_polynomial}
    k = \text{deg}(h)\,.
\end{align}
The La-cross code is then defined by taking a cyclic code $C$ with parity check matrix $H$ to be the seed codes of an HGP code:

\begin{definition}[Qudit La-cross code]\label{def:qudit-la-cross-code}
    A \emph{qudit La-cross code} is a CSS code with parity check matrices given by
    \begin{align}
        H_X &= [I_n\otimes H | H^\intercal\otimes I_{n^\intercal}]\\
        \intertext{and}
        H_Z &= [H \otimes I_n | -I_{n^\intercal}\otimes H^\intercal]\,,
    \end{align}
    where $H$ is the parity-check matrix of a cyclic code $C \leq \field_q^n$ with check polynomial $h(x) = h_0 + h_1x + h_kx^k \in \field_q[x]$ for some $k \leq n$ and nonzero $h_0, h_1, h_k \in \field_q$. For cyclic code parity-check matrix $H$ with $\rank{H} = n - k$, if $H$ is a matrix in $\field_q^{n\times n}$, then the La-cross code defined by $H$ is a La-cross code with \emph{periodic boundaries}. On the other hand, if $H$ is a matrix in $\field_q^{(n-k)\times n}$, then the La-cross code defined by $H$ is a La-cross code with \emph{open boundaries}. 
\end{definition}

With this definition in hand, we can express the parameters $\stabcode{n}{k}{d}_q$ of a qudit La-cross code with periodic boundaries and open boundaries in terms of the cyclic code which are defined by:

\begin{proposition}[Periodic boundary qudit La-cross code parameters]\label{prop:periodic-boundary-qudit-la-cross}
    For a cyclic code $C \leq \field_q^n$ with parity-check matrix $H \in \field_q^{n\times n}$ with check polynomial $h(x) = h_0 + h_1x + h_kx^k \in \field_q[x]$ for $k\leq n$, a qudit La-cross code defined by $H$ with periodic boundaries has code parameters $\stabcode{2n^2}{2k^2}{\min\{d,d^\intercal\}}_q$. Here, $d$ and $d^\intercal$ are the distance of the code $C$ and $C^\intercal$, respectively, where $C^\intercal$ is the code corresponding to parity-check matrix $H^\intercal$, which is also a cyclic code.
\end{proposition}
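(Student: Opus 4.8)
The plan is to recognize the periodic-boundary qudit La-cross code as the qudit hypergraph product code of~\cref{subsec:qudit-hgp-homology} built from a single cyclic seed code, and then simply read off its parameters from~\cref{prop:qudit-hgp-dimension}. Concretely, take $H$ to be the $n \times n$ circulant matrix realizing the check polynomial $h$, and set $C_A = C_B = C = \ker{H} \leq \field_q^n$ with boundary maps $\partial_1^A = \partial_1^B = H$. Comparing the check matrices in~\cref{def:qudit-la-cross-code} with those of the homology-framework HGP code, they agree up to a permutation of the physical qudits and a relabeling of the two tensor factors, neither of which changes $\stabcode{n}{k}{d}_q$. Since $H \in \field_q^{n \times n}$ in the periodic case, the domain and codomain of $\partial_1^A$ are both $\field_q^n$, so in the notation of~\cref{prop:qudit-hgp-dimension} we have $n_A = n_A^\intercal = n_B = n_B^\intercal = n$.

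Next I would pin down the remaining seed parameters. By hypothesis $C$ is a cyclic code of length $n$ with check polynomial $h(x) = h_0 + h_1 x + h_k x^k$, so $\deg h = k$ and hence, by~\cref{eqn:dimension_cyclic_code_check_polynomial}, $\dim C = k_A = k_B = k$ and $\rank{H} = n - k$; its distance is $d_A = d_B = d$. For the transposed seed $C^\intercal = \ker{H^\intercal}$, the key observation is that the transpose of a circulant matrix is again circulant (it realizes the reciprocal polynomial), so $H^\intercal$ is itself the parity-check matrix of a cyclic code; thus $C^\intercal$ is cyclic with some distance $d_A^\intercal = d_B^\intercal = d^\intercal$, as the proposition asserts. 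Its dimension is $k_A^\intercal = k_B^\intercal = n - \rank{H^\intercal} = n - \rank{H} = k$, using $\rank{H^\intercal} = \rank{H}$.

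It then remains to check the hypothesis of~\cref{prop:qudit-hgp-dimension}, namely that $\partial_1^A, \partial_1^B, (\partial_1^A)^\intercal, (\partial_1^B)^\intercal$ are not full rank: all four are $n \times n$ matrices of rank $n - k$, and since $h$ carries a nonzero term $h_k x^k$ with $k \geq 1$ we have $n - k < n$, so indeed none of them is full rank (the degenerate case $k = 0$, where $C = \{0\}$ and $2k^2 = 0$, can be dismissed separately). With the hypothesis in place, substituting $n_A = n_A^\intercal = n_B = n_B^\intercal = n$, $k_A = k_A^\intercal = k_B = k_B^\intercal = k$, $d_A = d_B = d$ and $d_A^\intercal = d_B^\intercal = d^\intercal$ into the formulas of~\cref{prop:qudit-hgp-dimension} gives length $n_A n_B^\intercal + n_A^\intercal n_B = 2n^2$, dimension $k_A k_B^\intercal + k_A^\intercal k_B = 2k^2$, and distance $\min\{d_A, d_A^\intercal, d_B, d_B^\intercal\} = \min\{d, d^\intercal\}$, which is the claim.

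There is no serious obstacle here, since the statement is essentially a corollary of~\cref{prop:qudit-hgp-dimension}. The only points that require a little care are matching the tensor-product and transpose conventions of~\cref{def:qudit-la-cross-code} to the HGP framework up to an equivalence of codes, confirming that the transposed seed $C^\intercal$ is a cyclic code and computing its dimension, and verifying the non-full-rank hypothesis.
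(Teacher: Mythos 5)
Your proposal is correct and follows essentially the same route as the paper: both reduce the statement to \cref{prop:qudit-hgp-dimension} with $C_A = C_B = C$, use $k = \deg(h)$ for the seed dimension, and observe that $C^\intercal$ is cyclic with parity-check matrix $H^\intercal$ and the same dimension $k$. Your version is slightly more careful in that it explicitly verifies the non-full-rank hypothesis and justifies $k^\intercal = k$ via $\rank{H^\intercal} = \rank{H}$, but the argument is the same.
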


\begin{proof}
    By~\cref{prop:qudit-hgp-dimension}, an HGP code with seed codes $C_A$ and $C_B$ with $C_A = C_B \equiv C$ has parameters $\stabcode{2n^2}{k^2 +(k^\intercal)^2}{D}$, where $D = \min\{d, d^\intercal\}$. For the qudit La-cross code, the seed code $C$ is a cyclic code with parity-check matrix $H$ and parameters $[n,k,d]$ (where $k = \deg(h)$ as noted in~\cref{eqn:dimension_cyclic_code_check_polynomial}). On the other hand, the transpose code $C^\intercal$ is also a cyclic code with parity check matrix $H^\intercal$ and parameters $[n,k,d^\intercal]_q$.
\end{proof}

Likewise, we have

\begin{proposition}[Open boundary qudit La-cross code parameters]\label{prop:open-boundary-qudit-la-cross}
    For a cyclic code $C \leq \field_q^n$ with parity-check matrix $H \in \field_q^{n\times n}$ with check polynomial $h(x) = h_0 + h_1x + h_kx^k \in \field_q[x]$ for $k\leq n$, a qudit La-cross code defined by $H$ with open boundaries has code parameters $\stabcode{n^2+(n-k)^2}{k^2}{d}_q$. Here, $d$ is the distance of code $C$.
\end{proposition}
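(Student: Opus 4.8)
The plan is to realize the open-boundary qudit La-cross code as a tensor product of two length-$2$ chain complexes and then run — and, at the one place where it breaks, repair — the argument behind \cref{prop:qudit-hgp-dimension}. Let $H\in\field_q^{(n-k)\times n}$ be the banded Toeplitz parity-check matrix associated to $h(x)=h_0+h_1x+h_kx^k$, whose row $r$ carries $h_0,h_1$ in columns $r,r+1$ and $h_k$ in column $r+k$. Matching the parity-check matrices $H_X=[I_n\otimes H | H^\intercal\otimes I_{n^\intercal}]$, $H_Z=[H\otimes I_n | -I_{n^\intercal}\otimes H^\intercal]$ of \cref{def:qudit-la-cross-code} against the expressions for $H_X=\partial_1^{A\otimes B}$ and $H_Z=(\partial_2^{A\otimes B})^\intercal$ near \cref{eqn:hypergraph-code-chain-complex}, one reads off that this is the hypergraph product of the complex $A$ with $\partial_1^A=H^\intercal$ (so $n_A=n-k$, $n_A^\intercal=n$; the chain complex of $\ker H^\intercal$) and the complex $B$ with $\partial_1^B=H$ (so $n_B=n$, $n_B^\intercal=n-k$; the chain complex of $C=\ker H$). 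The key structural fact is that $H$ has full row rank $n-k$: the submatrix on columns $0,\dots,n-k-1$ is upper triangular with the nonzero constant $h_0$ on its diagonal. Hence $\partial_1^B=H$ is surjective and $\partial_1^A=H^\intercal$ is injective, which is exactly why the ``not full rank'' hypothesis of \cref{prop:qudit-hgp-dimension} is violated and that proposition cannot be quoted directly. (We assume $k\ge 1$; if $k=0$ then $C=\{0\}$, the code encodes $k^2=0$ qudits, and there is nothing to prove.)

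The block count and the dimension follow from the parts of the proof of \cref{prop:qudit-hgp-dimension} that do not use the full-rank hypothesis. The data qudits form $(A\otimes B)_1=(A_0\otimes B_1)\oplus(A_1\otimes B_0)\cong\field_q^{n^2}\oplus\field_q^{(n-k)^2}$, so there are $n^2+(n-k)^2$ of them. For the dimension, the K\"unneth formula gives $k=\dim H_1(A\otimes B)=\dim H_1(A)\dim H_0(B)+\dim H_0(A)\dim H_1(B)$; here $H_1(A)=\ker H^\intercal=\{0\}$ and $H_0(B)=\field_q^{n-k}/\operatorname{im}H=\{0\}$ since $H$ is full rank, while $H_0(A)=\field_q^n/\operatorname{im}H^\intercal$ and $H_1(B)=\ker H=C$ both have dimension $k$, so $k=0\cdot 0+k\cdot k=k^2$. (One may instead observe that $\operatorname{im}(I_n\otimes H)=\field_q^n\otimes\operatorname{im}H$ is the full row space of $H_X$, whence $\rank H_X=\rank H_Z=n(n-k)$ and $k=n^2+(n-k)^2-2n(n-k)=k^2$.)

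The remaining and substantive task is to show $d=d(C)$, where $d(C)$ is the minimum distance of $C$; I would establish $d_X=d_Z=d(C)$. For the upper bound, take a minimum-weight $c\in C$ and a standard basis vector $e_i$ with $e_i\notin\operatorname{im}H^\intercal$ — one exists since $\operatorname{im}H^\intercal$ is a proper subspace of $\field_q^n$. Then $e_i\otimes c\in A_0\otimes B_1\subseteq(A\otimes B)_1$ is a $1$-cycle (the $\partial^A$-term vanishes as $e_i$ is a $0$-cell of $A$, and the $\partial^B$-term is $e_i\otimes Hc=0$) whose class under K\"unneth is $[e_i]\otimes[c]\in H_0(A)\otimes H_1(B)$, nonzero because $[e_i]\neq0$ and $c\neq0$; its weight equals $|e_i|\,|c|=d(C)$, so $d_X\le d(C)$, and the analogous construction using the cohomology $H^1(A\otimes B)$ (where $H^0(A)=C$ supplies the factor $d(C)$ and $H^1(B)=\field_q^n/\operatorname{im}H^\intercal$ the factor $1$) gives $d_Z\le d(C)$. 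For the lower bound I would adapt the local-restriction argument in the proof of \cref{prop:qudit-hgp-dimension}: given $w=p\oplus q\in\ker\partial_1^{A\otimes B}$ with $p\in A_1\otimes B_0$, $q\in A_0\otimes B_1$ and $|w|<d(C)$, restrict $A$ and $B$ to the subcomplexes $A(p)\colon A_1(p)\to A_0$ and $B(q)\colon B_1(q)\to B_0$ spanned by the $A_1$- and $B_1$-cells occurring in $p$ and $q$; then $H_1(A(p))\subseteq\ker H^\intercal=\{0\}$ for free, and $H_1(B(q))=B_1(q)\cap C=\{0\}$ because $\dim B_1(q)\le|w|<d(C)$ is too small for $B_1(q)$ to contain a nonzero codeword of $C$, so $H_1(A(p)\otimes B(q))=\{0\}$ by K\"unneth, forcing $w\in\operatorname{im}\partial_2^{A\otimes B}$ (its class in $H_1(A\otimes B)$ vanishes). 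This gives $d_X\ge d(C)$, the cohomological version gives $d_Z\ge d(C)$, and hence $d=\min\{d_X,d_Z\}=d(C)$. The main obstacle is precisely this repair of \cref{prop:qudit-hgp-dimension} around the failure of its full-rank hypothesis; the helpful twist is that the asymmetry ($\ker H^\intercal=\{0\}$, $H$ surjective) makes the ``$A$-side'' of the restriction argument automatic and collapses the required weight bound to the single clean condition $|w|<d(C)$, so that the genuine combinatorial content is only the statement that a coordinate subspace of $\field_q^n$ of dimension below $d(C)$ meets $C$ trivially.
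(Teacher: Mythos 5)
Your proof is correct, and it is in fact more careful than the one in the paper. The paper's proof of \cref{prop:open-boundary-qudit-la-cross} simply invokes \cref{prop:qudit-hgp-dimension} and observes that the transpose code has dimension zero; but, as you point out, the stated hypothesis of \cref{prop:qudit-hgp-dimension} --- that $\partial_1^A$, $\partial_1^B$ and their transposes are \emph{not} full rank --- is exactly what fails in the open-boundary case, where $H$ has full row rank $n-k$ and $H^\intercal$ is injective. The paper moreover leaves the distance as $\min\{d,d^\intercal\}$ with $d^\intercal$ the ``distance'' of a dimension-zero code, which resolves to $d$ only by convention. Your repair supplies precisely the missing pieces along the same hypergraph-product/K\"unneth route: the dimension count still gives $k^2$ because the terms $H_1(A)\otimes H_0(B)$ vanish; the upper bounds $d_X,d_Z\le d(C)$ are recovered by pairing a minimum-weight codeword of $C$ with a weight-one representative of the surviving degree-zero (co)homology factor (and such a standard basis vector outside $\im H^\intercal$ exists since $\im H^\intercal$ is proper); and the lower bound goes through via the restriction argument of \cref{prop:qudit-hgp-dimension} because $\ker H^\intercal=\{0\}$ makes the $A$-side vanishing automatic, leaving only the condition $\abs{w}<d(C)$. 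Two cosmetic remarks: the upper-triangular-submatrix justification of $\operatorname{rank}(H)=n-k$ depends on the convention used to build $H$ from $h$, but \cref{def:qudit-la-cross-code} already stipulates this rank, so nothing is at stake; and in your parenthetical rank computation, ``row space of $H_X$'' should read ``image (column space) of $H_X$,'' though the conclusion $\operatorname{rank}(H_X)=\operatorname{rank}(H_Z)=n(n-k)$ and hence $k=k^2$ is right.
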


\begin{proof}
    Again we use~\cref{prop:qudit-hgp-dimension}. The code $C'$ defined by parity-check matrix $H \in \field_q^{(n-k)\times k}$ and check polynomial $h$ has parameters $[n,k,d]_q$, where $k = \deg(h)$. However, in this case the transpose code $(C')^\intercal$ has parity-check matrix $H^\intercal \in \field_q^{n\times(n-k)}$ with rank $n-k$. By the rank-nullity theorem,  the transpose code then has dimension 0. Thus the open boundary La-cross code has $n^2+(n-k)^2$ data qudits, dimension $k^2+(k^\intercal)^2 = k^2$ and distance $\min\{d,d^\intercal\}$.
\end{proof}

In~\cite[Fig. 6]{pecorari2025}, it is observed that La-cross code with open boundaries, defined by polynomials of the form $1 + x + x^k$ for $k \in \{2,3,4\}$ incurs smaller logical errors compared to surface codes with the same number of data qubits and the same number of encoded logical qubits. As such, we restrict our numerical code search to polynomials of this form, except we now allow for coefficients in $\field_q$: $\alpha_0 + \alpha_1 x + \alpha_2 x^k$, where $\alpha_i \in \field_q$ for $i = 0,1,2$. We find several codes, where the most promising codes for each value of $q \in \{3,5,7\}$ that we consider are the $\stabcode{89}{9}{5}_3$, $\stabcode{52}{4}{5}_5$, and $\stabcode{34}{4}{4}_7$ codes, respectively; see \cref{tab:qudit-la-cross-codes}.

\begin{table}[b]
    \centering
    \begin{tblr}{colspec={|Q[c,m]|Q[c,m]|Q[c,m]|},row{even}={bg=lightgray}}
        \hline
        $\stabcode{n}{k}{d}_q$ & $[n_c,k_c]$ & $\alpha_0 + \alpha_1 x + \alpha_2 x^k$\\
        \hline[1pt]
        $\stabcode{89}{9}{5}_3$ & $[8,3]$ & $2 + x + x^3$\\
        \hline
        $\stabcode{52}{4}{5}_5$ & $[6,2]$ & $4 + 4x + 3x^2$\\
        \hline
        $\stabcode{34}{4}{4}_7$ & $[5,2]$ & $6 + 5x + x^2$\\
        \hline
    \end{tblr}
    \caption{New qudit La-cross codes of local qudit dimension $q$ with qudit code parameters $n = n_c^2 + (n_c-k_c)^2$, $k = k_c^2$, and $d$ and classical seed code parameters $n_c$ and $k_c$.}
    \label{tab:qudit-la-cross-codes}
\end{table}

We perform a numerical search for the qudit La-cross codes in a similar way that we did for the qudit bivariate bicycle codes. However, the search for qudit La-cross codes is simpler as we only consider a single polynomial of the form $\alpha_0 + \alpha_1 x + \alpha_2 x^k$, where $\alpha_0, \alpha_1, \alpha_2 \in \field_q$ and $k \in \{2,3,4\}$. In our search, we take $q \in \{3,5,7\}$. We only consider open boundary conditions, meaning the classical seed code parity-check matrix is rectangular and lies in $\field_q^{(n-k) \times n}$, where we take the code length $n \in \{3, \ldots, 7\}$. Rather than iterating through the entire search space of parameters $\{q, n, k, \alpha_0, \alpha_1, \alpha_2\}$, which has a lot of redundancy (i.e., the same resulting from code from different input parameters), we choose random combinations of input parameters and search for $\sim 10^3$ codes. We deem a code ``promising'' if it has distance at least 4 and encoding rate $r = k/2n$ at least $1/50$. We present these promising codes in~\cref{tab:qudit-la-cross-codes} and the decoding results in~\cref{fig:qudit-la-cross-decoding}. We use the same mixed integer programming-based distance finding algorithm and decoder that we used for the qudit bivariate bicycle codes, as discussed in~\cref{subsec:qudit-bb-code-numerics}.

\begin{figure}[tb]
    \centering
    \includegraphics[scale=0.4]{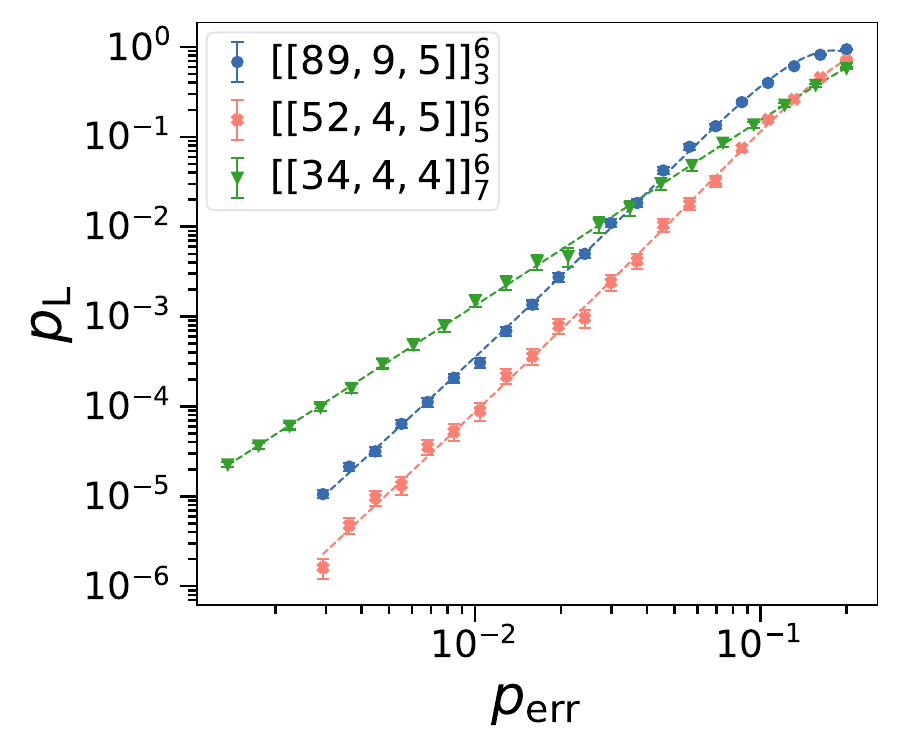}
    \caption{Code capacity of the best found qudit La-cross codes. We plot logical error $p_\text{L}$ against physical error $p_{\text{err}}$. The subscript on each code in the legend corresponds to the local qudit dimension $q$. We fit with~\cref{eq:heuristic_fit}  where we find $d_\text{fit}=\{4.7,4.8,3.1\}$. See~\cite{pecorari2025} for comparable qubit La-cross codes.}
    \label{fig:qudit-la-cross-decoding}
\end{figure}

\section{Qudit subsystem hypergraph product simplex codes}\label{sec:qudit-shyps-codes}
We now quditize subsystem hypergraph product (SHP) codes~\cite{li2020}, specifically the subsystem hypergraph product simplex (SHYPS) codes~\cite{malcolm2025}. SHYPS codes are a combination of SHP codes and classical simplex codes and allow for efficient implementation of Clifford operations. As their name suggests, SHYPS codes are a type of of \emph{quantum subsystem code}, which can be thought of as a generalization of stabilizer codes~\cite{liu2024}. We first give an overview of classical simplex codes over $\field_q$, then an overview of SHP codes for qubits, and finally combine these into qudit SHYPS codes.

\subsection{Classical simplex codes}\label{subsec:classical-simplex-codes}
We start by reviewing classical binary simplex codes and then generalize to $q$-ary simplex codes. We take $r$ to be an integer greater than or equal to 3, $r \geq 3$, and define $n_r := 2^r - 1$ and $d_r := 2^{r-1}$. Then, the classical simplex codes $C(r)$ are a family of $[n_r, r, d_r]$-linear codes that are dual to the Hamming codes. For $3 \leq r < 500$, it is known that there is a polynomial $h(x) = 1 + x^a + x^b \in \field_2[x]/(x^{n_r}-1)$ such that $\gcd(h(x), x^{n_r}-1)$ is a primitive polynomial of degree $r$. The $n_r \times n_r$ matrix
\begin{align}
    H = 
    \begin{pmatrix}
        h(x)\\
        xh(x)\\
        \vdots\\
        x^{n_r - 1}h(x)
    \end{pmatrix}
\end{align}
then defines a parity-check matrix for $C(r)$, we use the polynomial notation for cyclic matrices (similar to that used for the bivariate bicycle codes in~\cref{sec:qudit-bb-codes}):
\begin{align}
    \defsum{i=0}{n_r - 1}{a_ix^i \pmod{x^{n_r} - 1}} \mapsto (a_0, \ldots, a_{n_r - 1})\,,
\end{align}
where $(a_0, \ldots, a_{n_r - 1}) \in \field_2^{n_r}$. These codes have parameters $[2^r-1, r, 2^{r-1}]$.

Over $\field_q$, we have the $q$-ary simplex codes, denoted $S(q,r)$, with code parameters $[n_r,r,q^{r-1}]_q$, where, as usual, $q$ is a prime power and $n_r = (q^r - 1)/(q - 1)$. We denote by $G$ the generator matrix of this code, whose columns form a complete set of representatives for the elements of the projective space $PG(r-1,q)$. To see this, note that to construct a $q$-ary simplex code, each column of $G$ is taken to be a representative nonzero vector $\vb{v} \in \field_q^r$ from a 1-dimensional subspace, where by "1-dimensional subspace" we mean a nonzero vector and all of its scalar multiples (including zero). Likewise, the projective space $PG(r-1,q)$ is defined as
\begin{equation}
    PG(r-1,q) := \{\langle \vb{v} \rangle : \vb{v} \in \field_q^r \setminus \{\vb{0}\}\}\,,
\end{equation}
where $\langle \vb{v} \rangle = \{\alpha\vb{v} : \alpha \in \field_q\}$. That is, $PG(r-1,q)$ is exactly the set of nonzero vectors in $\field_q^r$ up to scalar multiples, which is, of course, the set of 1-dimensional subspaces of $\field_q^r$. Because $G$ contains exactly one representative from each of these subspaces, there is a one-to-one correspondence between the columns of $G$ and the points of $PG(r-1,q)$.

If we write $G$ in standard form as $G = [I_k | P]$, we can use the constraint that $GH^\intercal = 0$, where $H$ is the parity-check matrix, to find that $H = [-P^\intercal | I_{n-k}]$. Alternatively, given $G$, we can find $H$ by using the relation $GH^\intercal = 0$, as mentioned in~\cref{subsec:coding-theory-primer}.

\subsection{Subsystem codes}\label{subsec:shp-codes}
Quantum subsystem codes are a generalization of stabilizer codes in that they are defined with respect to a subgroup $\mathcal{G} \subseteq \mathcal{P}_n$ called the \emph{gauge group}, where, recall, $\mathcal{P}_n$ is the $n$-qubit Pauli group. The effectively adds a new set of qubits to the code that we call "gauge qubits." We do not need to protect the information stored in these gauge qubits. Instead of measuring high-weight stabilizers directly, the error correction procedure involves measuring low-weight "gauge operators" (the generators of $\mathcal{G}$). The actual stabilizers are then products of these gauge operators.

We can define subsystem codes formally according to the discussion in Ref.~\cite[Sec. VIII.B]{malcolm2025}. Given $\mathcal{G}$, the corresponding stabilizer group $\mathcal{S}$ is the center of $\mathcal{G}$ modulo phases:
\begin{align}
    \langle S, iI \rangle = Z(\mathcal{G}) := C_{\mathcal{P}_n}(\mathcal{G}) \cap \mathcal{G}\,.
\end{align}
The fixed point space is nontrivial and decomposes into $C_{\mathcal{L}} \otimes C_{\mathcal{G}}$, where elements of $\mathcal{G} \backslash \mathcal{S}$ fix $C_{\mathcal{L}}$, where $C_{\mathcal{L}}$ denotes the set of logical qubits and $C_{\mathcal{G}}$ denotes the set of gauge qubits. These codes have two types of logical operators: the operators given by $C_{\mathcal{P}_n}(\mathcal{G}) \backslash \mathcal{G}$ that act nontrivially on $C_{\mathcal{L}}$, which we call \emph{bare logicals}, and the operators given by $C_{\mathcal{P}_n}(\mathcal{S}) \backslash \mathcal{G}$ that act on $C_{\mathcal{L}}$ and $C_{\mathcal{G}}$, which we call \emph{dressed logicals}. The minimum distance is given by
\begin{align}
    d = \min\{\abs{P} : P \in C_{\mathcal{P}_n}(\mathcal{S}) \backslash \mathcal{G}\}\,,
\end{align}
that is, the minimum weight Pauli that acts nontrivially on the logical qubits (i.e., the minimum weight of the dressed logicals).

\subsection{Qudit subsystem hypergraph product simplex codes}\label{subsec:qudit-shyps-codes}
We now put everything together to give our qudit subsystem hypergraph product simplex (SHYPS) codes. We start by defining the SHYPS codes for qubits~\cite{malcolm2025}:

\begin{definition}[Subsystem hypergraph product simplex code~{\cite[Def. VIII.8]{malcolm2025}}]\label{def:shyps-code}
    Let $r \geq 3$, $n_r = 2^r - 1$, $d_r = 2^{r-1}$, and $H_r$ be the parity-check matrix for the $[n_r, r, d_r]$-classical simplex code. Then, the subsystem hypergraph product of two copies of $H_r$, which we denote by SHYPS($r$), is the subsystem CSS code with gauge generators
    \begin{align}
        G_X = (H_r \otimes I_{n_r})\,, \quad G_Z = (I_{n_r} \otimes H_r)\,.
    \end{align}
    SHYPS($r$) is then called a \emph{subsystem hypergraph product simplex code}.
\end{definition}

SHYPS($r$) codes are a type of quantum LDPC code with gauge generators of weight 3. The code parameters are given in the following theorem:

\begin{fact}[SHYPS code parameters~{\cite[Thm. VIII.9]{malcolm2025}}]\label{fact:shyps-code-parameters}
    Let $r \geq 3$ and $H$ be the parity-check matrix for the $(2^r - 1, r, 2^{r-1})$-classical simplex code. The subsystem hypergraph product simplex code SHYPS($r$) is an $\stabcode{n}{k}{d}$-quantum subsystem code with gauge group generated by 3-qubit operators and
    \begin{align}
        n &= (2^r - 1)^2\,,\\
        k &= r^2\,,\\
        d &= 2^{r-1}\,.
    \end{align}
\end{fact}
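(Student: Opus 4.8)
The plan mirrors the argument in~\cite{malcolm2025}: read off $n$ and the weight-$3$ claim directly from the gauge generators, obtain $k$ from the subsystem-code dimension count together with standard facts about Kronecker products and the classical simplex code, and bound $d$ from both sides using the subsystem-code formalism of~\cref{subsec:shp-codes}. Throughout, write $n_r = 2^r - 1$ and $d_r = 2^{r-1}$, let $S = \ker H_r$ be the $[n_r,r,d_r]$ simplex code, and recall that $S^\perp = \rs{H_r}$ is the $[n_r, n_r-r, 3]$ Hamming code, so $\rk{H_r} = \rk{H_r^\intercal} = n_r - r$. It is convenient to identify an element of $\field_2^{n_r}\otimes\field_2^{n_r}$ with an $n_r\times n_r$ matrix $M$; under this identification $\rs{G_X}$ is the set of $M$ whose every column lies in $S^\perp$, $\rs{G_Z}$ is the set of $M$ whose every row lies in $S^\perp$, and the symplectic pairing of an $X$-type $M$ with a $Z$-type $M'$ is $\operatorname{tr}(M^\intercal M')$ over $\field_2$. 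Then $n = (2^r-1)^2$ is just the number of columns of $G_X = H_r\otimes I_{n_r}$, and because $H_r$ is the circulant matrix of a trinomial $h(x)$, every row (and column) of $H_r$ — hence of $G_X$ and $G_Z$ — has weight $3$, so the gauge generators are weight-$3$ operators.

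For $k$, we use that a subsystem CSS code has $\rk{G_X G_Z^\intercal}$ gauge qubits, $\rk{G_X} - \rk{G_X G_Z^\intercal}$ independent $X$-stabilizer generators and $\rk{G_Z} - \rk{G_X G_Z^\intercal}$ independent $Z$-stabilizer generators, so that $k = n - \rk{G_X} - \rk{G_Z} + \rk{G_X G_Z^\intercal}$. Here $G_X G_Z^\intercal = (H_r\otimes I_{n_r})(I_{n_r}\otimes H_r^\intercal) = H_r\otimes H_r^\intercal$, and since rank is multiplicative under $\otimes$ we get $\rk{G_X} = \rk{G_Z} = (n_r-r)n_r$ and $\rk{G_X G_Z^\intercal} = (n_r-r)^2$; substituting gives $k = n_r^2 - 2(n_r-r)n_r + (n_r-r)^2 = \big(n_r - (n_r-r)\big)^2 = r^2$.

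For the distance, the bound $d \le 2^{r-1}$ is witnessed by an explicit low-weight logical. Choose a coordinate $i$ (with standard basis vector $e_i$) and a minimum-weight codeword $c\in S$, so $\abs{c} = d_r$, and let $M$ be the matrix with $c$ in row $i$ and zeros elsewhere. Every row of $M$ lies in $S$ while every row of every $Z$-gauge generator lies in $S^\perp$, so $M$ commutes with all of $\rs{G_Z}$ and, being $X$-type, with all of $\rs{G_X}$, hence with the whole gauge group; but $M\notin\rs{G_X}$, since in each position of $\supp(c)$ the corresponding column of $M$ equals $e_i$, a weight-$1$ vector, which does not lie in the distance-$3$ code $S^\perp$. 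Thus $M$ is a (bare, hence dressed) logical of weight $\abs{c} = 2^{r-1}$.

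The lower bound is the crux. A minimum-weight dressed logical of a CSS code may be taken purely $X$- or purely $Z$-type, and the construction is symmetric under exchanging the two tensor factors (which swaps $G_X$ and $G_Z$), so it suffices to show $\abs{M} \ge 2^{r-1}$ for every $X$-type dressed logical $M$. Commutation of $M$ with the rank-one $Z$-stabilizer elements $v w^\intercal$ ($v\in S$, $w\in S^\perp$), which one checks lie in $\mathcal{S}_Z$, gives $v^\intercal M w = 0$ for all such $v,w$, i.e.\ $M S^\perp\subseteq S^\perp$, equivalently $M^\intercal S\subseteq S$; and $M\notin\rs{G_X}$ means some column of $M$ is not in $S^\perp$, which — using that the standard inner product identifies $\field_2^{n_r}/S^\perp$ with the dual of $S$ — is equivalent to $M^\intercal s \ne 0$ for some $s\in S$. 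Hence $t := M^\intercal s \in S\setminus\{0\}$, so $\abs{t}\ge d_r = 2^{r-1}$; and for each coordinate $l\in\supp(t)$, the relation $\sum_k M_{kl} s_k = t_l = 1$ forces $M_{kl} = 1$ for some $k$, so $M$ has at least $\abs{\supp(t)}\ge 2^{r-1}$ nonzero columns and therefore $\abs{M}\ge 2^{r-1}$. Combined with the upper bound, $d = 2^{r-1}$. The routine steps ($n$, the weight count, the arithmetic for $k$) are mechanical; the step I expect to require the most care is translating ``$M$ commutes with the stabilizer group'' and ``$M\notin\mathcal{G}$'' into the conditions $M^\intercal S\subseteq S$ and $M^\intercal s\ne 0$ through the Kronecker structure (and not forgetting that $S^\perp$ has distance $3$ in the $d\le 2^{r-1}$ direction) — once that is done, the counting argument closing the lower bound is immediate.
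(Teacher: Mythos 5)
Your proposal is correct, and it is both more explicit and more complete than the proof the paper gives. For $k$ the paper counts $\rank{\mathcal{L}_X} - \rank{\mathcal{S}_X}$ with $\mathcal{L}_X = I_{n_r}\otimes G$ and $\mathcal{S}_X = H\otimes G$ (bare-logical representatives modulo stabilizers), whereas you use the general subsystem-code count $k = n - \rk{G_X} - \rk{G_Z} + \rk{G_X G_Z^\intercal}$ together with multiplicativity of rank under Kronecker products; both routes are valid bookkeeping and both yield $r^2$. The substantive difference is the distance: the paper simply asserts that $d$ ``is the same as for the classical simplex code,'' while you actually prove both bounds --- the upper bound via the explicit weight-$2^{r-1}$ bare logical $e_i c^\intercal$ (correctly invoking that the Hamming code $S^\perp$ has minimum distance $3$, hence contains no weight-$1$ columns, to certify $M\notin\rs{G_X}$), and the lower bound by translating commutation with the rank-one $Z$-stabilizers $vw^\intercal$ into $M^\intercal S\subseteq S$ and then counting the nonzero columns of $M$ through $t = M^\intercal s\in S\setminus\{0\}$. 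The supporting reductions you use (a minimum-weight dressed logical of a CSS subsystem code may be taken purely $X$- or $Z$-type, and transposition of the matrix picture swaps the two sectors) are both valid here, so the argument closes; it supplies exactly the justification that the paper defers to~\cite{malcolm2025}.
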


\begin{proof}
    We reproduce the proof from \cite{malcolm2025}. We start by making an $n_r \times n_r$ lattice such that the number of physical qubits needed is $n = n_r^2$. To find the number of logical qubits $k$, we note that the Pauli operators that commute with all gauge generators are generated by
    \begin{align}
        \mathcal{L}_X = (I_{n_r} \otimes G)\,, \quad \mathcal{L}_Z = (G \otimes I_{n_r})\,,
    \end{align}
    where $G$ is a rank-$r$ generator matrix for the classical simplex code such that $HG^\intercal = 0$. The stabilizer generators are given by
    \begin{align}
        \mathcal{S}_X = (H \otimes G)\,, \quad \mathcal{S}_Z = (G \otimes H)\,.
    \end{align}
    The number of logical qubits $k$ is then the difference between the ranks:
    \begin{align}
        k &= \rank{\mathcal{L}_X} - \rank{\mathcal{S}_X}\\
        &= n_r \cdot r - (n_r - r) \cdot r\\
        &= r^2\,.
    \end{align}
    Finally, the distance is the same as for the classical simplex code: $d = d_r = 2^{r - 1}$.
\end{proof}

While for all of the other codes that we have analyzed in this paper, it has been sufficient to replace $\field_2$ by $\field_q$, in this case it is not so simple; we cannot simply promote the polynomials $h(x)$ to be defined over $\field_q[x]/(x^{n_r}-1)$ instead of $\field_2[x]/(x^{n_r}-1)$. The reason for this is that, while three-term polynomials $h(x)$ that meet the restrictive criteria above exist over the quotient ring $\field_2[x]/(x^{n_r} - 1)$, they do not exist over $\field_q[x]/(x^{n_r} - 1)$ for $q > 2$ (even if we drop the three-term criteria). Formally, we have

\begin{proposition}[Existence of higher order polynomials]\label{prop:polynomial-existence}
    For integers $q > 2$, $r \geq 3$, and $n_r = (q^r-1)/(q-1)$, there does not exist a polynomial $h(x) \in \field_q[x]/(x^{n_r}-1)$ such that $\gcd(h(x), x^{n_r}-1)$ is a primitive polynomial of degree $r$.
\end{proposition}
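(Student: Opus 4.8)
The plan is to reduce the claim to an elementary divisibility fact about the multiplicative order of the roots of $x^{n_r}-1$. Recall that a primitive polynomial $g(x)$ of degree $r$ over $\field_q$ is, by definition, the minimal polynomial of a primitive element of $\field_{q^r}$; in particular it is irreducible of degree $r$, satisfies $g(0)\neq 0$, and each of its roots has multiplicative order exactly $q^r-1$ in $\field_{q^r}^\times$. Suppose, toward a contradiction, that there were an $h(x)\in\field_q[x]/(x^{n_r}-1)$ for which $g(x):=\gcd(h(x),x^{n_r}-1)$ is such a primitive polynomial (the gcd being taken in $\field_q[x]$ after lifting $h$ to a representative; the value is independent of the choice of representative). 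By construction $g(x)$ divides $x^{n_r}-1$ in $\field_q[x]$, so every root $\alpha$ of $g$ satisfies $\alpha^{n_r}=1$, whence $\operatorname{ord}(\alpha)\mid n_r$. Combined with $\operatorname{ord}(\alpha)=q^r-1$, this forces $q^r-1\mid n_r$.

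The second step is to note that this divisibility cannot hold for $q>2$. Indeed $n_r=(q^r-1)/(q-1)=1+q+\dots+q^{r-1}$ is a positive integer, and when $q>2$ we have $q-1\geq 2$, so $1\leq n_r<q^r-1$. A positive integer strictly smaller than $q^r-1$ is not a multiple of $q^r-1$, contradicting $q^r-1\mid n_r$. Hence no such $h$ exists, which is the assertion of the proposition. Equivalently, one may argue in terms of the order of the polynomial $g$ — the least $e\geq 1$ with $g(x)\mid x^e-1$ — which for a primitive polynomial of degree $r$ equals $q^r-1$; then $g(x)\mid x^{n_r}-1$ gives $q^r-1=\operatorname{ord}(g)\leq n_r$, the same contradiction.

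I do not expect a genuine obstacle in this argument; the only point worth highlighting is the contrast with the qubit case, which explains why the proposition is sharp. When $q=2$ one has $n_r=2^r-1=q^r-1$ exactly, so $q^r-1\mid n_r$ holds trivially and the obstruction vanishes — this is precisely why suitable (indeed three-term) primitive polynomials exist in $\field_2[x]/(x^{n_r}-1)$ but not over $\field_q$ for $q>2$. One may also record, as a harmless aside, that $q\nmid n_r$ since $q^r-1\equiv -1\pmod q$, so $x^{n_r}-1$ is separable over $\field_q$, although this fact is not needed for the argument above.
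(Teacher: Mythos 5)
Your proof is correct and rests on the same key fact as the paper's: a primitive polynomial of degree $r$ has order $q^r-1$ (i.e., $q^r-1$ is the least $e$ with $g(x)\mid x^e-1$), so $g(x)\mid x^{n_r}-1$ forces $q^r-1\le n_r=(q^r-1)/(q-1)$, which fails for $q>2$. Your main phrasing via the multiplicative order of the roots (yielding the slightly stronger conclusion $q^r-1\mid n_r$) and your closing aside are just two packagings of the argument the paper itself gives, so this is essentially the same proof.
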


\begin{proof}
    We define $p(x) = \gcd(h(x), x^{n_r} - 1)$, where $n_r = (q^r-1)/(q-1)$ for a $q$-ary simplex code~\cite{eczoo2025}. If $p(x)$ is a primitive polynomial over $\field_q[x]/(x^{n_r}-1)$, the smallest positive integer $k$ such that $p(x)$ divides $x^k-1$ is $k = q^r-1$. But $p(x) = \gcd(h(x), x^{n_r}-1)$ implies that $p(x)$ divides $x^{n_r}-1$, so $k \leq n_r$ by the minimality of $k$. Since $n_r = (q^r-1)/(q-1) = k/(q-1)$, we must have $q-1 \leq 1$, and thus $q = 2$ is the only value for $q$ that works.
\end{proof}

Instead, we take as the classical simplex seed code the $q$-ary simplex codes defined in~\cref{subsec:classical-simplex-codes}. Recall from above, a $q$-ary simplex code $S(q,r)$ has parameters $[n_r,r,q^{r-1}]_q$, where $n_r = (q^r-1)/(q-1)$. Here, the $q$-ary simplex code $S(q,r)$ is a projective code whose generator matrix is constructed by taking the columns as chosen representatives of each element in the projective space $PG(r-1,q)$. Then, the qudit SHYPS code parameters are given by the following:

\begin{proposition}[Qudit SHYPS code parameters]\label{prop:qudit-shyps-code-parameters}
    Let $r \geq 3$ and $H$ be the parity-check matrix for the $(n, r, q^{r-1})_q$-classical $q$-ary simplex code, where $n = \frac{q^r - 1}{q - 1}$. The qudit subsystem hypergraph product simplex code $\text{SHYPS}(q, r)$ is an $[n,k,d]_q$-quantum subsystem code, where
    \begin{align}
        n &= n_r^2 = \lp\frac{q^r - 1}{q-1}\rp^2\,,\\
        k &= r^2\,,\\
        d &= q^{r-1}\,.
    \end{align}
\end{proposition}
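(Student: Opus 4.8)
The plan is to run the qubit argument of \cref{fact:shyps-code-parameters} over $\field_q$, the only structural change being that the binary simplex seed is replaced by the $q$-ary simplex code $S(q,r)$ of \cref{subsec:classical-simplex-codes}; this replacement is forced, since by \cref{prop:polynomial-existence} no polynomial seed of the required type exists for $q>2$. Concretely, fix a rank-$r$ generator matrix $G\in\field_q^{r\times n_r}$ of $S(q,r)=:C$ whose columns represent the points of $PG(r-1,q)$, and a parity-check matrix $H\in\field_q^{(n_r-r)\times n_r}$ with $GH^\intercal=0$, so that $\rs{G}=C$ and $\rs{H}=C^{\perp}$ is the $q$-ary Hamming code. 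The gauge generators are $G_X=H\otimes I_{n_r}$ and $G_Z=I_{n_r}\otimes H$, and $n=n_r^2$ is immediate from the $n_r\times n_r$ grid on which these act.

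For $k$ I would transcribe the subsystem bookkeeping. Using $GH^\intercal=0$ over $\field_q$ together with the qudit commutation rule (generalized-Pauli operators $X(a)$ and $Z(b)$ commute when $a^\intercal b=0$ in $\field_q$), the operators commuting with the entire gauge group $\mathcal{G}=\langle G_X,G_Z\rangle$ are generated by $\mathcal{L}_X=I_{n_r}\otimes G$ and $\mathcal{L}_Z=G\otimes I_{n_r}$ — here one uses $\rs{I_{n_r}\otimes H}^{\perp}=\field_q^{n_r}\otimes\ker H=\field_q^{n_r}\otimes C=\rs{I_{n_r}\otimes G}$, which holds over any field — while the $X$-type elements of the center of $\mathcal{G}$ are precisely $\rs{G_X}\cap\rs{\mathcal{L}_X}=(C^{\perp}\otimes\field_q^{n_r})\cap(\field_q^{n_r}\otimes C)=C^{\perp}\otimes C=\rs{H\otimes G}$, so $\mathcal{S}_X=H\otimes G$ and, symmetrically, $\mathcal{S}_Z=G\otimes H$. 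Since $\rank{(M\otimes N)}=\rank{M}\cdot\rank{N}$ over any field and $\rank{G}=r$, $\rank{H}=n_r-r$ over $\field_q$, the number of independent bare logical operators is $k=\rank{\mathcal{L}_X}-\rank{\mathcal{S}_X}=n_r r-(n_r-r)r=r^2$.

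For $d$ I would show the distance equals $q^{r-1}=d(C)$ via matching bounds. For the upper bound, place a minimum-weight codeword $c\in C$ along a single row of the grid: the resulting $X$-type operator commutes with every element of $\mathcal{G}$ (its support lies in $\field_q^{n_r}\otimes C=\rs{\mathcal{L}_X}$), yet it is not a gauge operator, because its support lying in $\rs{G_X}=C^{\perp}\otimes\field_q^{n_r}$ would require the corresponding standard basis vector of $\field_q^{n_r}$ to lie in $C^{\perp}$, which is impossible since the columns of $G$ are nonzero (so $C^{\perp}$ has minimum distance at least $2$); thus it is a (bare, hence dressed) logical of weight $q^{r-1}$, giving $d\le q^{r-1}$. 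For the lower bound, I would adapt the subsystem-hypergraph-product distance argument of \cite{li2020,malcolm2025} — essentially the ``restrict the boundary map to the support of the operator'' device used in the proof of \cref{prop:qudit-hgp-dimension} — showing that a dressed logical of weight $<q^{r-1}$ has support too small to contain a nonzero codeword of $C$ in a way not removable by multiplying by gauge operators, hence must itself lie in $\mathcal{G}$, a contradiction. Combining the two bounds yields $d=q^{r-1}$.

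The rank count is routine, since every ingredient (dimensions and intersections of tensor products of subspaces, orthogonal complements, $\rank{(M\otimes N)}=\rank{M}\cdot\rank{N}$) is field-independent. The step demanding real care is the distance, which even the qubit statement asserts in a single line: over $\field_q$ one must check both that the single-row-codeword operator escapes $\mathcal{G}$ (this hinges precisely on the $q$-ary Hamming code having minimum distance $\ge 2$) and that the gauge freedom admits no dressed logical of smaller weight. I expect this lower bound to be the main obstacle, although it should go through as a faithful transcription of the known qubit argument.
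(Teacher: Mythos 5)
Your proposal follows the same route as the paper's proof: $n$ from the $n_r\times n_r$ grid, $k=\rank{\mathcal{L}_X}-\rank{\mathcal{S}_X}=n_r\cdot r-(n_r-r)\cdot r=r^2$ via the same gauge and stabilizer generators $\mathcal{L}_X=I_{n_r}\otimes G$, $\mathcal{S}_X=H\otimes G$ (and symmetrically for $Z$), and $d$ inherited from the $q$-ary simplex seed code. If anything you are more careful than the paper, which asserts the distance in one line, whereas you exhibit the explicit single-row codeword logical for the upper bound and correctly identify the dressed-logical lower bound as the step that genuinely requires transcribing the qubit argument of \cite{li2020,malcolm2025} to $\field_q$.
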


\begin{proof}
    The proof is just an extension of the proof for qubits in~\cref{fact:shyps-code-parameters}. The number of physical qudits is derived from the $n_r \times n_r$ lattice, so we have $n = n_r^2$, where $n_r = (q^r-1)/(q-1)$ is inherited from the classical seed code. For $k$, we construct the gauge generators in the same way:
    \begin{align}
        \mathcal{L}_X = (I_{n_r} \otimes G)\,, \quad \mathcal{L}_Z = (G \otimes I_{n_r})\,,
    \end{align}
    where the only difference is in how we construct $G$, though $\rank{G} = r$ still holds and we still have $GH^\intercal = 0$, where $H$ is the parity-check matrix of the classical $q$-ary simplex code. Likewise, the stabilizer generators are still defined as
    \begin{align}
        \mathcal{S}_X = (H \otimes G)\,, \quad \mathcal{S}_Z = (G \otimes H)\,,
    \end{align}
    where $\rank{H} = n_r-r$ holds. Thus, the number of logical qudits is
    \begin{align}
        k &= \rank{\mathcal{L}_X} - \rank{\mathcal{S}_X}\\
        &= r^2\,,
    \end{align}
    the same as it was for qubits. Finally, the distance is inherited from the distance of the classical $q$-ary simplex code, which yields $d = d_r = q^{r-1}$.
\end{proof}

Since we do not construct the classical gauge generator matrix $G$ out of a three-term polynomial and instead from elements of the subspaces of the projective group $PG(r-1, q)$, we lose the constant weight of 3 for all values of $r$ that the qubit SHYPS codes benefit from. Instead, as can be seen in~\cref{fig:qudit-shyps-min-max-weights}, while the minimum number of nonzero elements in each row of the parity-check matrix for different values of $r$ remains constant at 3, the maximum number scales as $r+1$. Thus, while qudit SHYPS codes benefit from a favorable distance scaling as $q^{r-1}$, it comes at the cost of an increasing weight. As such, these codes lose their LDPC character as we fix $q$ and increase $r$ to get a larger distance. Thus, it becomes a question of balancing the weight of the parity-checks and the desired distance of the resulting code. We present some sample codes in~\cref{tab:sample-shyps-codes}.

\begin{figure}[t]
    \centering
    \includegraphics[scale=0.5]{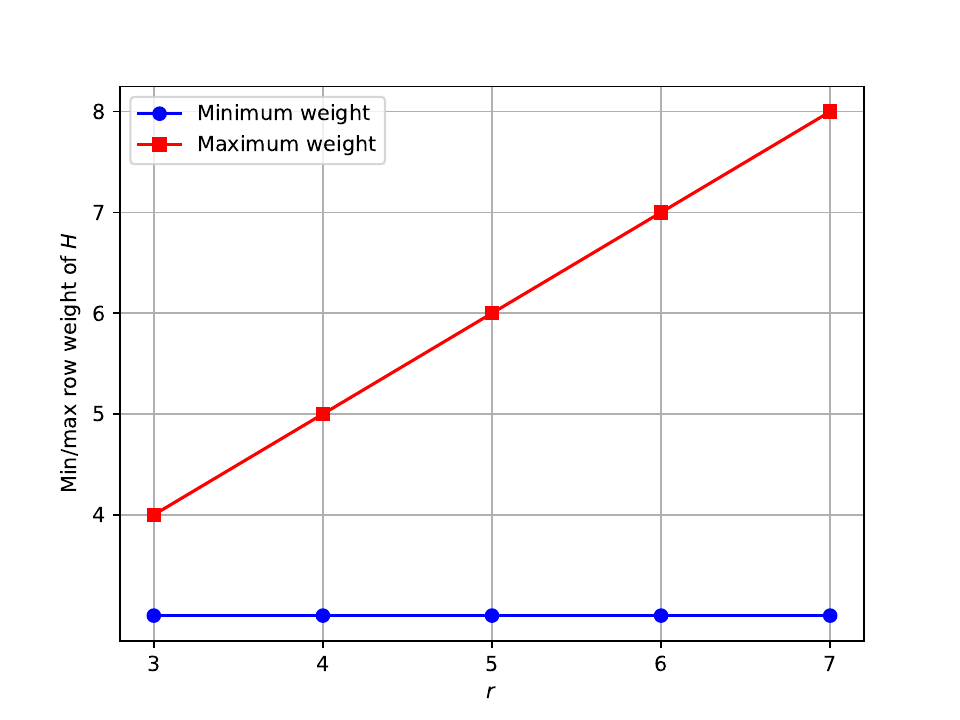}
    \caption{Minimum and maximum weights of classical $q$-ary simplex code parity-check matrix as a function of $r$.}
    \label{fig:qudit-shyps-min-max-weights}
\end{figure}

\begin{table*}[b]
    \centering
    \begin{tblr}{colspec={|Q[c,m]Q[c,m]Q[c,m]},hlines,vlines,row{even}={bg=lightgray}}
        $q$ & $r$ & $\stabcode{n}{k}{d}_q$\\
        3 & 3 & $\stabcode{169}{9}{9}_3$\\
        3 & 4 & $\stabcode{1600}{16}{27}_3$\\
        5 & 3 & $\stabcode{961}{9}{25}_5$\\
        5 & 4 & $\stabcode{24336}{16}{125}_5$\\
        7 & 3 & $\stabcode{3249}{9}{49}_7$\\
        7 & 4 & $\stabcode{160000}{16}{343}_7$
    \end{tblr}
    \caption{The first 2 qudit SHYPS codes for each of $q \in \{3, 5, 7\}$ using $r \in \{3,4\}$.}
    \label{tab:sample-shyps-codes}
\end{table*}

\section{Qudit high-dimensional expander codes}\label{sec:qudit-hdx-codes}
In this section, we quditize the \emph{high-dimensional expander} (HDX) codes~\cite{evra2020}. HDX codes are a class of CSS codes defined by a quantum code and a classical linear code. The HDX codes take advantage of concepts from graph theory to construct codes with excellent distance properties. Expander graphs are simultaneously sparse (maintaining the LDPC property) yet highly-connected. This ensures that errors spread out quickly and are efficiently detected by the sparse checks. HDX codes utilize high-dimensional generalizations of these structures to achieve superior distance scaling.

We begin by considering the chain complexes (see~\cref{subsec:homological-algebra-primer} for a primer on homological algebra)
\begin{equation}
    \label{eqn:3-and-2-chain-complexes}
    \begin{aligned}
        X &: 0 \xrightarrow{\partial_3^X} X_2 \xrightarrow{\partial_2^X} X_1 \xrightarrow{\partial_1^X} X_0 \xrightarrow{\partial_0^X} 0\,,\\
        Y &: 0 \xrightarrow{\partial_2^Y} A \xrightarrow{\partial_1^Y} B \xrightarrow{\partial_0^Y} 0\,,
    \end{aligned}
\end{equation}
with their co-complexes given by
\begin{align}
    X^* &: 0 \xrightarrow{\delta_3^X} X_2^* \xrightarrow{\delta_2^X} X_1^* \xrightarrow{\delta_1^X} X_0^* \xrightarrow{\delta_0^X} 0\\
    &= 0 \xrightarrow{\delta_3^X} X_0 \xrightarrow{\delta_2^X} X_1 \xrightarrow{\delta_1^X} X_2 \xrightarrow{\delta_0^X} 0
\end{align}
and
\begin{align}
    Y^* &: 0 \xrightarrow{\delta_2^Y} B \xrightarrow{\delta_1^Y} A \xrightarrow{\delta_0^Y} 0\,.
\end{align}
The product chain complex $X^*\otimes Y^*$ between co-complexes $X^*$ and $Y^*$, is given by
\begin{align}
    0 &\to (X_2^*\otimes A)\oplus(X_1^*\otimes B) \xrightarrow{\delta_2^{X^*\otimes Y^*}} (X_1^*\otimes A) \oplus (X_0^*\otimes B) \xrightarrow{\delta_1^{X^*\otimes Y^*}} (X_0^*\otimes A) \to 0\,,
\end{align}
whereas the co-complex of $X^*\otimes Y^*$, denoted by $\mathcal{X}:=(X^*\otimes Y^*)^*$, is given by
\begin{align}
    0 &\to (X_0^*\otimes A) \xrightarrow{\partial_2^\mathcal{X}} (X_1^*\otimes A) \oplus (X_0^*\otimes B) \xrightarrow{\partial_1^\mathcal{X}} (X_2^*\otimes A)\oplus(X_1^*\otimes B) \to 0\,,
\end{align}
or, equivalently
\begin{align}\label{eqn:HDX-cochain-complex}
    0&\xrightarrow{} (X_2\otimes A) \xrightarrow{\partial_2^\mathcal{X}} (X_1\otimes A) \oplus (X_2\otimes B) \xrightarrow{\partial_1^\mathcal{X}} (X_0\otimes A)\oplus(X_1\otimes B) \xrightarrow{} 0\,,
\end{align}
where its boundary maps are given by
\begin{align}
    \partial_2^\mathcal{X} &= (\partial_2^X\otimes I_A) + (I_{X_2}\otimes\partial_1^Y)
\end{align}
and
\begin{align}
    \partial_1^\mathcal{X}(v) = (\partial_1^X\otimes I_A)(v) - (I_{X_1}\otimes \partial_1^Y)(v)
\end{align}
if $v \in X_1 \otimes A$, or
\begin{align}
    \partial_1^\mathcal{X}(v) = (\partial_2^X\otimes I_B)(v)
\end{align}
if $v \in X_2 \otimes B$. Thus, for $u \oplus v \in (X_1 \otimes A) \oplus (X_2 \otimes B)$, we have 
\begin{align}
    \partial_1^\mathcal{X}(u\oplus v) &= \lp \lp \partial_1^X\otimes I_A \rp(u)\rp \oplus \lp -(I_{X_1} \otimes \partial_1^Y)(u) + (\partial_2^X \otimes I_B)(v) \rp \,.
\end{align}
We can directly verify that $(\partial^\mathcal{X})^2 = 0$:
\begin{align}
    \partial_1^\mathcal{X} \circ \partial_2^\mathcal{X} &= \partial_1^\mathcal{X}\circ(\partial_2^X \otimes I_A) + \partial_1^\mathcal{X}\circ(I_{X_2} \otimes \partial_1^Y)\\
    \begin{split}
        &= (\partial_1^X \otimes I_A)\circ(\partial_2^X \otimes I_A) - (I_{X_1} \otimes \partial_1^Y)\circ(\partial_2^X \otimes I_A)\\
        &\quad + (\partial_2^X \otimes I_B)\circ(I_{X_2} \otimes \partial_1^Y)
    \end{split}\\
    &= -(\partial_2^X \otimes \partial_1^Y) + (\partial_2^X \otimes \partial_1^Y)\\
    &= 0\,,
\end{align}
where for the third equality we used the fact that  $\partial_2^X\circ\partial_1^X=0$. 

We note that the differences in how the boundary maps are defined in this is a qudit generalization of the qubit HDX code given in~\cite[Def. 4.1]{evra2020}. The minus sign in front of the $I_{X_1}\otimes\partial_1^Y$ factor in $\partial_1^\mathcal{X}$ can be treated as a plus sign in qubit codes where the boundary maps correspond to parity check matrices over $\field_2$ (as $-1=1$ in $\field_2$). Such treatment of qubit HDX codes gives $\partial_1^\mathcal{X}\circ\partial_2^\mathcal{X}=0$, indicating that the CSS condition $H_XH_Z^\intercal=0$ is satisfied (since $(\partial_2^X \otimes \partial_1^Y) + (\partial_2^X \otimes \partial_1^Y) = 0$ in $\field_2$). However in qudit HDX codes, the boundary maps correspond to parity check matrices over $\field_q$. Namely, $\partial_1^\mathcal{X} \mapsto H_X$ and $\partial_2^\mathcal{X} \mapsto H_Z^\intercal$ where $H_X:\field_q^{X_1\otimes A \oplus X_2\otimes B}\rightarrow\field_q^{X_0\otimes A \oplus X_1\otimes B}$ and $H_Z:\field_q^{X_1\otimes A \oplus X_2\otimes B}\rightarrow\field_q^{X_2\otimes A}$. Not adding the minus sign may lead to $\partial_1^\mathcal{X}\circ\partial_2^\mathcal{X}\neq0$, since in general $(\partial_2^X \otimes \partial_1^Y) + (\partial_2^X \otimes \partial_1^Y) \neq 0$. The minus sign Thus guarantees that $\partial_1^\mathcal{X}\circ\partial_2^\mathcal{X}=0$, and hence $H_XH_Z^\intercal=0$.

Now, for chain complex $\mathcal{X}$ with boundary map $\partial^\mathcal{X}$ and co-boundary map $\delta^\mathcal{X}$, we define the $k$-\emph{systole} and $k$-\emph{co-systole} of $\mathcal{X}$ as
\begin{align}
    S_k(\mathcal{X}) &= \min\lc \abs{v} : v \in (\ker{\partial_k^\mathcal{X}}) \backslash (\im\partial_{k+1}^\mathcal{X}) \rc\\
    \intertext{and}
    S^k(\mathcal{X}) &= \min\Big\{ |v| : v\in(\ker\delta_k^\mathcal{X})\backslash(\im\delta_{k-1}^\mathcal{X}) \Big\} \,,
\end{align}
respectively. At the same time, the $k\numth$ homology and cohomology group of $\mathcal{X}$ are given by
\begin{align}
    H_k(\mathcal{X}) &= \ker \partial_k^\mathcal{X} / \im \partial_{k+1}^\mathcal{X}\\
    \intertext{and}
    H^k(\mathcal{X}) &= \ker \delta_k^\mathcal{X} / \im \delta_{k-1}^\mathcal{X}\,,
\end{align}
respectively, where $\delta_k^\mathcal{X} = (\partial_{k+1}^\mathcal{X})^\intercal$. Since $\im\delta_{k-1}^\mathcal{X} = \im(\partial_k^\mathcal{X})^\intercal = (\ker \partial_k^\mathcal{X})^\perp$ and $(\im\partial_{k+1}^\mathcal{X})^\perp = \ker (\partial_{k+1}^\mathcal{X})^\intercal = \ker\delta_{k}^\mathcal{X}$, we have
\begin{align}
    H^k(\mathcal{X}) &= (\im\partial_{k+1}^\mathcal{X})^\perp / (\ker \partial_k^\mathcal{X})^\perp = \ker \partial_k^\mathcal{X} / \im\partial_{k+1}^\mathcal{X} = H_k(\mathcal{X}) \,,
\end{align}
that is, the $k\numth$ homology and $k\numth$ cohomology coincide. With this in mind, we have the following:

\begin{proposition}\label{lemma:HDX_complex_homology_systole}
    Consider the product chain complex $\mathcal{X} = (X^* \otimes Y^*)^*$ from~\cref{eqn:HDX-cochain-complex}:
    \begin{align}
        0 &\to (X_2\otimes A) \xrightarrow{\partial_2^\mathcal{X}} (X_1 \otimes A) \oplus (X_2 \otimes B) \xrightarrow{\partial_1^\mathcal{X}} (X_0 \otimes A) \oplus (X_1 \otimes B) \to 0
    \end{align}
    such that $H_2(Y)=\{0\}$. 
    Then, it holds that:
    \begin{enumerate}
        \item $\dim H_1(\mathcal{X}) = (\dim H_1(X)) (\dim H_1(Y))$.
        \item $S_1(\mathcal{X}) = S_1(X) S_1(Y)$.
        \item $S^1(\mathcal{X}) = S^1(X)$.
    \end{enumerate}
\end{proposition}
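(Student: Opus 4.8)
The plan is to identify $\mathcal{X}$ with a truncation of the Koszul tensor product $X\otimes Y$ of the given complexes and then read off all three statements from the K\"unneth formula together with a short support argument for the systoles. Comparing the boundary maps of $\mathcal{X}$ in \cref{eqn:HDX-cochain-complex} with the Koszul rule of \cref{subsubsec:tensor-product-chain}, one checks directly that $\mathcal{X}_j = (X\otimes Y)_{j+1}$ and $\partial_j^{\mathcal{X}} = \partial_{j+1}^{X\otimes Y}$ for $j=0,1,2$; since dropping the bottom space $(X\otimes Y)_0$ leaves homological degrees $1$ and $2$ untouched, $H_1(\mathcal{X}) = H_2(X\otimes Y)$, $S_1(\mathcal{X}) = S_2(X\otimes Y)$, and (after dualizing, using $(X\otimes Y)^* \cong X^*\otimes Y^*$ from \cref{subsubsec:tensor-product-co-chain}) $S^1(\mathcal{X}) = S^2(X\otimes Y)$. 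For statement 1, the K\"unneth formula over $\field_q$ (as in the proof of \cref{prop:qudit-hgp-dimension}) gives $H_2(X\otimes Y) = (H_0(X)\otimes H_2(Y)) \oplus (H_1(X)\otimes H_1(Y)) \oplus (H_2(X)\otimes H_0(Y))$; the hypothesis kills the first summand and the third vanishes as well in the HDX setting (where $Y$ is the two-term complex of a classical code with full-rank parity-check matrix $\partial_1^Y$, so $H_0(Y)=\{0\}$), leaving $H_1(\mathcal{X}) \cong H_1(X)\otimes H_1(Y)$ and hence $\dim H_1(\mathcal{X}) = (\dim H_1(X))(\dim H_1(Y))$.

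For statement 2, the upper bound $S_1(\mathcal{X}) \le S_1(X)S_1(Y)$ comes from $u\otimes v$, where $u \in \ker\partial_1^X$ is a minimum-weight representative of a nonzero class of $H_1(X)$ and $v \in \ker\partial_1^Y$ is a minimum-weight nonzero codeword: then $u\otimes v \in X_1\otimes A \subseteq \mathcal{X}_1$ is a cycle, $\partial_1^{\mathcal{X}}(u\otimes v) = (\partial_1^X u)\otimes v \oplus (-\,u\otimes\partial_1^Y v) = 0$, its class $[u]\otimes[v]$ is nonzero in $H_1(X)\otimes H_1(Y)$ (tensor of nonzero classes over a field), and $|u\otimes v| = S_1(X)S_1(Y)$. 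For the lower bound, take $w = p\oplus q \in \ker\partial_1^{\mathcal{X}}\setminus\im{\partial_2^{\mathcal{X}}}$ with $p \in X_1\otimes A$ and $q \in X_2\otimes B$. The cycle condition forces $p \in (\ker\partial_1^X)\otimes A$, so writing $p = \sum_i p_i\otimes e_i$ over a basis of $A$ gives $p_i \in \ker\partial_1^X$ and $|p| = \sum_i|p_i|$; put $J = \{i : p_i \notin \im{\partial_2^X}\}$. If $J = \emptyset$, subtracting the appropriate element of $\im{\partial_2^{\mathcal{X}}}$ makes $w$ homologous to $0\oplus q''$ with $q'' \in (\ker\partial_2^X)\otimes B$, whose class lies in the vanishing summand $H_2(X)\otimes H_0(Y)$, contradicting $w \notin \im{\partial_2^{\mathcal{X}}}$; hence $J \neq \emptyset$. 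Reducing the remaining half of the cycle condition modulo $\im{\partial_2^X}$ shows that the nonzero classes $\overline{p_i} \in H_1(X)$, $i \in J$, are annihilated by every row of $\partial_1^Y$ restricted to the coordinate set $J$; since they are nonzero, $\rs{\partial_1^Y}$ cannot project onto $\field_q^{J}$, which by duality produces a nonzero codeword $c \in \ker\partial_1^Y$ with $\supp(c)\subseteq J$. Thus $|J| \ge |c| \ge S_1(Y)$ and $|w| \ge |p| = \sum_i|p_i| \ge |J|\,S_1(X) \ge S_1(X)S_1(Y)$, so $S_1(\mathcal{X}) = S_1(X)S_1(Y)$.

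Statement 3 is the cohomological transpose: $S^1(\mathcal{X}) = S^2(X\otimes Y)$, and with $H^2(Y) = \{0\}$ (automatic, $Y$ having two terms) and $H^0(Y) = \{0\}$, the cohomology K\"unneth formula gives $H^2(X\otimes Y) \cong H^1(X)\otimes H^1(Y)$, so the same upper/lower bound argument applied to cochains yields $S^1(\mathcal{X}) = S^1(X)\cdot S^1(Y)$. It remains to note $S^1(Y) = 1$: since $S^1(Y) = \min\{|v| : v \in A \setminus \rs{\partial_1^Y}\}$ and the proper subspace $\rs{\partial_1^Y} \subsetneq A \cong \field_q^{n_A}$ cannot contain all $n_A$ standard basis vectors, a weight-one vector lies outside it. Hence $S^1(\mathcal{X}) = S^1(X)$.

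The main obstacle is organizational rather than conceptual. One must pin down precisely which K\"unneth cross-terms vanish under the stated hypothesis — the clean form $H_1(\mathcal{X}) \cong H_1(X)\otimes H_1(Y)$ genuinely needs $H_2(X)\otimes H_0(Y) = \{0\}$ as well, which is what the full-rank assumption on $\partial_1^Y$ supplies — and one must carry the two summands $X_1\otimes A$ and $X_2\otimes B$ of $\mathcal{X}_1$ consistently through the cycle and coboundary conditions, since the support argument for statement 2 (and its cohomological mirror for statement 3) hinges on correctly extracting $p \in (\ker\partial_1^X)\otimes A$ and the modular relation on the $\overline{p_i}$.
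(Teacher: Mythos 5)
Your identification of $\mathcal{X}$ with the degree-shifted truncation of the Koszul product $X\otimes Y$, the K\"unneth computation of $\dim H_1(\mathcal{X})$, and the treatment of statement 2 all follow the paper's route, and in two places you are actually more careful than the paper: you note explicitly that the clean isomorphism $H_1(\mathcal{X})\cong H_1(X)\otimes H_1(Y)$ also needs $H_0(Y)=\{0\}$ (the paper silently invokes this even though only $H_2(Y)=\{0\}$ appears in the hypothesis), and your lower bound for $S_1(\mathcal{X})$ handles a general cycle $w=p\oplus q$ with a nonzero $X_2\otimes B$ component and disposes of the $J=\emptyset$ case, whereas the paper restricts without justification to cycles supported in $X_1\otimes A$. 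The index set $J$ together with the duality step producing $c\in\ker\partial_1^Y$ with $\supp(c)\subseteq J$ is a correct and cleaner version of the paper's projection count.

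The gap is in statement 3. You assert that ``the same upper/lower bound argument applied to cochains yields $S^1(\mathcal{X})=S^1(X)S^1(Y)$,'' but the lower-bound half does not transpose. For a chain $w=p\oplus q$ the cycle condition splits into two equations, one of which is $(\partial_1^X\otimes I_A)(p)=0$, forcing every slice $p_i$ to be a cycle of $X$ --- that is what makes the count $\sum_{i\in J}\abs{p_i}\geq\abs{J}\,S_1(X)$ work. For a cochain $w=p\oplus q$ the cocycle condition is the single coupled equation $\bigl((\partial_2^X)^\intercal\otimes I_A\bigr)(p)+\bigl(I_{X_2}\otimes(\partial_1^Y)^\intercal\bigr)(q)=0$, i.e., $(\partial_2^X)^\intercal p_i=-\sum_j(\partial_1^Y)_{ji}q_j$, so the slices $p_i$ need not be cocycles of $X$ and the support argument breaks down. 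A correct replacement is a contraction: since $[w]\neq 0$ in $H^1(X)\otimes H^1(Y)$ with $H^1(Y)\cong A/\rs{\partial_1^Y}$, pick $c\in\ker\partial_1^Y$ pairing nontrivially with the $H^1(Y)$-components of $[w]$; then $\sum_i c_i p_i$ is a cocycle of $X$ (the coupled equation collapses because $\partial_1^Y c=0$) representing a nonzero class, so $S^1(X)\leq\abs{\textstyle\sum_i c_i p_i}\leq\abs{p}\leq\abs{w}$. Combined with your correct upper bound via $x\otimes a''$ and the observation $S^1(Y)=1$ --- which is the same device as the paper's $A=A'\oplus A''$ splitting --- this closes statement 3.
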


\begin{proof}
    First, note that the second homology group of $Y$ satisfies $H_2(Y) = 0$. Now, the first homology group of $\mathcal{X}=(X^*\otimes Y^*)^*$ is given by
    \begin{align}
        H_1(\mathcal{X}) = H_1((X^*\otimes Y^*)^*) &\cong H^2(X^*\otimes Y^*) \,, 
    \end{align}
    since $\mathcal{X}$ is the dual chain complex of $X^*\otimes Y^*$.
    By the K\"unneth formula, therefore we have
    \begin{equation}
    \begin{aligned}
        H^2(X^*\otimes Y^*) &= H^2(X^*)\otimes H^0(Y^*) \oplus H^0(X^*)\otimes H^2(Y^*) \oplus H^1(X^*)\otimes H^1(Y^*) \\
        &\cong H_0(X)\otimes H_2(Y) \oplus H_2(X)\otimes H_0(Y) \oplus H_1(X)\otimes H_1(Y) \\
        &= H_1(X)\otimes H_1(Y)
    \end{aligned}
    \end{equation}
    because $H_2(Y)=H_0(Y)=\{0\}$.
    Therefore the dimension of the first homology group of $\mathcal{X}$ is given by
    \begin{equation}
        \dim H_1(\mathcal{X}) = (\dim H_1(X)) (\dim H_1(Y)) \,,
    \end{equation}
    as claimed.

    Now we show that $S_1(\mathcal{X}) = S_1(X)S_1(Y)$.
    First we show that $S_1(\mathcal{X}) \leq S_1(X)S_1(Y)$. Consider $z_X\in\ker\partial_1^X\setminus\im\partial_2^X$ and $z_Y\in\ker\partial_1^Y\setminus\{0\}$ and a vector $x\in\mathcal{X}_1$ given by $z=(z_X\otimes z_Y)\oplus 0$. Furthermore, let $|z_X|=S_1(X)$ and $|z_Y|=S_1(Y)$ so that $|z| = S_1(X)S_1(Y)$.
    Note that $z\in\ker\partial_1^\mathcal{X}$ because $\partial_1^X\otimes\partial_1^Y(z) = 0$.
    Also $z\notin\im\partial_1^\mathcal{X}$ since there is no $z'\in\mathcal{X}_2=X_2\otimes A$ such that $\partial_1^X\otimes I_{A}(z')=z$ because $z_X\notin\im\partial_1^X$. Thus $z\in\ker\partial_1^\mathcal{X}\setminus\im\partial_2^\mathcal{X}$, which implies that
    \begin{equation}
        S_1(\mathcal{X}) \leq |z| = S_1(X)S_1(Y) \,.
    \end{equation}
    Now we show that $S_1(\mathcal{X}) \geq S_1(X)S_1(Y)$. Let $z\in \ker\partial_2^\mathcal{X}\setminus\im\partial_1^\mathcal{X}$ with support in $X_1\otimes A$, so that
    \begin{equation}
        \partial_1^X\otimes I_A(z)=0 \,,\quad 
        I_{X_1}\otimes\partial_1^Y(z)=0 \,.
    \end{equation}
    Consider a basis $\mathcal{U}=\{u_x\}_x$ of $X_1$ and basis $\mathcal{V} = \{v_a\}_a$ of $A$ and projection of $z\in X_1\otimes A$ onto a basis element of $\mathcal{U}$ and $\mathcal{V}$ as
    \begin{equation}
        \pi_x^{X_1}(z) = \sum_a z_{x,a} v_a
        \quad\text{and}\quad
        \pi_a^A(z) = \sum_x z_{x,a} u_x \,,
    \end{equation}
    respectively, where $z_{x,a}\in\field_q$ are coefficients from the expansion $z = \sum_{x,a} z_{x,a} u_x\otimes v_a$. Note that the projections satisfy\begin{equation}\label{eqn:weight_sum_nontrivial_cycle_HDX}
        \partial_1^Y(\pi_x^{X_1}(z)) = 0
        \quad\text{and}\quad
        \partial_1^X(\pi_a^A(z)) = 0
    \end{equation}
    since $z\in \ker\partial_2^\mathcal{X}$.
    On the other hand, $\pi_a^A(z) \notin\im\partial_2^X$ since $z\notin \im\partial_1^\mathcal{X}$.
    Let us write the weight of $z$ as
    \begin{equation}
        |z| = \sum_x |\pi_x^{X_1}(z)| \,.
    \end{equation}
    Since $\pi_a^A(z) \in\ker\partial_1^X\setminus\im\partial_2^X$ we have $|\pi_a^A(z)|\geq S_1(X)$.
    On the other hand, $|\pi_x^{X_1}(z)|\geq S_1(Y)$ since $\pi_x^{X_1}(z)\in\ker\partial_1^Y$.
    Thus there are at least $S_1(Y)$ nonzero summands in~\cref{eqn:weight_sum_nontrivial_cycle_HDX}, and we have
    \begin{equation}
        |z| = \sum_x |\pi_x^{X_1}(z)| \geq S_1(Y) S_1(X) \,,
    \end{equation}
    giving us the desired lower bound when we set $z$ to be the minimum weight element in $\ker\partial_1^\mathcal{X}\setminus\im\partial_2^\mathcal{X}$.

    Now we show that $S^1(\mathcal{X}) = S^1(X)$.
    First we show that $S^1(\mathcal{X}) \leq S^1(X)$.
    Consider a partition $A = A'\oplus A''$ such that $\dim A' = \dim B$ and the restriction of $(\partial_2^Y)^\intercal$ to $A'$, denoted as $(\partial_2^Y)^\intercal|_{A'}$ is full-rank (i.e. $(\partial_2^Y)^\intercal|_{A'}$ is a $\dim B \times \dim A$ non-singular submatrix of $(\partial_2^Y)^\intercal$).
    So, subspace $A''$ is not in $\im(\partial_2^Y)^\intercal$.
    Now consider $z = z'\oplus z'' \in\mathcal{X}^1$ with $z''=0$ and $z'=x\otimes a''$ for $x\in\ker(\partial_2^X)^\intercal\setminus\im(\partial_1^X)^\intercal$ with weight $|x|=S^1(X)$ and a basis element $a''\in A''$ (so that $|a''|=1$).
    Thus, we have $|z| = S^1(X)$.
    Now we need to show that $z\in\ker(\partial_2^\mathcal{X})^\intercal\setminus\im(\partial_1^\mathcal{X})^\intercal$.
    Since $x\in\ker(\partial_2^X)^\intercal$, we have $z\in\ker(\partial_2^\mathcal{X})^\intercal$.
    Also we have $z\notin\im(\partial_1^\mathcal{X})^\intercal$ because $x\notin\im(\partial_1^X)^\intercal$ and $a''\notin(\partial_1^Y)^\intercal$.
    Thus $z\in\ker(\partial_2^\mathcal{X})^\intercal\setminus\im(\partial_1^\mathcal{X})^\intercal$, which implies that
    \begin{equation}
        S^1(\mathcal{X}) \leq |z| = S^1(X) \;.
    \end{equation}

    To show that $S^1(\mathcal{X}) \geq S^1(X)$, consider again partition $A=A'\oplus A''$ and full-rank restriction $(\partial_2^Y)^\intercal|_{A'}$ as above.
    Consider a basis $\mathcal{Z}=\{x_i \otimes a_j''\}_{i,j}$ of $H^1(\mathcal{X}) = H^1(X)\otimes H^1(Y)$ where $\{a_j''\}_j$ is a basis of $A''$ and $\{x_i\}_i$ is a minimal weight representative basis of $H^1(X)$ so that $|a_j''|=1$ and $|x_i|\geq S^1(X)$.
    Thus, $|x_i \otimes a_j''|\geq S^1(X)$. 
    Note that $|\mathcal{Z}| = \dim H^1(X) \dim A'' = \dim H^1(X) \dim H^1(Y)$ since $\dim(\ker(\partial_2^Y)^\intercal/\im(\partial_1^Y)^\intercal = \dim A''$.
    Now let $z = z'\oplus z'' \in\mathcal{X}^1$ with $z''=0$ and $z' = \sum_{i,j} z_{i,j}' x_i\otimes a_j''$ for $z_{i,j}\in\field_q$.
    It holds that $z\in\ker(\partial_2^\mathcal{X})^\intercal$ because $(\partial_2^X)^\intercal x_i = 0$ for all $i$.
    Also, we have $z\notin\im(\partial_1^\mathcal{X})^\intercal$ because $x_i\notin\im(\partial_1^X)$ and $a_j''\notin\im(\partial_1^Y)^\intercal$.
    So $z\in\ker(\partial_2^\mathcal{X})^\intercal\setminus\im(\partial_1^\mathcal{X})^\intercal$.
    It remains to show that $|z|\geq S^1(X)$.
    Since $\{x_i\}_i$ are minimal weight representative basis of $H^1(X)$, it holds that sum between any two elements is lower bounded by $S^1(X)$.
    Also since $\{a_j''\}_j$ is a basis of $A''$, adding any two basis elements increases the weight.
    Thus $|z| = |z'| = |\sum_{i,j} z_{i,j}' X_i\otimes a_j''| \geq S^1(X)$.
\end{proof}

By~\cref{lemma:HDX_complex_homology_systole}, the parameters $\stabcode{n}{k}{d}_q$ (for $d = \min\{d_X,d_Z\}$) of a qudit CSS code defined by $\mathcal{X}$ are given by
\begin{align}
    n &= \dim{\mathcal{X}_1} = \abs{X_1}\abs{A} + \abs{X_2}\abs{B}\,,
\end{align}
for the number of physical qudits,
\begin{align}
    k &= \dim{H_1(\mathcal{X})} = \dim{H_1(X)}\dim{H_1(Y)}
\end{align}
for the number of logical qudits, and
\begin{align}
    d_X &= S^1(\mathcal{X}) = S^1(X) \,, \\
    d_Z &= S_1(\mathcal{X}) = S_1(X)S_1(Y)
\end{align}
for the $X$- and $Z$-distances. The qudit HDX code is then defined by a (family of) chain complex $\mathcal{X}$ where chain complexes $X$ and $Y$ with nice properties are chosen to give the desired parameters.

\subsection{Simplicial complexes and Ramanujan complexes}\label{subsec:ramanujan-complexes}
We now define a $d$-dimensional simplicial complex, namely a length-$(d+1)$ chain complex $X$ with sets $\{X_k\}_{k=0}^{d+1}$, where the set $X_k$ for $k \in \{1, \ldots, d\}$ describe the set of $k$-simplices and the boundary map $\partial^X$ maps a $k$-simplex in $X_k$ to a set of $k-1$ simplices. Note that $X_0$ is the set of points, $X_1$ is the set of edges, $X_2$ is the set of triangles, $X_3$ is the set of tetrahedrons, and so on for higher-dimensional simplices. Thus, the boundary map $\partial^X$ maps a tetrahedron in $X_3$ to the four triangles in $X_2$ on its faces. Similarly, $\partial^X$ maps each triangle in $X_2$ to its three edges in $X_1$. Also, $\partial^X$ maps each edge in $X_1$ to the pair of vertices in $X_0$ that it connects. Formally, we can define a simplicial complex as follows.

\begin{definition}[{Simplicial chain complex}]\label{def:simplicial-complex}
    A length-$(d+1)$ chain complex 
    \begin{align}
        X : 0 \xrightarrow{\partial_{d+1}^X} X_d \xrightarrow{\partial_d^X} \ldots \xrightarrow{\partial_1^X} X_0 \xrightarrow{\partial_0^X} 0
    \end{align}
    is a $d$-dimensional simplicial chain complex if for all $k \in \{1, \ldots, d\}$ and any basis element $v \in X_k$ representing a $k$-simplex, the Hamming weight of its boundary is $\wt{\partial_k^X(v) = k+1}$ (i.e., the boundary is a linear combination of exactly $k+1$ distinct $(k-1)$-simplices with non-zero coefficients).
    The size of a simplicial complex $X^{(j)}$ is denoted by $\abs{X^{(j)}} := \abs{X_0^{(j)}}$, which is the number of vertices in $X^{(j)}$.
\end{definition}

Similar to the qubit HDX code in~\cite{evra2020}, the qudit HDX code is a family of CSS code defined by a sequence of 3-dimensional chain complexes $\mathcal{X}^{(1)}, \mathcal{X}^{(2)},\mathcal{X}^{(3)}, \ldots$ where $\mathcal{X}^{(j)} = ({X^{(j)}}^* \otimes {Y^{(j)}}^*)^*$ as is given in~\cref{eqn:HDX-cochain-complex} and where we take $\Tilde{X}^{(j)}$ as a subcomplex
\begin{align}
    \Tilde{X}^{(j)} : X^{(j)}_3 \xrightarrow{\partial_3^{(j)}} X^{(j)}_2 \xrightarrow{\partial_2^{(j)}} X^{(j)}_1
\end{align}
from a 3-dimensional simplicial complex
\begin{align}
    X^{(j)} : 0
    \xrightarrow{\partial_4^{(j)}} X^{(j)}_3 \xrightarrow{\partial_3^{(j)}} X^{(j)}_2 \xrightarrow{\partial_2^{(j)}} X^{(j)}_1 \xrightarrow{\partial_1^{(j)}} X^{(j)}_0 \xrightarrow{\partial_0^{(j)}} 0
\end{align}
for each $j$. On the other hand, we take $Y^{(1)},Y^{(2)}, \ldots$ to be a sequence of 2-chain complexes corresponding to a good classical LDPC code. Then, by picking a particular sequence of simplicial complexes $X^{(1)}, X^{(2)}, \ldots$, we obtain a family CSS codes $C^{(1)}, C^{(2)}, \ldots$ where CSS code $C^{(j)}$ is defined by the chain complex $\mathcal{X}^{(j)}$ with parameters $\stabcode{n{(j)}}{k{(j)}}{d_X{(j)},d_Z{(j)}}_q$ satisfying $\lim_{j\to\infty} n{(j)} = \infty$ and
\begin{equation}
    \label{eqn:hdx-parameters}
    \begin{aligned}
        k{(j)} &= \Theta(n{(j)})\,,\\
        d_X{(j)} &= \Omega(n{(j)})\,,\\
        d_Z{(j)} &= \Omega((\log n{(j)})^2)\,.
    \end{aligned}
\end{equation}
As in the case for the qubit HDX code~\cite{evra2020}, we take the sequence of simplicial complexes $X^{(1)}, X^{(2)}, \ldots$ to be \emph{Ramanujan complexes}~\cite{lubotzky2013,lubotzky2017} with certain properties such that the asymptotic values of the $\stabcode{n{(j)}}{k{(j)}}{d_X{(j)},d_Z{(j)}}_q$ of the $C^{(j)}$ parameters above are obtained. For the qudit HDX codes we also use the sequence of Ramanujan complexes described in~\cite{evra2020} (which in turn is based on~\cite[Theorem 1.1]{lubotzky2005},\cite[Section 2.3]{lubotzky2013}), which can be shown to satisfy $\lim_{j\to\infty} \abs{X^{(j)}} = \infty$ and have nontrivial first and second cohomology groups:
\begin{align}
    \dim H^1(X^{(j)}) &> 0\,,\\
    \dim H^2(X^{(j)}) &> 0\,,
\end{align}
as well as $k\numth$ systole and co-systole that are lower-bounded as~\cite[Theorem 5.11]{evra2020}
\begin{align}
    S_k(X^{(j)}) \geq c\lp\log_q\abs{X^{(j)}}\rp^k
\end{align}
and~\cite[Theorem 5.7]{evra2020}
\begin{align}
    S^k(X^{(j)}) \geq d \abs{X^{(j)}}\,,
\end{align}
respectively. Thus, by~\cref{lemma:HDX_complex_homology_systole} the parameters $\stabcode{n{(j)}}{k{(j)}}{d_X{(j)},d_Z{(j)}}_q$ of $C^{(j)}$ are
\begin{align}
    n{(j)} &= \abs{X_1^{(j)}}\abs{Y_1^{(j)}} + \abs{X_2^{(j)}}\abs{Y_0^{(j)}}\,,\label{eqn:HDX_param1}\\
    k{(j)} &= \dim H_1(\Tilde{X}^{(j)})\dim H_1(\Tilde{Y}^{(j)})\,,\label{eqn:HDX_param2}\\
    d_X{(j)} &= S^1(\Tilde{X}^{(j)})\,,\label{eqn:HDX_param3}\\
    d_Z{(j)} &= S_1(\Tilde{X}^{(j)}) S_1(Y^{(j)})\,.\label{eqn:HDX_param4}
\end{align}
\sloppy Thus, by substituting in the values of the dimension of the homology groups and the 1-systole and co-systole, we obtain the asymptotic values of the $C^{(j)}$ parameters $\stabcode{n{(j)}}{k{(j)}}{d_X{(j)},d_Z{(j)}}_q$ in~\cref{eqn:hdx-parameters}.

\section{Qudit fiber bundle codes}\label{sec:qudit-fiber-bundle-codes}
The first quantum LDPC codes to achieve a distance greater than $n^{1/2}\polylog(n)$ was introduced a few years ago and is based on fiber bundles~\cite{hastings2020}. For qubits, this code achieves a distance of $\Omega\lp n^{3/5}/\polylog(n) \rp$ and a code dimension of $\Tilde{\Theta}\lp n^{3/5} \rp$ for $n$ physical qubits. Now, the final LDPC codes that we quditize are the \emph{fiber bundle codes} to give \emph{qudit fiber bundle codes}. In the first part of this section, we produce the background and code properties from  the original work by Hastings, Haah, and O'Donnell~\cite{hastings2020}; we refer the reader to this reference for all of the details, where here we only reproduce what we need to find the code parameters of our qudit fiber bundle codes.

We start with a bit of mathematical background needed to understand this class of codes. Familiarity with the concepts introduced in~\cref{subsec:homological-algebra-primer} is helpful for this section. We first define the fundamental object these codes are based on, the \emph{fiber bundle}:

\begin{definition}[Fiber bundle]\label{def:fiber-bundle}
    A \emph{fiber bundle} is a structure that consists of:
    \begin{itemize}
        \item A base space $B$,
        \item A total space $E$,
        \item A fiber $F$, and
        \item A continuous surjection $\pi: E \to B$ that satisfies a local triviality condition.
    \end{itemize}
    The fiber bundle $(E, B, \pi, F)$ is often denoted as
    \begin{align}
        F \to E \xrightarrow{\pi} B\,.
    \end{align}
\end{definition}
A fiber bundle generalizes a product space. Stated differently, a fiber bundle can be thought of as a space that is locally a product space but globally may have a non-product-like topological structure. While a simple product (like a cylinder, which is a product of a circle and an interval) has the same structure everywhere, a fiber bundle allows for a global "twist." A common example of a fiber bundle is a M\"{o}bius strip, where locally it looks like the product of a circle with an interval, but globally there is a twist, where going around once reverses the interval. Fiber bundle codes exploit this by starting with a homological product (like the HGP codes discussed in~\cref{sec:qudit-hgp-codes}) and introducing intentional twists. This modification to the boundary operators significantly enhances the code's distance, which led to the first LDPC codes that surpass the $\sqrt{n}$ distance barrier~\cite{hastings2020}. In the construction of fiber bundle codes, we associate the fiber with a cycle graph and a random base with a random classical LDPC code. The main result that Hastings, Haah, and O'Donnell derive for qubit fiber bundle codes is given as follows:

\begin{fact}[Qubit fiber bundle codes~{\cite[Thm. 1.1]{hastings2020}}]\label{fact:qubit-fiber-bundle-codes}
    There exists a family of quantum codes on $n$ qubits with $d_X = \bigOmega{n^{1/2}\polylog(n)}$ and $d_Z = \bigOmega{n^{3/4}\polylog(n)}$, where all stabilizer generators have weight at most $\polylog(n)$ and all qubits participate in at most $\polylog(n)$ stabilizer generators. The code has $\bigTheta{n^{1/2}}$ logical qubits.
\end{fact}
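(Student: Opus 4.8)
The plan is to realize the claimed code as a \emph{fiber bundle code} in the sense of \cite{hastings2020} and then read off its parameters. Take the base space to be a graph $\mathcal{B}$, namely the Tanner graph of a random constant-rate classical LDPC code, so that with high probability $\mathcal{B}$ has bounded degrees, linear distance, and good expansion; take the fiber to be the cycle graph $C_\ell$ on $\ell$ vertices; and equip the bundle with a \emph{connection} assigning to each edge $e$ of $\mathcal{B}$ a uniformly random cyclic rotation $\sigma_e$ of $C_\ell$. The total space $E$ is the resulting fiber bundle: its cells are products of cells of $\mathcal{B}$ with cells of $C_\ell$, but the identifications between fibers over adjacent base vertices are twisted by the $\sigma_e$. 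This produces a length-$3$ chain complex $E_2 \xrightarrow{\partial_2} E_1 \xrightarrow{\partial_1} E_0$ over $\field_2$ — a twisted version of the tensor product of the base and fiber complexes — which, exactly as in \cref{subsec:coding-theory-primer}, defines a CSS code with $H_X = \partial_1$ and $H_Z = \partial_2^\intercal$.

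The next steps are bookkeeping. First, $\partial_1 \partial_2 = 0$: each $\sigma_e$ is a simplicial automorphism of $C_\ell$, so the connection is ``flat'' in the sense needed for the twisted differentials to square to zero (the chain-level form of the local-triviality condition in \cref{def:fiber-bundle}). Second, the LDPC property is immediate, since $\mathcal{B}$ has bounded degree and $C_\ell$ is $2$-regular, so every row and column of $H_X$ and $H_Z$ has weight $\bigOh{1}$; the $\polylog$ slack in the statement is needed only when, after balancing, one allows $\mathcal{B}$ or the fiber to inflate by polylogarithmic factors. Third, $n = \dim E_1 = \bigTheta{|\mathcal{B}|\,\ell}$. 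Fourth, $k = \dim H_1(E)$: the untwisted product would have $\dim H_1$ computed by the K\"unneth formula as in \cref{prop:qudit-hgp-dimension}, and the twist corrects this only by bounded amounts, which can be tracked via a Wang-type long exact sequence for the fibration $C_\ell \to E \to \mathcal{B}$; for a constant-rate base this gives $k = \bigTheta{|\mathcal{B}|}$. Choosing the balance $\ell = \bigTheta{\sqrt{n}}$, hence $|\mathcal{B}| = \bigTheta{\sqrt{n}}$, then yields $k = \bigTheta{n^{1/2}}$, as claimed.

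The real content is the distance, and it splits unevenly. The $X$-distance is the gentler half: a nontrivial $X$-logical is a $1$-cocycle of $E$ that is not a coboundary, and the (twisted) product structure forces its weight to be at least roughly the minimum of the base code's codistance and $\ell$, which under the balance above is $\bigOmega{n^{1/2}\polylog(n)}$. The $Z$-distance is where the twist earns its keep. A nontrivial $Z$-logical is a $1$-cycle $z \in \ker \partial_2^\intercal$ that is not a boundary; project its support onto $\mathcal{B}$. If this horizontal projection is large, expansion of $\mathcal{B}$ already forces $|z|$ to be large; if it is small — essentially supported on a contractible part of $\mathcal{B}$ — then $z$ restricted to the fibers over that region must wrap nontrivially around some $C_\ell$, and the random rotations $\sigma_e$ are with high probability ``incompressible'', so realizing such a wrap costs $\bigOmega{\ell}$ weight per independent strand. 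Combining a systolic/expansion estimate for the base with this vertical cost gives $|z| = \bigOmega{n^{3/4}\polylog(n)}$ in the chosen regime.

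The main obstacle, and the part that genuinely required new ideas in \cite{hastings2020}, is precisely this lower bound on $d_Z$: one must take a union bound over all candidate low-weight $Z$-cycles and over the randomness of \emph{both} the base code and the connection, and show that none of them escapes being a boundary — in particular ruling out cycles that are ``nearly vertical'' over many base edges, where the base code's expansion and the connection's incompressibility have to be played off against each other carefully. Everything else — the construction, the CSS condition, the LDPC property, and the computation of $n$ and $k$ — is routine given the chain-complex and homological machinery of \cref{subsec:homological-algebra-primer,subsec:coding-theory-primer} and the product-code analysis already carried out in this paper.
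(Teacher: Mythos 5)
The paper does not actually prove this statement: it is imported verbatim from~\cite[Thm.~1.1]{hastings2020} as a Fact, and the surrounding text only summarizes the construction (a random classical base code with $m_B = \tfrac{3}{4} n_B$ checks and vertex degree $\Delta = \Theta(\log^2 n_B)$, a cycle fiber with $n_F = m_F = \ell^2$ cells for odd $\ell$, and random cyclic twists) before quoting the resulting parameters. Your outline matches that construction in its essentials, and you correctly isolate the $d_Z$ lower bound as the step that carries the real content, which you also defer to~\cite{hastings2020}; as a map of where the difficulty lies, the sketch is accurate.

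There is, however, one concrete error. You assert that the base Tanner graph has bounded degree and conclude that "the LDPC property is immediate," with every row and column of $H_X$ and $H_Z$ having weight $\bigOh{1}$. This contradicts both the statement being proved (which claims only weight at most $\polylog(n)$) and the actual construction: the base code is taken with vertex degree $\Delta = \Theta(\log^2 n_B)$, and the paper explicitly notes that the resulting code is \emph{not} LDPC and must be weight-reduced, at a polylogarithmic cost in distance, to obtain \cref{cor:qubit-ldpc-fiber-bundle-codes}. The degree choice is not cosmetic. The $d_Z$ argument in~\cite{hastings2020} requires the random base code at rate $1/4$ to have linear distance together with strong expansion-type properties over which the union bound against low-weight $Z$-cycles is run, and the $\log^2$ degree is what makes that union bound close; with a constant-degree base the argument as you describe it does not go through. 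Indeed, if your bounded-degree variant achieved $\bigOh{1}$ check weights together with $d_Z = \bigOmega{n^{3/4}/\polylog(n)}$, it would be a strictly stronger theorem than the one being proved. The remaining deviations --- computing $k$ via a Wang-type exact sequence rather than the bundle-projection isomorphism of \cref{fact:qubit-bundle-projection-props}, and taking the fiber to be a cycle on $\ell$ rather than $\ell^2$ vertices --- are harmless at the level of a sketch.
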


Note that this code is \emph{not} LDPC, but it can be weight-reduced to LDPC at polylogarithmic cost in the distance and number of physical qubits. For qubit LDPC fiber bundle codes, we have the following corollary:

\begin{corollary}[Qubit LDPC fiber bundle codes~{\cite[Cor. 1.2]{hastings2020}}]\label{cor:qubit-ldpc-fiber-bundle-codes}
    There exists a family of quantum LDPC codes on $n$ qubits having distance $d = \bigOmega{n^{3/5}/\polylog(n)}$ and with $\bigOmega{n^{3/5}/\polylog(n)}$ logical qubits.
\end{corollary}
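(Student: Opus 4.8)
This is the qubit statement of~\cite{hastings2020}, reproduced here for context; I sketch how one would obtain it starting from~\cref{fact:qubit-fiber-bundle-codes}. The plan has two parts: first, instantiate the fiber bundle construction of~\cref{fact:qubit-fiber-bundle-codes} at a \emph{balanced} choice of parameters that makes $d_X$, $d_Z$, and $k$ all scale as $n^{3/5}$ up to polylogarithmic factors; second, apply a weight reduction to turn the resulting code — whose checks have weight $\polylog(n)$ — into a genuinely LDPC (bounded-weight) code at only polylogarithmic cost in $n$ and in the distance.

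For the first part, recall that the construction is parametrized by the size $\ell$ of the classical base code (an expanding LDPC code on $\ell$ bits with $\Theta(\ell)$ checks and distance $\Theta(\ell)$) and the length $m$ of the cyclic fiber $C_m$, so that the code has $n = \Theta(\ell m)$ physical qubits. The homological analysis underlying~\cref{fact:qubit-fiber-bundle-codes} — a K\"unneth-type computation for the (twisted) fiber bundle chain complex, together with the accompanying systole bounds — expresses $k$, $d_X$, and $d_Z$ as functions of $\ell$ and $m$: the number of logical qubits tracks the base-code dimension, $k = \tilde{\Theta}(\ell)$; the $X$-distance tracks the base-code distance, $d_X = \tilde{\Theta}(\ell)$; and the $Z$-distance is raised, by the generic twist, well past the naive weight-$m$ ``vertical'' cycle, to $d_Z = \tilde{\Theta}(m^{3/2})$. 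The operating point quoted in~\cref{fact:qubit-fiber-bundle-codes} corresponds to $\ell \asymp m \asymp n^{1/2}$, at which $d = \min\{d_X,d_Z\} = d_X = n^{1/2}\,\polylog(n)$ is limited by the $X$-distance. Instead I would take $\ell \asymp m^{3/2}$, i.e.\ $\ell = \Theta(n^{3/5})$ and $m = \Theta(n^{2/5})$; then $d_X = \tilde{\Theta}(\ell) = \tilde{\Theta}(n^{3/5})$ and $d_Z = \tilde{\Theta}(m^{3/2}) = \tilde{\Theta}(n^{3/5})$ are balanced, so $d = \tilde{\Omega}(n^{3/5})$, while simultaneously $k = \tilde{\Theta}(\ell) = \tilde{\Theta}(n^{3/5})$.

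For the second part, note that the code so constructed still has stabilizer generators of weight $\polylog(n)$ and qubit degree $\polylog(n)$, so it is not yet LDPC. I would finish by applying a standard weight-reduction procedure — the ``thickening / decongestion / coning / sparsification'' gadgetry of Hastings and its later refinements — which converts any CSS code with check weight $w$ and qubit degree $\Delta$ into a bounded-weight CSS code with $n' = \operatorname{poly}(w,\Delta)\cdot n = n\cdot\polylog(n)$ physical qubits, the same number $k' = k$ of logical qubits, and distance $d' = \tilde{\Omega}(d)$. Since $n'$ differs from $n$ only by a polylog factor, re-expressing the parameters in terms of $n'$ gives $d' = \bigOmega{(n')^{3/5}/\polylog(n')}$ and $k' = \bigOmega{(n')^{3/5}/\polylog(n')}$, which is the claim.

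The main obstacle is not the bookkeeping above but the $d_Z$ lower bound for the twisted bundle, which is the heart of~\cref{fact:qubit-fiber-bundle-codes}: one must show that a low-weight $Z$-logical, read off fiber by fiber, yields a family of local $C_m$-cycles that cannot be glued into a consistent global object below weight $\sim m^{3/2}$, because the holonomy of a generic (random) connection around each short cycle of the base Tanner graph is nontrivial; making ``cannot be glued'' quantitative requires the expansion of the base code and a union bound over twists. A secondary technical point is checking that the chosen weight-reduction gadget loses only a polylogarithmic — not a polynomial — factor in the distance when applied to codes with polylog-weight checks, so that the $n^{3/5}$ scaling genuinely survives.
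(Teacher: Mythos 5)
Your second step (weight reduction at polylogarithmic cost) is the right closing move and matches the remark the paper makes just before the corollary, but your first step contains a genuine error that would make the construction fail. You assert that $d_X$ tracks the base size $\ell$ and that $d_Z = \tilde{\Theta}(m^{3/2})$ tracks the fiber alone, and on that basis you retune to $\ell \asymp m^{3/2}$. The actual dependence, as stated in \cite{hastings2020} and reproduced in the discussion surrounding \cref{def:fiber-bundle-code} (and in \cref{lemma:qudit-fiber-bundle-code-parameters}), is $d_X = \bigOmega{m_F/\polylog(n_B)}$ (the \emph{fiber} size) and $d_Z = \bigOmega{n_B\, m_F^{1/2}/\polylog(n_B)}$ (base size times the square root of the fiber size), valid for $n_B \geq m_F$. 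Your formulas happen to agree with these only at the symmetric point $n_B \sim m_F \sim n^{1/2}$, which is why the discrepancy is invisible there. At your proposed operating point $n_B = \Theta(n^{3/5})$, $m_F = \Theta(n^{2/5})$, one gets $d_X = \bigOmega{n^{2/5}/\polylog(n)}$, i.e., a \emph{worse} distance than the balanced choice. More fundamentally, no retuning can work: subject to $n_B m_F = \Theta(n)$ and $n_B \geq m_F$ one has $m_F \leq n^{1/2}$, so $\min\{m_F,\, n_B m_F^{1/2}\} \leq n^{1/2}$ always. The $n^{3/5}$ exponent cannot be reached from inside the fiber bundle construction by choosing $\ell$ and $m$ differently.

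What is actually needed, and what both \cite{hastings2020} and the paper's own text invoke ("after distance balancing, we have a $\stabcode{n}{\bigTheta{n^{1/2}}}{\bigOmega{n^{3/5}\polylog(n)}}$ fiber bundle code"), is an \emph{external} distance-balancing step: starting from the code of \cref{fact:qubit-fiber-bundle-codes} with $k = \bigTheta{n^{1/2}}$, $d_X = \tilde{\Omega}(n^{1/2})$, $d_Z = \tilde{\Omega}(n^{3/4})$, one takes a homological product with a good classical LDPC code of length $\Theta(n^{1/4})$ (the procedure of Evra, Kaufman, and Z\'emor, refined by Hastings). This multiplies the block length by $\Theta(n^{1/4})$ and multiplies both $d_X$ and $k$ by $\Theta(n^{1/4})$ while leaving $d_Z$ essentially unchanged, so that in terms of the new length $n' = \Theta(n^{5/4})$ all of $d_X$, $d_Z$, and $k$ become $\tilde{\Theta}\lp (n')^{3/5} \rp$. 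Note that this balancing step is also where the corollary's logical-qubit count $\bigOmega{n^{3/5}/\polylog(n)}$ comes from; your route, even if the distance formulas were as you state, would need a separate argument for $k$. Only after this balancing does the weight reduction you describe complete the proof.
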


The fiber bundle code construction relies on homological algebra and chain complexes, which we introduced in~\cref{subsec:homological-algebra-primer}. Here, a chain complex defines a quantum code by picking an integer $\nu > 0$ and associating $\nu$-cells of the complex with qubits and $(\nu-1)$- and $(\nu+1)$-cells with $X$- and $Z$-stabilizer generators, respectively. $Z$ logical operators are then the $\nu\numth$ homology class and $X$ logical operators are the $\nu\numth$ cohomology class. $d_X$ is the lowest possible Hamming weight of a vector that represents a nontrivial $\nu\numth$ cohomology and $d_Z$ is likewise defined for the homology.

In order to derive these code parameters, we need to build up the mathematical background for fiber bundles, which we do here rather than in~\cref{subsec:homological-algebra-primer} because these notions are only relevant to the fiber bundle codes. We follow the presentation in~\cite{hastings2020} and reproduce the relevant definitions, results, etc., indicating where in the original paper these notions are. We start by reviewing the definition of a homological product. Given a base complex $B$ and a fiber complex $F$, we construct the product $E$, called a bundle, by taking tensor products of the component chain vector spaces:
\begin{equation}
    \begin{gathered}
    \xymatrix{
    \cdots \ar[d] &\ar[l] \cdots \ar[d]& B_j \otimes F_k \ar[d]^{\identity \otimes \partial_k}\ar[l]_{\partial_j \otimes \identity}\\
    B_0 \otimes F_1 \ar[d]^{\identity \otimes \partial_1} &\ar[l]_{\partial_1 \otimes \identity} B_1 \otimes F_1 \ar[d]^{\identity \otimes \partial_1}& \vdots \ar[l] \ar[d]\\
    B_0 \otimes F_0 &\ar[l]_{\partial_1 \otimes \identity } B_1 \otimes F_0 & \vdots \ar[l]
    }
    \end{gathered}
\end{equation}
The \emph{chain space} $E_\rho$ of the bundle is then the direct sum
\begin{align}
    E_\rho = \bigoplus_{\mu+\nu=\rho}{E_{\mu,\nu}}\,,
\end{align}
where $E_{\mu,\nu} = B_\mu \otimes F_\nu$, along the diagonal line $\mu+\nu=\rho$ for $\rho \geq 0$ in the diagram. The boundary map on each $E_{\mu,\nu}$ is defined as
\begin{align}
    \partial_\rho^{E}\eval_{(\mu,\nu)} = I \otimes \partial_\nu^{F} + \partial_\mu^{B} \otimes I
\end{align}
for vector spaces over $\field_2$. For general coefficient groups (i.e., for $\field_q$), we have
\begin{align}
    \partial_{(\mu,\nu)}^{E} = (-1)^\mu I \otimes \partial_\nu^{F} + \partial_\mu^{B} \otimes I\,,
\end{align}
that is, we pick up a minus sign as usual.

This formalism was for non-twisted bundles; we now move to twisted bundles. We assume that the fiber admits an automorphism group $G$, which is a collection of permutation actions on a set of $\nu$-cells for each $\nu$ such that the boundary operator commutes with the permutation:
\begin{align}
    g \partial f^\nu = \partial gf^\nu\ \forall g \in G, f^\nu \in F_\nu\,.\label{eqn:fiber-automorphism}
\end{align}

\begin{definition}[Twisted boundary map~{\cite[Def. 2.1]{hastings2020}}]\label{def:twisted-boundary-map}
    Given a fiber automorphism $G$ obeying~\cref{eqn:fiber-automorphism}, a \emph{connection} $\varphi$ of a bundle is an arbitrary assignment of an automorphism group element, a \emph{twist}, for each pair of a base cell and one of its boundary cell:
    \begin{align}
        \{(b,a) : b, a \text{ are cells such that } a \in \partial b\} \xrightarrow{\varphi} G\,.
    \end{align}
    We define a \emph{twisted boundary map} $\partial^{E}$ by $\varphi$:
    \begin{align}
        \partial_{(0,\nu)}^{E}(b^0 \otimes f) &= b^0 \otimes \partial f\,,\\
        \partial_{(1,\nu)}^{E}(b^1 \otimes f) &= b^1 \otimes \partial f + \defsum{a^0 \in \partial b^1}{}{a^0 \otimes \varphi(b^1, a^0)f}\,.
    \end{align}
    Again, for coefficients over $\field_q$, we replace $b^1 \otimes \partial f$ by $-b^1 \otimes \partial f$.
\end{definition}

The twisted boundary map can be generalized to any higher dimensional base complex, but we must have $\partial^{E}\partial^{E} = 0$. For example, if the base is a 2-complex, we may need an extra term:
\begin{align}
    \partial_{(2,\nu)}^{E}(b \otimes f) = b \otimes \partial f + \defsum{e \in \partial b}{}{e \otimes \varphi(b,e)f} + \defsum{\substack{v \in \partial b :\\e \in \partial b}}{}{v \otimes f_{v,e,b}^+}\,,
\end{align}
where $f_{v,e,b}^+$ is a $(\nu+1)$-cell of the fiber. However, we only consider 1-complexes and leave higher dimensional base complexes to future work.

We now consider isomorphisms on (co)homology groups. The first homology and cohomology, $H_1(E)$ and $H^1(E)$, respectively, of the bundle are isomorphic to $H_1(B)$ and  $H^1(B)$, the homology and cohomology of the base, respectively. This implies that $b_1(E) = b_1(B)$.

\begin{definition}[Bundle projection~{\cite[Def. 2.4]{hastings2020}}]\label{def:bundle-projection}
    The \emph{bundle projection} $\Pi_\rho : E_\rho \to B_\rho$ is defined as
    \begin{align}
        b^\rho \otimes f^0 &\mapsto b^\rho\,,\\
        b^{\rho-j} \otimes f^j &\mapsto 0 \text{ for } j > 0
    \end{align}
    for all $\rho$-cells $b^\rho$ and $(\rho-j)$-cells $b^{\rho-j}$ of the base and 0-cells $f^0$ and $j$-cells $f^j$ of the fiber. Furthermore, the bundle projection forms a chain map.
\end{definition}

With this definition in mind, we have the following result:

\begin{fact}[Qubit bundle projection properties~{\cite[Lemma 2.5]{hastings2020}}]\label{fact:qubit-bundle-projection-props}
    The bundle projection induces vector space isomorphisms $\Pi_* : H_1(E) \to H_1(B)$ and $\Pi^* : H^1(B) \to H^1(E)$ if all of the following hold:
    \begin{enumerate}[label=(\roman*),ref=(\roman*)]
        \item $B$ is a 1-complex
        \item The boundary $\partial f^1$ of any fiber 1-cell $f^1$ has even weight
        \item Every fiber 0-chain of even weight is a boundary
        \item $H_0(B) = 0$, that is, the zeroth Betti number vanishes
        \item Every fiber automorphism acts trivially on $H_1(F)$.
    \end{enumerate}
\end{fact}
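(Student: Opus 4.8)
The plan is to present $E$ as an extension of two elementary complexes and read off everything from the resulting long exact sequence. Because $B$ is a $1$-complex (condition~(i)), each chain space splits as $E_\rho = (B_0 \otimes F_\rho) \oplus (B_1 \otimes F_{\rho-1})$, and from the twisted boundary map of~\cref{def:twisted-boundary-map} one sees that $S_\bullet := B_0 \otimes F_\bullet$ is a subcomplex: the differential on $b^0 \otimes f$ is just $b^0 \otimes \partial f$, since no twist term attaches to base $0$-cells. The quotient $Q_\bullet := E_\bullet/S_\bullet$ is then $B_1 \otimes F_{\bullet-1}$ with differential $\identity_{B_1} \otimes \partial^F$ (up to a sign that is irrelevant over $\field_2$), because the twist term $\sum_{a^0 \in \partial b^1} a^0 \otimes \varphi(b^1,a^0)f$ lies in $S$ and vanishes in $Q$. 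As $B_0$ and $B_1$ are sums of copies of $\field_2$, we get $H_n(S) = B_0 \otimes H_n(F)$ and $H_n(Q) = B_1 \otimes H_{n-1}(F)$. I would also note here that $\Pi$ of~\cref{def:bundle-projection} is a chain map — this uses condition~(ii), which is exactly what makes $\Pi_1$ annihilate $b^0 \otimes \partial f^1$ — and that $\Pi$ factors as $E \to Q \to B_1$, the second map being $\identity_{B_1}$ tensored with the reduction $F_0 \to H_0(F)$.

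The two remaining ingredients are the homology of the fiber and the connecting maps. Since weight parity is an $\field_2$-linear functional, condition~(ii) places $\im \partial_1^F$ inside the codimension-$1$ subspace of even-weight $0$-chains, and condition~(iii) gives the opposite inclusion; hence $H_0(F) = F_0/\im\partial_1^F \cong \field_2$, with the isomorphism given by weight parity. Now the short exact sequence $0 \to S \to E \to Q \to 0$ produces the exact strand
\begin{equation*}
H_2(Q) \xrightarrow{\ \alpha\ } H_1(S) \xrightarrow{\ \beta\ } H_1(E) \xrightarrow{\ \gamma\ } H_1(Q) \xrightarrow{\ \delta\ } H_0(S)\,,
\end{equation*}
and I claim that under the identifications above both $\delta$ and $\alpha$ are the base differential $\partial_1^B$ (tensored with $\identity_{H_0(F)}$, resp.\ $\identity_{H_1(F)}$). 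Indeed, lift a generator $b^1 \otimes f$ of $H_\bullet(Q)$ to the obvious element of $E$ and apply $\partial^E$; since $\partial f = 0$ one is left with $\sum_{a^0 \in \partial b^1} a^0 \otimes \varphi(b^1,a^0)f$, and $\varphi(b^1,a^0)$ — being a permutation — preserves weight parity (giving $\delta$), while by condition~(v) it acts trivially on $H_1(F)$ (giving $\alpha$); in both cases the connecting class is $(\partial_1^B b^1) \otimes [f]$.

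The conclusion now follows by diagram chasing. Condition~(iv) makes $\alpha$ surjective, because its cokernel is $H_0(B) \otimes H_1(F) = 0$; hence $\beta = 0$, $\gamma$ is injective, and exactness at $H_1(Q)$ identifies $H_1(E)$ with $\ker \delta = \ker \partial_1^B = H_1(B)$, the last equality holding since $B$ is a $1$-complex. Comparing with the factorization $E \to Q \to B_1$ noted above shows that this isomorphism is precisely $\Pi_*$, which in particular gives $b_1(E) = b_1(B)$. Finally, over the field $\field_2$ one has natural identifications $H^1(E) \cong H_1(E)^*$ and $H^1(B) \cong H_1(B)^*$ under which $\Pi^*$ is the transpose of $\Pi_*$, so $\Pi^*$ is an isomorphism as soon as $\Pi_*$ is.

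I expect the main obstacle to be the careful bookkeeping of the twist terms when computing the connecting homomorphism: this is the only place where the bundle structure (as opposed to an ordinary homological product) genuinely intervenes, and it is where conditions~(ii) and~(v) are indispensable — condition~(v) in particular is what, together with~(iv), forces the ``extra'' term $\mathrm{coker}\,\alpha$ to vanish. Everything else is routine.
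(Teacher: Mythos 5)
Your proof is correct, but it takes a genuinely different route from the one the paper relies on: the paper simply defers to Hastings--Haah--O'Donnell (Sec.~2.4), and its own qudit generalization in \cref{prop:qudit-bundle-projection-props} mirrors that argument by separately verifying that $\Pi_*$ is well defined, onto, and one-to-one through explicit chain manipulations. You instead exploit condition (i) to exhibit $S=B_0\otimes F$ as a subcomplex of the twisted bundle with quotient $Q=B_1\otimes F_{\bullet-1}$, and run the long exact sequence, computing both connecting maps as $\partial_1^B$ tensored with the identity on $H_0(F)$ and $H_1(F)$ respectively. This is essentially the Wang sequence of the bundle, and its chief virtue is that it makes the role of each hypothesis transparent: (ii) is exactly the chain-map property of $\Pi$ and, together with (iii), collapses $H_0(F)$ to $\field_2$ via weight parity; (v) is what lets you evaluate the connecting map $\alpha$ as $\partial_1^B\otimes\identity_{H_1(F)}$; and (iv) is precisely surjectivity of $\alpha$, which kills the ``vertical'' contribution $H_1(S)$ and forces $\gamma$ to be injective, while exactness at $H_1(Q)$ hands you surjectivity onto $\ker\delta=\ker\partial_1^B=H_1(B)$ for free. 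Two small points to tighten: the claimed factorization of $\Pi$ through $Q$ holds only degreewise in degree $1$ (the reduction $Q\to B$ is not a chain map, since $Q_0=0$ while $\partial_1^B\neq 0$), though that is all you actually use; and one should note explicitly that $\Pi_1$ of a cycle lands in $\ker\partial_1^B$, which follows from the chain-map property you established. The paper's element-wise approach is what it later transports verbatim to $\field_q$; yours generalizes just as readily once ``even weight'' becomes ``coefficient sum zero'' and arguably gives a cleaner explanation of exactly where injectivity needs (iv) and (v).
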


See~\cite[Sec. 2.4]{hastings2020} for a proof of this lemma. With all of this background, we now discuss the basic features of the error correcting codes we get from fiber bundles:

\begin{definition}[Fiber bundle code~{\cite[Def. 2.11]{hastings2020}}]\label{def:fiber-bundle-code}
    The \emph{fiber bundle code} is a quantum CSS code whose logical operators are associated with homology and cohomology at dimension 1 of the twisted bundle complex $E_2 \to E_1 \to E_0$ built from the circle fiber $F_1 \to F_0$ and a base $B_1 \to B_0$.
\end{definition}

Twists can be arbitrary members of $\integers_{n_{F}}$, where we choose $n_{F} = m_{F} = \ell^2$ for some odd integer $\ell$. Then, we choose a random classical code for the base $B$ using $n_{B}$ bits (1-cells of $B$) and $m_{B}$ parity checks (0-cells). We represent the code using its Tanner graph, which is a bipartite graph $B$ with $m_{B}$ left-vertices and $n_{B}$ right-vertices. Choosing $m_{B} = \frac{3}{4}n_{B}$ and vertices with degree $\Delta = \bigTheta{\log^2(n_{B})}$ gives a minimum distance of $\bigOmega{n_{B}}$ and a set of linearly independent parity checks with high probability.

Thus, the bundle $E$ has $n = n_{B} \cdot m_{F} + m_{B} \cdot n_{F}$ 1-cells corresponding to qubits and the total number of cells is $(n_{B} + m_{B}) \cdot (n_{F} + m_{F})$. Thus, we have $\bigTheta{n_{B}}$ logical qubits and the distances are $d_X = \bigOmega{m_{F}/\log^2{n_{B}}}$ and $d_Z = \bigOmega{n_{B} \cdot m_{F}^{1/2} / \log^2{n_{B}}}$ for $n_{B} \geq m_{F}$. Choosing $n_{B} \sim m_{F}$ gives $n = \bigTheta{n_{B}^2}$, so we have $d_X = \bigOmega{n^{1/2}/\log^2{N}}$, $d_Z = \bigOmega{n^{3/4}/\log^2{n}}$, and $k = \bigTheta{n^{1/2}}$ (i.e., the number of logical qubits). After distance balancing, we have a $\stabcode{n}{\bigTheta{n^{1/2}}}{\bigOmega{n^{3/5}\polylog(n)}}$ fiber bundle code. This was, of course, a brief overview of the qubit fiber bundle codes to set the stage for their quditization. For all of the mathematical details, we refer the reader to~\cite{hastings2020}.

\subsection{Extension to qudits}\label{subsec:qudit-fiber-bundle-codes}
We now extend the above formalism to qudits and recover the number of physical qudits, number of logical qudits, and the distance. We have already done a lot of the heavy-lifting by developing the theory of fiber bundles and the codes we can derive from them. We start with a chain complex over $\field_q$ for some prime power $q = p^s$ for prime $p$ and integer $s \geq 1$:
\begin{align}
    \cdots \xrightarrow{\partial_{j+1}} \mathcal{A}_j \xrightarrow{\partial_j} \mathcal{A}_{j-1} \xrightarrow{\partial_{j-1}} \cdots \xrightarrow{\partial_1} \mathcal{A}_0 \xrightarrow{\partial_0 = 0} 0\,.
\end{align}
Here, each vector space is over $\field_q$ and the boundary maps $\partial_j : \mathcal{A}_j \to \mathcal{A}_{j-1}$ are matrices with elements from $\field_q$. The definition of the homology is the same:
\begin{align}
    H_j(\mathcal{A}) = \ker_{\field_q}{\partial_j} / \text{im}_{\field_q}{\partial_{j+1}}\,,
\end{align}
but now operations are done over $\field_q$, as indicated by the subscripts. The cohomology is extended over $\field_q$ analogously. The $j\numth$ Betti number is
\begin{align}
    b_j(\mathcal{A}) = \dim_{\field_q}{H_j(\mathcal{A})}\,.
\end{align}
The base complex $B$ and the fiber complex $F$ are now taken over $\field_q$ with boundary maps $\partial_j$ that have matrix elements from $\field_q$. The chain space $E_\rho$ of the bundle is then the direct sum of the product spaces of $B$ and $F$, as it was for qubits:
\begin{align}
    E_\rho = \bigoplus_{\rho=\mu+\nu}{E_{\mu,\nu}}\,,
\end{align}
where $E_{\mu,\nu} = B_\mu \otimes F_\nu$. With the chain spaces $E_\rho$ defined, we move on to the boundary maps $\partial^E$, starting with the untwisted boundary maps:
\begin{align}
    \partial_r^E\eval_{(\mu,\nu)} = (-1)^\mu I \otimes \partial_\nu^F + \partial_\mu^B \otimes I\,,
\end{align}
where the $(-1)^\mu$ ensures the relation
\begin{align}
    \partial_\rho^E\partial_{\rho+1}^E = 0\,,
\end{align}
which we can confirm with a quick calculation. Let $b^\mu \in B_\mu$ and $f^\nu \in F_\nu$. We want to show that
\begin{align}
    \partial_\rho^E \partial_{\rho+1}^E(b^\mu \otimes f^\nu) = 0\,,
\end{align}
where $\rho+1 = \mu+\nu$. First, we apply $\partial_{\rho+1}^E$:
\begin{align}
    \partial_{\rho+1}^E(b^\mu \otimes f^\nu) &= \lp (-1)^\mu I \otimes \partial_\nu^F+ \partial_\mu^B \otimes I \rp (b^\mu \otimes f^\nu)\\
    &= (-1)^\mu b^\mu \otimes \partial_\nu^F f^\nu + \partial_\mu^B b^\mu \otimes f^\nu\,.
\end{align}
Now, apply $\partial_\rho^E$:
\begin{align}
    \begin{split}
        \partial_\rho^E \partial_{\rho+1}^E(b^\mu \otimes f^\nu) = &(-1)^\mu b^\mu \otimes \partial_\nu^F f^\nu\\ &+ \partial_\mu^B b^\mu \otimes f^\nu\,.
    \end{split}
\end{align}
The first term gives
\begin{align}
    (-1)^\mu\lb (-1)^\mu b^\mu \otimes \underbrace{\partial_{\nu_1}^F(\partial_\nu^F f^\nu)}_{0} + \partial_\mu^B b^\mu \otimes (\partial_\nu^F f^\nu) \rb
\end{align}
and the second term gives
\begin{align}
    (-1)^{\mu-1}(\partial_\mu^B b^\mu) \otimes (\partial_\nu^F f^\nu) + \underbrace{\partial_{\mu-1}^B(\partial_\mu^B b^\mu)}_{0} \otimes f^\nu\,.
\end{align}
Combining everything, we have
\begin{align}
    \partial_\rho^E \partial_{\rho+1}^E(b^\mu \otimes f^\nu) &= (-1)^{\mu-1}(\partial_\mu^B b^\mu) \otimes (\partial_\nu^F f^\nu) + (-1)^\mu\partial_\mu^B b^\mu \otimes (\partial_q\nu F f^\nu)\\
    &= (-1)^\mu\lp \partial_\mu^B b^\mu \otimes \partial_\nu^F f^\nu- \partial_\mu^B b^\mu \otimes \partial_\nu^F f^\nu \rp\\
    &= 0\,,
\end{align}
which holds for all $\rho$. We can extend this to \emph{twisted bundles}, where we first assume that the fiber admits an automorphism group $G$ such that for each $\nu$,
\begin{align}
    g \partial f = \partial gf\ \forall g \in G\,,\ f \in F_\nu\,,
\end{align}
where arithmetic is done over $\field_q$. Then, a \emph{connection} $\varphi$ of a bundle is an assignment of an element of $G$ for each pair of a base cell and its boundary:
\begin{align}
    \{(b,a) : b,a \text{ are cells such that } a \in \partial b\} \xrightarrow{\varphi} G\,.
\end{align}
Then, a \emph{twisted boundary map} $\partial^E$ acts as
\begin{align}
    \partial^E_{(0,\nu)}(b^0 \otimes f) &= b^0 \otimes \partial f\\
    \intertext{and}
    \partial^E_{(1,\nu)}(b^1 \otimes f) &= -b^1 \otimes \partial f + \defsum{a^0 \in \partial b^1}{}{a^0 \otimes \varphi(b^1, a^0)f}\,,
\end{align}
where we note that the minus sign on the first term on the RHS of the second equation is to handle coefficients from $\field_q$; if our coefficient group were $\field_2$, this sign would not be needed, as we saw above. We can check, as we did above, that $\partial^E_\rho \partial^E_{\rho+1} = 0$ for all $\rho \geq 0$ for bases that are 1-complexes, but we do not show the math here. However, we note that more terms are needed in defining the action of the twisted boundary map for 2-complexes, 3-complexes, etc., so in this work we only focus on 1-complexes.

We will make use of the fact that the first homology $H_1(E)$ and cohomology $H^1(E)$ of the bundle are isomorphic to $H_1(B)$ and $H^1(B)$, respectively, such that the Betti numbers are equal: $b_1(E) = b_1(B)$. This isomorphism is induced by the \emph{bundle projection}, which we defined in~\cref{def:bundle-projection}. The definition for qubits holds for qudits, but we have a modified~\cref{fact:qubit-bundle-projection-props} for qudits:

\begin{proposition}[Qudit bundle projection properties]\label{prop:qudit-bundle-projection-props}
    The bundle projection induces vector space isomorphisms $\Pi_* : H_1(E) \to H_1(B)$ and $\Pi^* : H^1(B) \to H^1(E)$ if all of the following are true:
    \begin{enumerate}[label=(\roman*),ref=(\roman*)]
        \item\label{item:1-complex} $B$ is a 1-complex 
        \item\label{item:sum-zero} The sum of the coefficients of $\partial f^1$ of any fiber 1-cell $f^1$ is $0 \pmod{q}$
        \item\label{item:boundary} Every fiber 0-chain whose coefficients sum to $0 \pmod{q}$ is a boundary
        \item\label{item:betti-number} $H_0(B) = 0$, that is, the zeroth Betti number $b_0(B)$ vanishes
        \item\label{item:trivial} Every fiber automorphism acts trivially on $H_1(F)$
    \end{enumerate}
\end{proposition}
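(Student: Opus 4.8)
The plan is to follow the proof of the qubit version in~\cite[Sec.~2.4]{hastings2020}, but carrying the $\field_q$-coefficients and the twist/Koszul signs that are invisible in characteristic $2$. Throughout, $E$ denotes the $2$-complex $E_2 \to E_1 \to E_0$ built from the $1$-complex base $B$ (hypothesis~\ref{item:1-complex}) and the fiber $1$-complex $F$, so that $E_0 = B_0\otimes F_0$, $E_1 = (B_0\otimes F_1)\oplus(B_1\otimes F_0)$, and $E_2 = B_1\otimes F_1$; write $s\colon F_0\to\field_q$ for the sum-of-coefficients map. Hypotheses~\ref{item:sum-zero} and~\ref{item:boundary} say precisely that $s\circ\partial_1^F = 0$ and $\ker s = \im\partial_1^F$, i.e.\ that $s$ induces an isomorphism $H_0(F)\xrightarrow{\sim}\field_q$; I would record this first, together with the fact that $\Pi$ is a chain map (on $B_0\otimes F_1$ both $\Pi\partial^E$ and $\partial^B\Pi$ vanish by~\ref{item:sum-zero}, and on $B_1\otimes F_0$ one gets $\Pi_0\partial_1^E(b^1\otimes f^0) = s(f^0)\,\partial_1^B b^1 = \partial_1^B\Pi_1(b^1\otimes f^0)$, using that a fiber automorphism preserves $s$).

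Next I would isolate the vertical subcomplex $\mathcal{V}\subseteq E$ with $\mathcal{V}_\rho = B_0\otimes F_\rho$. It is genuinely a subcomplex because the twisted boundary of \cref{def:twisted-boundary-map} (in its $\field_q$ form) acts on $B_0\otimes F_\nu$ by the untwisted rule $b^0\otimes f\mapsto b^0\otimes\partial f$ — the twist enters only through base $1$-cells — so $\mathcal{V}\cong\bigoplus_{B_0}F$ and hence $H_\rho(\mathcal{V})\cong B_0\otimes H_\rho(F)$. The quotient $E/\mathcal{V}$ is $\bigl(B_1\otimes F\bigr)$ shifted up one degree with boundary $-\,\mathrm{id}_{B_1}\otimes\partial^F$, because every twist term of $\partial^E$ on $B_1\otimes F$ lands in $B_0\otimes F = \mathcal{V}$; hence $H_\rho(E/\mathcal{V})\cong B_1\otimes H_{\rho-1}(F)$, and in particular $H_1(E/\mathcal{V})\cong B_1\otimes H_0(F)\cong B_1$ via $s$.

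The core step is the long exact sequence of the pair $(E,\mathcal{V})$,
\begin{align*}
    0 &\to H_2(E) \to H_2(E/\mathcal{V}) \xrightarrow{c_2} H_1(\mathcal{V}) \xrightarrow{i_*} H_1(E) \\
    &\xrightarrow{q_*} H_1(E/\mathcal{V}) \xrightarrow{c_1} H_0(\mathcal{V}) \to H_0(E) \to 0 \,,
\end{align*}
using $H_2(\mathcal{V}) = 0 = H_0(E/\mathcal{V})$. Substituting the identifications above turns this into a sequence of the spaces $B_0$ and $B_1$, and the two connecting maps come out as the \emph{untwisted} base boundary: a snake-lemma computation gives $c_1 = \partial_1^B\otimes\mathrm{id}$ on the $H_0(F)$ factor (no twist appears, since $s$ is automorphism-invariant) and $c_2 = \partial_1^B\otimes\mathrm{id}$ on the $H_1(F)$ factor — and it is exactly here that hypothesis~\ref{item:trivial} enters, since lifting a class $b^1\otimes[z^F]\in B_1\otimes H_1(F)$ and applying $\partial^E$ produces $\sum_{a^0\in\partial b^1}\partial_{a^0,b^1}\,a^0\otimes\varphi(b^1,a^0)z^F$, whose homology class equals $(\partial_1^B b^1)\otimes[z^F]$ only because every $\varphi(b^1,a^0)$ acts trivially on $H_1(F)$. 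Granting this, $\operatorname{coker}(c_2)\cong\bigl(B_0/\im\partial_1^B\bigr)\otimes H_1(F) = H_0(B)\otimes H_1(F) = 0$ by hypothesis~\ref{item:betti-number}, so $c_2$ is surjective, $i_* = 0$, and therefore $q_*$ is injective with image $\ker c_1\cong\ker\partial_1^B$; since $B$ is a $1$-complex this equals $H_1(B)$. Unwinding the identifications shows the resulting isomorphism $H_1(E)\xrightarrow{\sim}H_1(B)$ is precisely $\Pi_*$, because $\Pi_1$ factors as the quotient $E_1\to E_1/\mathcal{V}_1$ followed by $s$ over the fiber, so $\Pi_* = s\circ q_*$ on $H_1$. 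The cohomological claim $\Pi^*\colon H^1(B)\xrightarrow{\sim}H^1(E)$ then follows by dualizing over $\field_q$: $\Pi^*$ is the transpose of $\Pi_*$ under the field-coefficient identification $H^1(\cdot)\cong H_1(\cdot)^*$, hence an isomorphism as well (alternatively, rerun the same argument with the transposed complexes).

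I expect the main obstacle to be the sign-and-twist bookkeeping in identifying $c_2$: in characteristic $2$ this is automatic, whereas over $\field_q$ one must verify that the twisted term contributes exactly $(\partial_1^B b^1)\otimes[z^F]$ — with no surviving unit coefficient on the $H_1(F)$ factor — which is what collapses the cokernel to $H_0(B)$ and makes hypothesis~\ref{item:trivial} load-bearing. The remaining $(-1)^\mu$ signs coming from the $\field_q$ version of \cref{def:twisted-boundary-map} only change the isomorphisms by overall units and do not affect any conclusion, so they can be dispatched by a routine check.
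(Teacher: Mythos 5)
Your proof is correct, but it takes a genuinely different route from the paper. The paper follows Hastings \emph{et al.}\ directly: it splits the claim into three hand-checked steps (well-definedness of $\Pi_*$ from (i)--(ii), surjectivity from (i)--(iii) by explicitly building a closed lift $b^1\otimes f^0+\sum_j a_j^0\otimes s_j'$ of a base cycle, and injectivity, which it defers to the qubit reference as extending "trivially"), and it never separately addresses the cohomology half. You instead package everything into the long exact sequence of the pair $(E,\mathcal{V})$ with $\mathcal{V}=B_0\otimes F_\bullet$: hypotheses (ii)--(iii) become the statement $H_0(F)\cong\field_q$ via $s$, hypothesis (v) identifies the connecting map $c_2$ as $\partial_1^B\otimes\mathrm{id}_{H_1(F)}$, hypothesis (iv) kills its cokernel, and injectivity and surjectivity of $\Pi_*=s\circ q_*$ drop out of exactness simultaneously rather than being argued separately; dualizing over $\field_q$ then disposes of $\Pi^*$ cleanly. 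Your approach is more structural, makes the role of each hypothesis transparent, and gives extra information for free (e.g.\ $H_2(E)\cong\ker\partial_1^B\otimes H_1(F)$); the paper's approach is more elementary and produces explicit preimage cycles, which is useful if one later wants concrete logical representatives. One point to make explicit in a final write-up: both your chain-map computation on $B_1\otimes F_0$ and your identification of $c_1$ use that every twist $\varphi(b^1,a^0)$ preserves the coefficient sum $s$, i.e.\ acts trivially on $H_0(F)$. This is automatic for permutation automorphisms but is a genuine extra condition under the paper's broader "linear transformations preserving the cycle structure"; the paper's own proof of surjectivity tacitly relies on the same fact, so this is a shared assumption rather than a defect of your argument, but it deserves a sentence.
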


\begin{proof}
    Note that conditions~\cref{item:sum-zero,item:boundary} are the only conditions that differ between qubits and qudits. Nevertheless, we reprove the lemma for qudits, which involves generalizing the series of propositions to $\field_q$.

    \begin{proposition}[Well-defined]\label{prop:well-defined}
        Assume~\cref{item:1-complex} and~\cref{item:sum-zero} are true. Then, the induced map $\Pi_* : H_1(E) \to H_1(B)$ is well-defined.
    \end{proposition}

    \begin{proof}
        Following the proof in~\cite{hastings2020}, we need to show that (1) any closed 1-chain becomes closed and (2) any 1-chain that is a boundary becomes a boundary.

        (1) To show that a closed 1-chain becomes closed, let $h^1 + v^1$ be a closed 1-chain. That means that $\partial(h^1 + v^1) = \partial h^1 + \partial v^1 = 0$, where now arithmetic is done over the field $\field_q$ and $0 = 0 \pmod{q}$, though we exclude the $\pmod{q}$ part for easier readability. For a fiber 1-cell $f^1$,~\cref{item:sum-zero} means that the sum of the coefficients of $\partial f^1$ is 0. If we have $v^1 = \defsum{j}{}{a_j^0 \otimes f_j^1}$, then $\partial v^1 = \defsum{j}{}{a_j^0 \otimes \partial f_j^1}$. Then, applying the projection $\Pi$ to $\partial v^1$ simply multiplies the base complex elements $a_j^0$ by the sum of the coefficients of $\partial f^1$, which is 0. Thus, we have $\Pi \partial v^1 = 0$. Therefore,
        \begin{align}
            \partial \Pi h^1 &= \Pi \partial h^1= \Pi(-\partial v^1)= -\Pi \partial v^1= 0\,.
        \end{align}
        To prove (2), we need to show that the boundary becomes a boundary, where it suffices to check $(1,1)$-cells:
        \begin{align}
            \partial(b^1 \otimes f^1) = -b^1 \otimes \partial f^1 + \defsum{a^0 \in \partial b^1}{}{a^0 \otimes \varphi(b^1,a^0)f^1}\,.
        \end{align}
        We then just apply the projection $\Pi$ to this:
        \begin{align}
            \Pi \partial(b^1 \otimes f^1) = \Pi\lp-b^1 \otimes \partial f^1 + \defsum{a^0 \in \partial b^1}{}{a^0 \otimes \varphi(b^1,a^0)f^1}\rp
        \end{align}
        The second term (the sum) vanishes by the definition of $\Pi$ in~\cref{def:bundle-projection} and the first term maps to $b^1$ times the sum of the coefficients of $\partial f^1$, which, recall, is 0, so $\Pi \partial(b^1 \otimes f^1) = 0$.
    \end{proof}

    \begin{proposition}[Onto]\label{prop:onto}
        Assume that~\cref{item:1-complex,item:sum-zero,item:boundary} are true. Then, $\Pi_* : H_1(E) \to H_1(B)$ is onto.
    \end{proposition}

    \begin{proof}
        We again generalize the proof in~\cite{hastings2020}. Given a base cycle $b^1$, we choose a fiber 0-cell $f^0$ and consider $b^1 \otimes f^0$. Its boundary is
        \begin{align}
            \partial(b^1 \otimes f^0) = \defsum{a^0 \in \partial b^1}{}{a^0 \otimes \varphi(b^1,a^0)f^0}\,.
        \end{align}
        Since $b^1$ is a cycle, $\partial b^1 = 0$ and so each $a_j^0 \in \partial b^1$ is also 0. That handles the base part. Now, we address the fiber part. Denote the fiber-chain over $a^0_j$ as $f'_j = \varphi(b^1,a^0)f^0$. We need to show that $f'_j$ is a boundary, which means its coefficients sum to 0:
        \begin{align}
            \Pi\partial(b^1 \otimes f^0) &= \defsum{a^0 \in \partial b^1}{}{\Pi(a^0 \otimes \varphi(b^1,a^0)f^0)}\\
            &= \defsum{j}{}{\abs{f'_j}a_j^0}= 0\,.
        \end{align}
        Thus, we must have $\abs{f'_j} = 0$. Also, $f'_j = \partial s'_j$ for a fiber 1-chain $s'_j$ and $b^1 \otimes f^0 + \defsum{j}{}{a_j^0 \otimes s'_j}$ is closed and projects to $b^1$.
    \end{proof}

    \begin{proposition}[One-to-one]\label{prop:one-to-one}
        Assume~\cref{item:1-complex,item:sum-zero,item:boundary,item:betti-number,item:trivial} are true. Then, $\Pi_* : H_1(E) \to H_1(B)$ is one-to-one.
    \end{proposition}

    \begin{proof}
        The proof in~\cite{hastings2020} extends trivially to working over the field $\field_q$.
    \end{proof}

    Thus, the series of proofs of the propositions concludes the proof of~\cref{prop:qudit-bundle-projection-props}.
\end{proof}

We now discuss the choice of circle bundle over classical codes. The fiber can remain a cycle graph (i.e., a circle), which is a 1-complex with $m_F$ 0-cells and $n_F$ 1-cells with $m_F = n_F > 1$. The only difference between the case for qubits and that of qudits is that the qudit cells and their connections are defined over $\field_q$. For qubits, the circle admits an automorphism group that is the dihedral group of order $2n_F$ and~\cite{hastings2020} uses only the rotational symmetry. Over $\field_q$, the automorphism group generalizes to a linear transformation on the fiber vector spaces over $\field_q$ while preserving the cycle structure, but possibly changing the coefficients. This circle construction still obeys the conditions of~\cref{prop:qudit-bundle-projection-props}, as long as the changes mentioned in the proof are taken into account (e.g., the sum of the coefficients of $\partial f^1$ equals 0). We keep with the convention in~\cite{hastings2020} and choose $n_F = m_F = \ell^2$ for some integer $\ell$.

For the base complex, we now choose a random classical LDPC code over $\field_q$, rather than a binary code, which can still be represented by a Tanner graph. The Tanner graph is a bipartite graph with $m_B$ left-vertices and $n_B$ right-vertices, where we choose $m_B = \frac{3}{4}n_B$ and a vertex degree of $\Delta = \bigTheta{\log^2{n_B}}$ as in~\cite{hastings2020}. With high probability, a random classical non-binary LDPC code will have its parity-checks be linearly independent, which is equivalent to $H_0(B) = 0$. With all of this formalism established, we can give the qudit fiber bundle code parameters:

\begin{lemma}[Qudit fiber bundle code parameters]\label{lemma:qudit-fiber-bundle-code-parameters}
    For a prime power $q = p^s$ for prime $p$ and $s \geq 1$, a \emph{qudit fiber bundle code} is a quantum LDPC code with parameters $\stabcode{n}{k}{d}_q$, where
    \begin{itemize}
        \item $n = n_B \cdot m_F + n_F \cdot m_B$
        \item $k = \bigTheta{n_B}$
        \item $d = \min\{d_X,d_Z\}$,
    \end{itemize}
    where $d_X = \bigOmega{m_F/\polylog(n_B)}$ and $d_Z = \bigOmega{n_B \cdot m_F^{1/2}/\polylog(n_B)}$ and where $m_B$, $m_F$, $n_B$, and $n_F$ are defined above. Choosing $n_B \sim m_B$ gives $n = \bigTheta{n_B^2}$, which simplifies the code dimension to $k = \bigTheta{n^{1/2}}$ and the distances to $d_X = \bigOmega{n^{1/2}/\polylog(n)}$ and $d_Z = \bigOmega{n^{3/4}/\polylog(n)}$.
\end{lemma}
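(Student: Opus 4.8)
The plan is to mirror, over $\field_q$, the qubit argument behind \cref{fact:qubit-fiber-bundle-codes} from~\cite{hastings2020}, carrying each step through with $\field_q$-coefficients and pointing out where characteristic~$2$ was implicitly used. First I would read off $n$. The physical qudits are the $1$-cells of the twisted bundle $E$, i.e.\ the elements of $E_1 = (B_0\otimes F_1)\oplus(B_1\otimes F_0)$, so $n = \dim_{\field_q}E_1 = m_B n_F + n_B m_F$, which is the claimed expression.

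Next I would compute $k = b_1(E) = \dim_{\field_q}H_1(E)$ using \cref{prop:qudit-bundle-projection-props}. I would verify its five hypotheses for the circle fiber over a random $\field_q$-LDPC base: \cref{item:1-complex} since the Tanner graph is a $1$-complex; \cref{item:sum-zero} and \cref{item:boundary} because the connected cycle graph has $H_0(F)\cong\field_q$, so $\im{\partial_1^F}$ is exactly the codimension-$1$ subspace of fiber $0$-chains whose coefficients sum to $0\pmod q$; \cref{item:betti-number} because a random $\field_q$-LDPC code has linearly independent parity checks with high probability, hence $H_0(B) = 0$; and \cref{item:trivial} because the rotations of the cycle fix its fundamental $1$-cycle and so act trivially on $H_1(F)\cong\field_q$. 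Then $\Pi_*\colon H_1(E)\to H_1(B)$ is an isomorphism, so $k = b_1(B) = \dim\ker{\partial_1^B} = n_B - \rank{\partial_1^B} = n_B - m_B = \tfrac14 n_B = \bigTheta{n_B}$, using $m_B = \tfrac34 n_B$ and independence of the checks.

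The bulk of the work is the distance bounds, where I would import the two quantitative inputs of~\cite{hastings2020}: (a) a random $\field_q$-LDPC code with $m_B = \tfrac34 n_B$ checks of degree $\Delta = \bigTheta{\log^2 n_B}$ has minimum distance $\bigOmega{n_B}$ and independent checks with high probability --- the union-bound/expansion estimate counts only supports and is insensitive to the field; and (b) the twisted-bundle ``filling'' lemmas bounding how cheaply a homologically nontrivial $1$-cycle, or a nontrivially co-closed $1$-cochain, can be represented. For $d_X$: an $X$-logical is a nonzero class in $H^1(E)\cong H^1(B)$ via $\Pi^*$, and the nontrivial twist between distinct fiber copies forces any representative to touch $\bigOmega{m_F/\Delta}$ cells, giving $d_X = \bigOmega{m_F/\polylog(n_B)}$. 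For $d_Z$: a $Z$-logical is a nonzero class in $H_1(E)$ which projects under $\Pi_*$ to a base codeword of weight $\bigOmega{n_B}$, and combining this with the fiber spreading forced by the random twists yields $d_Z = \bigOmega{n_B\, m_F^{1/2}/\polylog(n_B)}$. The LDPC property is inherited: fiber cells have constant degree and base cells degree $\Delta = \polylog(n_B)$, so every stabilizer generator has weight $\polylog(n_B)$ and every qudit participates in $\polylog(n_B)$ of them. Finally, taking $m_F$ of order $n_B$ gives $n = \bigTheta{n_B^2}$, hence $k = \bigTheta{n^{1/2}}$, $d_X = \bigOmega{n^{1/2}/\polylog(n)}$, and $d_Z = \bigOmega{n^{3/4}/\polylog(n)}$.

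The main obstacle is the distance step, and within it the $d_Z$ bound: faithfully transporting the most technical lemmas of~\cite{hastings2020} to $\field_q$. The reason I expect this to go through is that those arguments are overwhelmingly about the \emph{supports} of chains rather than their coefficient values; the single genuinely characteristic-$2$ ingredient --- ``a $0$-chain of even weight is a boundary'' --- is replaced everywhere by ``a $0$-chain whose coefficients sum to $0\pmod q$ is a boundary,'' which is precisely \cref{item:boundary} of \cref{prop:qudit-bundle-projection-props}. I would therefore walk through~\cite{hastings2020} replacing each parity/symmetric-difference argument by its $\field_q$ analogue and re-checking that the probabilistic estimates for random $\field_q$-LDPC codes survive with the same polylogarithmic parameters.
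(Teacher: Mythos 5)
Your proposal is correct and follows essentially the same route as the paper: both read $n$ off from $\dim E_1$, obtain $k$ from $b_1(E)=b_1(B)$ via the bundle projection, and defer the distance bounds to the qubit analysis of~\cite{hastings2020} on the grounds that those arguments are support-based and hence field-insensitive. If anything you are more explicit than the paper's proof, which simply asserts that $n$ and $k$ ``follow directly'' and that the distance scaling is unchanged over $\field_q$; your verification of the five hypotheses of \cref{prop:qudit-bundle-projection-props} and your isolation of the single characteristic-$2$ ingredient to be replaced supply exactly the details the paper leaves implicit (and the paper itself concedes that a full $\field_q$ reworking of the distance lemmas is deferred).
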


\begin{proof}
    The number of physical qudits $n$ and the number of logical qudits $k$ follow directly from the analysis in~\cite{hastings2020}. The scaling of the distance is the same as for qubits (extending to the field $\field_q$ does not affect the scaling), but the specific constants will be different.
\end{proof}

Since we only care about the scaling of the distance with the number of physical qudits, we do not do a detailed analysis of the constants in this work, which would require a significant reworking of the proofs in~\cite{hastings2020} and would add several pages to this work. An easy alternative is to calculate the distance numerically, as was done for several of the other code constructions we presented above.

\section{Discussion}\label{sec:discussion}
In this paper, we introduced a framework for generalizing quantum LDPC codes from qubits to qudits. We gave the formalism for several promising LDPC codes, namely bivariate bicycle codes, hypergraph product codes, subsystem hypergraph product simplex codes, high-dimensional expander codes, and fiber bundle codes. Furthermore, our numerical code searches and decoding simulations introduce several novel qudit error correcting codes. For the decoding of our qudit codes, we relied on mixed-integer programming, which, while accurate, is relatively slow. \revA{As such, we limited our studies to the code capacity, while future work could tackle the more challenging problem of decoding measurement or circuit-level errors. } The code is available on GitHub in Ref.~\cite{haug2025qudit}. For future work, one could find faster decoders tailored to qudits, such as one based on belief propagation~\cite{roffe2020,panteleev2021}, neural networks~\cite{torlai2017neural,fosel2018reinforcement}, or \textsc{MaxSAT}~\cite{berent2023,noormandipour2024}. We also leave to future work the details of logical gates and operators in the qudit regime, which can be done, for example, using code automorphisms~\cite{sayginel2024}.

We hope this work serves as a reference for researchers working on qudit error correction. The set of codes considered in this work is by no means exhaustive and there are many quantum error correcting codes that remain to be generalized to qudits. We also leave to future work more numerical analysis and searching for codes of the types that we introduced here, especially the high-dimensional expander and fiber bundle codes, for which we only gave theoretical results.

\begin{acknowledgments}
    The authors thank Xiaozhen Fu for helpful discussions. D.J.S.~acknowledges funding support from a Graduate Research Fellowship from the Joint Quantum Institute (JQI) at the University of Maryland, College Park. D.J.S.~also acknowledges support by ARL (W911NF-24-2-0107). A.T.~is supported by a Centre for Quantum Technologies (CQT) PhD scholarship at the National University of Singapore, the Google PhD Fellowship, and the CQT Young Researcher Career Development Grant. K.B.~is supported by Q.InC Strategic Research and Translational Thrust.
\end{acknowledgments}

\bibliographystyle{plainnat}
\bibliography{references}
\newpage
\appendix

\section{Primer on qudit error correction}\label{app-sec:qudit-ec-primer}
In this Appendix, we recall some useful definitions from the quantum error correction literature~\cite{gottesman2024}. All definitions below are for qubits until we specify that we are working with qudits. First, recall the definition of the Pauli group for qubits:

\begin{definition}[Qubit Pauli group]\label{def:pauli-group}
    The $n$-qubit \emph{Pauli group} $\mathcal{P}_n$ is the group of size $4^{n+1}$ whose elements are the tensor products of the identity matrix $I$ and the Pauli matrices $X$, $Y$, and $Z$, as well as an overall phase $\pm 1, \pm i$.
\end{definition}

We now define what a \emph{stabilizer} is:

\begin{definition}[Stabilizer~{\cite[Def. 3.3]{gottesman2024}}]\label{def:stabilizer}
    Given a quantum error correcting code $T \subseteq \mathcal{H}$ for some $n$-qubit Hilbert space $\mathcal{H}$, the \emph{stabilizer} $\mathsf{S}(T)$ is defined as
    \begin{align}
        \mathsf{S}(T) = \{M \in \mathcal{P}_n : M\ket{\psi} = \ket{\psi}\ \forall \ket{\psi} \in T\}\,.
    \end{align}
\end{definition}

Then, a \emph{stabilizer code} is defined as

\begin{definition}[Stabilizer code~{\cite[Def. 3.5]{gottesman2024}}]\label{def:stabilizer-code}
    If $T$ is a quantum error correcting code such that $T = \mathcal{T}(\mathsf{S}(T))$, where
    \begin{align}
        \mathcal{T}(\mathsf{S}) = \{\ket{\psi} : M\ket{\psi} = \ket{\psi}\ \forall M \in \mathsf{S}\},
    \end{align}
    then $T$ is a \emph{stabilizer code}.
\end{definition}

That is, if we are given the stabilizers, then to find the code space (and the code words), all we have to do is find the subspace of the total Hilbert space that corresponds to the +1 eigenspace of the stabilizers. With these definitions in mind, we can define a \emph{CSS code}, named after Calderbank, Steane, and Shor~\cite{calderbank1996good,steane1996}:

\begin{definition}[CSS code]\label{def:css-code}
    A \emph{CSS code} is a type of stabilizer code that is constructed from two classical, binary, linear codes $C_X$ and $C_Z$ of length $n$, where $C_X^\perp \subseteq C_Z$. The generators for a CSS codes are composed of either a tensor product of all $X$ and $I$ or all $Z$ and $I$.
\end{definition}

In this work, we focus exclusively on \emph{quantum low-density parity-check} (LDPC) codes, which we define as:

\begin{definition}[Quantum low-density parity-check codes]\label{def:qldpc}
    A \emph{quantum low-density parity-check} code is a quantum error correcting code whose stabilizers all have low weight (i.e., less than some constant $w \in \naturals$), where, the \emph{weight} of an operator is the number of non-identity operators in its tensor product representation.
\end{definition}

\subsection{Qudit error correction}\label{app-subsec:qudit-error-correction}
We now review the important notions of the theory of error correction for \emph{qudits}, the $q$-dimensional generalization of qubits where we now swap physical qubits, usually taken to be two-level systems, for physical qudits, $q$-level systems. One benefit of qudit codes is that they can correct more errors and send quantum data at a higher rate than their qubit counterparts. In this work, we take $q$ to be a prime power, that is, $q = p^m$ for prime $p$ and $m \in \naturals$. As such, the qudits that we work with inherit the mathematical structure afforded by \emph{finite} or \emph{Galois} fields, so we refer to these qudits as \emph{Galois qudits}. We start by defining the Galois qudit Pauli group:

\begin{definition}[{Galois qudit Pauli group~\cite[Section 8.1.2]{gottesman2024},\cite{ashikhmin2001,ketkar2006,grassl2003}}]
    For prime $p$ and $m \in \naturals$, consider the Galois field $\field_q$, where $q = p^m$, and the $q$-dimensional Hilbert space $\mathcal{H}$ with orthonormal basis $\ket{\alpha} : \alpha \in \field_q\}$. The $X$ and $Z$ Galois-qudit Pauli operators are defined by
    \begin{align}
        X^\alpha\ket{\beta} = \ket{\beta + \alpha} \quad\text{and}\quad
        Z^\alpha\ket{\beta} = \omega^{\Tr[\alpha\beta]}\ket{\beta}
    \end{align}
    for all $\alpha, \beta \in \field_q$ and $\omega = e^{2i\pi/p}$ being the $p\numth$ root of unity. Here, the trace function $\Tr: \field_q \to \field_p$ on $\alpha \in \field_q$ is given by
    \begin{align}
        \Tr[\alpha] = \sum_{j=0}^{m-1} \alpha^{p^j} \,.
    \end{align}
    For $\alpha,\beta \in \field_q$, the commutation relation between $X$ and $Z$ is given by
    \begin{align}
        Z^\alpha X^\beta = \omega^{\Tr[\alpha\beta]} X^\beta Z^\alpha \,.
    \end{align}
    For a positive integer $n$ and odd prime-power $q$, the $n$-Galois-qudit Pauli group $\mathcal{P}_n(q)$ is then given by    
    \begin{align}
        \mathcal{P}_n(q) = \left\{\omega^\gamma \bigotimes_{j=1}^n Z^\alpha X^\beta : \gamma \in \field_p, \alpha, \beta \in \field_q\right\}\,.   
    \end{align}
    For even prime-power $q$, the $n$-Galois-qudit Pauli group $\Tilde{\mathcal{P}}_n(q)$ is given by $\Tilde{\mathcal{P}}_n(q) = \{\eta P : \eta \in \{1,i\}, P \in \mathcal{P}_n(q)\}$, that is, its elements are defined in the same way as for odd prime-power $q$ but with an additional $\eta \in \{1,i\}$ factor.
\end{definition}

With this definition, we can now define qudit stabilizers:

\begin{definition}[Qudit stabilizer~{\cite[Def. 8.6]{gottesman2024}}]\label{def:qudit-stabilizer}
    Let $q = p^m$ be a prime power for prime $p$, $\mathcal{P} = \mathcal{P}_n(q)$ be the $n$-qudit Pauli group of dimension $q$, and $Q$ be a subspace of the Hilbert space $\mathcal{H}_q^{\otimes n}$. The $\field_q$ \emph{stabilizer of $Q$} is the set
    \begin{align}
        \mathsf{S}(Q) = \{M \in \mathcal{P} : \ket{\psi} \text{ is an eigenvector of $M$ with eigenvalue +1 } \forall \ket{\psi} \in Q\}.
    \end{align}
    If $\mathsf{S}$ is a subgroup of $\mathcal{P}$, then we say that $\mathsf{S}$ is an $\field_q$ \emph{stabilizer} if it is Abelian and if $e^{i\phi}I \notin \mathsf{S}$ for any phase $\phi \neq 0$. The \emph{code space} of an $\field_q$ stabilizer $\mathsf{S}$ is the subspace
    \begin{align}
        \mathcal{T}(\mathsf{S}) = \{\ket{\psi} : M\ket{\psi} = \ket{\psi}\ \forall M \in \mathsf{S}\}.
    \end{align}
    Finally, the code $Q$ is an $\field_q$ \emph{stabilizer code} if and only if $Q = \mathcal{T}(\mathsf{S}(Q))$. The \emph{normalizer} $\mathsf{N}(\mathsf{S})$ of the stabilizer $\mathsf{S}$ is
    \begin{align}
        \mathsf{N}(\mathsf{S}) = \{N \in \mathcal{P} : NM = MN\ \forall M \in \mathsf{S}\}.
    \end{align}
\end{definition}

With these terms defined, we have the following:

\begin{fact}[$\field_p$ stabilizer code properties]\label{fact:gf(p)-stabilizer-code-properties}
    Let $p$ be prime and let $\mathsf{S}$ be a $\field_p$ stabilizer for the code $\mathcal{T}(\mathsf{S})$, which has $n$ physical qudits. If $\abs{\mathsf{S}} = p^r$ (i.e., $\mathsf{S}$ has $r$ generators), then $\dim\mathcal{T}(\mathsf{S}) = p^{n-r}$ such that $\mathcal{T}(\mathsf{S})$ encodes $k = n - r$ logical qudits. The set of undetectable errors for $\mathsf{S}$ is $\hat{\mathsf{N}}(\mathsf{S}) \backslash \hat{\mathsf{S}}$ and the distance of $\mathsf{S}$ is $\min\{\text{wt}(E) : E \in \hat{\mathsf{N}}(\mathsf{S}) \backslash \hat{\mathsf{S}}\}$.
\end{fact}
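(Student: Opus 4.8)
The plan is to establish the three assertions in turn: the encoded dimension $k=n-r$, the description of the undetectable errors, and then the distance formula as an immediate corollary. Throughout I would work with Pauli operators modulo overall phases, since the two ``hatted'' groups $\hat{\mathsf{S}}\subseteq\hat{\mathsf{N}}(\mathsf{S})$ live in $\hat{\mathcal{P}}_n(p)$.

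First I would compute $\dim\mathcal{T}(\mathsf{S})$ using the group-average operator
\[
    P \;=\; \frac{1}{\abs{\mathsf{S}}}\sum_{M\in\mathsf{S}} M \,.
\]
Since $\mathsf{S}$ is abelian one checks $P^{\dagger}=P$, $P^{2}=P$, and $MP=P$ for every $M\in\mathsf{S}$, so $P$ is exactly the orthogonal projector onto $\mathcal{T}(\mathsf{S})$, and $\dim\mathcal{T}(\mathsf{S})=\Tr P=\abs{\mathsf{S}}^{-1}\sum_{M\in\mathsf{S}}\Tr M$. Every $M\in\mathsf{S}$ is a power of $\omega$ times a tensor product of operators $Z^{\alpha}X^{\beta}$, and such a tensor product is traceless unless it equals the identity; moreover $M=\omega^{c}I\in\mathsf{S}$ forces $c=0$ by the hypothesis that $\mathsf{S}$ contains no nontrivial phase. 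Hence only $M=I$ contributes, giving $\Tr P=p^{n}/p^{r}=p^{n-r}$, so $k=n-r$ since here $q=p$ (this also records that $\mathcal{T}(\mathsf{S})\neq0$). An alternative is the symplectic picture: $\mathsf{S}$ modulo phases corresponds to an $r$-dimensional isotropic subspace of $\field_p^{2n}$, and the usual counting of simultaneous eigenspaces again yields $p^{n-r}$.

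Next I would classify Pauli errors $E$ by their relation to $\mathsf{S}$. If $E\notin\hat{\mathsf{N}}(\mathsf{S})$ then $EM=\omega^{j}ME$ for some $M\in\mathsf{S}$ and $j\neq0$, so for any codeword $\ket{\psi}$ we get $M\,(E\ket{\psi})=\omega^{-j}(E\ket{\psi})$, placing $E\ket{\psi}$ in an eigenspace of $M$ orthogonal to the code: the syndrome of $E$ is nontrivial and $E$ is detected. If $E\in\hat{\mathsf{S}}$, say $E=\omega^{c}M$ with $M\in\mathsf{S}$, then $E$ acts on $\mathcal{T}(\mathsf{S})$ as the scalar $\omega^{c}$ and corrupts no encoded information. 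Finally, if $E\in\hat{\mathsf{N}}(\mathsf{S})\setminus\hat{\mathsf{S}}$ then $E$ commutes with all of $\mathsf{S}$, hence with $P$, so it preserves $\mathcal{T}(\mathsf{S})$ with trivial syndrome; and $E$ cannot act as a scalar on $\mathcal{T}(\mathsf{S})$, for otherwise $E$ would stabilize the code up to a phase and so lie in $\hat{\mathsf{S}}(\mathcal{T}(\mathsf{S}))=\hat{\mathsf{S}}$, a contradiction. This identifies $\hat{\mathsf{N}}(\mathsf{S})\setminus\hat{\mathsf{S}}$ as exactly the set of undetectable errors acting nontrivially on the logical subspace, and the distance statement follows at once: any error of weight below $\min\{\mathrm{wt}(E):E\in\hat{\mathsf{N}}(\mathsf{S})\setminus\hat{\mathsf{S}}\}$ falls into one of the first two cases above, hence is detected or harmless, so this minimum is the code distance.

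The main obstacle I anticipate is the careful handling of global phases --- the passage between $\mathsf{S},\mathsf{N}(\mathsf{S})$ and their phase-quotients $\hat{\mathsf{S}},\hat{\mathsf{N}}(\mathsf{S})$ --- and in particular the claim that the Paulis acting as scalars on $\mathcal{T}(\mathsf{S})$ are precisely those in $\hat{\mathsf{S}}$. That rests on the identity $\mathsf{S}(\mathcal{T}(\mathsf{S}))=\mathsf{S}$ for an $\field_q$ stabilizer (equivalently, the faithfulness of the logical action of $\mathsf{N}(\mathsf{S})/\mathsf{S}$), which one obtains from $\mathcal{T}(\mathsf{S})\neq0$ together with the defining property $Q=\mathcal{T}(\mathsf{S}(Q))$ of a stabilizer code in \cref{def:qudit-stabilizer}. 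Everything else is a direct transcription of the qubit argument, using only that a non-identity Galois-qudit Pauli is traceless and that its commutation phases are $p$-th roots of unity.
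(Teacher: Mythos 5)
The paper does not prove this statement: it is quoted as a background \emph{Fact} in the qudit-error-correction primer, imported from Gottesman's book, so there is no in-paper argument to compare against. Your proof is correct and is the standard one for that cited result: the group-average projector $P=\abs{\mathsf{S}}^{-1}\sum_{M\in\mathsf{S}}M$ together with tracelessness of non-identity Galois--qudit Paulis gives $\dim\mathcal{T}(\mathsf{S})=p^{n-r}$, and the trichotomy $E\notin\hat{\mathsf{N}}(\mathsf{S})$ / $E\in\hat{\mathsf{S}}$ / $E\in\hat{\mathsf{N}}(\mathsf{S})\setminus\hat{\mathsf{S}}$ correctly identifies the undetectable, logically nontrivial errors. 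The one step you rightly flag as delicate, $\mathsf{S}(\mathcal{T}(\mathsf{S}))=\mathsf{S}$, is most cleanly closed using the dimension formula you have already established: the inclusion $\mathsf{S}\subseteq\mathsf{S}(\mathcal{T}(\mathsf{S}))$ is automatic, and if it were strict, the larger stabilizer would have $r'>r$ generators yet the same fixed space $\mathcal{T}(\mathsf{S})$, contradicting $p^{n-r'}<p^{n-r}$.
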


In analogy with qubit stabilizer codes, we denote such a $q$-dimensional qudit stabilizer code with $n$ physical qudits, $k$ logical qudits, and distance $d$ as $\stabcode{n}{k}{d}_q$, where the only difference between this notation and that for qubits is the addition of the subscript $q$. We can also get qudit CSS codes. For prime dimension, the stabilizer code takes the form
\begin{align}
    \lp
    \begin{array}{c|c}
        0 & H_1\\
        H_2 & 0
    \end{array}
    \rp,
\end{align}
where $H_1$ and $H_2$ are the parity check operators for linear classical codes $C_1$ and $C_2$, respectively. We denote these classical codes $C_1$ and $C_2$ as $[n_1, k_1, d_1]$ and $[n_2, k_2, d_2]$ codes, respectively, and we define the \emph{dual code} of a classical linear code as

\begin{definition}[Dual code]\label{def:dual-code}
    If $C$ is a classical linear code, then the \emph{dual code} $C^\perp$ is
    \begin{align}
        C^\perp = \{y \in \field^n_q : x \cdot y = 0\ \forall x \in C\}.
    \end{align}
\end{definition}

Many of the same properties for qubit CSS codes hold for qudit CSS codes. For example, if $x \in C_2^\perp$ and $z \in C_1^\perp$, then $x \cdot z = 0$, so we must have that $C_1^\perp \subseteq C_2$. The formulas to find the number of encoded qudits and the distance are also the same as those for qubits, namely
\begin{align}
    k &= k_1 + k_2 - n\\
    d &\geq \min\{d_1, d_2\}.
\end{align}
One difference between the qudit and qubit cases are that we have to do addition modulo $q$ rather than modulo 2 for the commutation relations. For qudit CSS code construction, we first need the definition of a true $\field_q$ stabilizer:

\begin{definition}[True $\field_q$ stabilizer]\label{def:true-fq-stabilizer}
    Let $P \in \mathcal{P}_n(q)$ have $\field_q$ symplectic representation $(x_P | z_P)$ and let $\mathsf{S}$ be a $\field_q$ stabilizer with symplectic representation $\hat{\mathsf{S}}$. Then, $\mathsf{S}$ is a \emph{true $\field_q$ stabilizer} if $(x_P | z_P) \in \hat{\mathsf{S}}$ implies $(\gamma x_P | \gamma z_P) \in \hat{\mathsf{S}}$ as well for all $\gamma \in \field_q$.
\end{definition}

Then, we have the following:

\begin{fact}[Qudit CSS construction~{\cite[Thm. 8.7]{gottesman2024}}]\label{fact:qudit-css}
    Let $C_1$ and $C_2$ be two classical linear codes over $\field_q$ with parameters $[n,k_1,d_1]_q$ and $[n,k_2,d_2]_q$ and satisfying $C_1^\perp \subseteq C_2$. Then, there exists a true $\field_q$ stabilizer code with stabilizer given as the smallest group containing all $\bigotimes_j{Z^{\gamma_j}}$ and $\bigotimes_j{X^{\eta_j}}$, where $\gamma$ runs over elements of $C_1^\perp$ and $\eta$ runs over elements of $C_2^\perp$ and with parameters $\stabcode{n}{k_1+k_2-n}{d}_q$ with $d \geq \min\{d_1,d_2\}$.
\end{fact}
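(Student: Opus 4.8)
The plan is to verify, in turn, the three assertions bundled into the statement: that the prescribed generators really do generate a true $\field_q$ stabilizer in the sense of \cref{def:true-fq-stabilizer}, that the resulting code has $k = k_1+k_2-n$, and that its distance is at least $\min\{d_1,d_2\}$. Throughout I would work with $\field_q$-symplectic representations: the $Z$-type generators $\bigotimes_j Z^{\gamma_j}$ have representation $(0\mid\gamma)$ with $\gamma$ ranging over $C_1^\perp$, and the $X$-type generators $\bigotimes_j X^{\eta_j}$ have representation $(\eta\mid 0)$ with $\eta$ ranging over $C_2^\perp$. The commutation relation $Z^\alpha X^\beta=\omega^{\Tr[\alpha\beta]}X^\beta Z^\alpha$ shows a $Z$-generator and an $X$-generator commute exactly when $\gamma\cdot\eta=0$; the hypothesis $C_1^\perp\subseteq C_2$ dualizes to $C_2^\perp\subseteq (C_1^\perp)^\perp=C_1$, so every $\eta\in C_2^\perp$ lies in $C_1$ and is therefore orthogonal to every $\gamma\in C_1^\perp$, giving commutativity; the $Z$-generators commute among themselves and likewise the $X$-generators, so $\mathsf{S}$ is abelian. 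Phase-freeness holds because $\bigotimes_j X^{\eta_j}Z^{\gamma_j}$ sends $\ket{b}\mapsto\omega^{\Tr[\gamma\cdot(b+\eta)]}\ket{b+\eta}$, which is a nonzero scalar multiple of the identity only if $\eta=0$ and $\gamma=0$. Finally, since $C_1^\perp$ and $C_2^\perp$ are linear over $\field_q$, the set of symplectic representations arising from $\mathsf{S}$ is precisely the $\field_q$-subspace $\{(\eta\mid\gamma):\eta\in C_2^\perp,\ \gamma\in C_1^\perp\}$, which is closed under scaling by any $\lambda\in\field_q$; hence $\mathsf{S}$ is a true $\field_q$ stabilizer.

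For the parameter $k$, I would observe that this is a CSS stabilizer whose $X$-check matrix $H_X$ has rows spanning $C_2^\perp$, hence rank $\dim_{\field_q}C_2^\perp = n-k_2$, and whose $Z$-check matrix $H_Z$ has rows spanning $C_1^\perp$, hence rank $n-k_1$. The CSS dimension formula recalled in \cref{subsec:coding-theory-primer}, $k=n-\rank_{\field_q}H_X-\rank_{\field_q}H_Z$, then gives $k=n-(n-k_2)-(n-k_1)=k_1+k_2-n$; equivalently, one may count $\field_q$-generators and invoke the $\field_q$-analogue of \cref{fact:gf(p)-stabilizer-code-properties}.

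For the distance I would identify the undetectable errors. Using the trace-symplectic form together with closure of $\mathsf{S}$ under $\field_q$-scaling, a Pauli with representation $(\mathbf{x}\mid\mathbf{z})$ lies in $\hat{\mathsf{N}}(\mathsf{S})$ iff $\mathbf{x}\cdot\gamma=0$ for all $\gamma\in C_1^\perp$ and $\mathbf{z}\cdot\eta=0$ for all $\eta\in C_2^\perp$, i.e. $\mathbf{x}\in(C_1^\perp)^\perp=C_1$ and $\mathbf{z}\in(C_2^\perp)^\perp=C_2$; and it lies in $\hat{\mathsf{S}}$ iff $\mathbf{x}\in C_2^\perp$ and $\mathbf{z}\in C_1^\perp$. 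Hence any $E\in\hat{\mathsf{N}}(\mathsf{S})\setminus\hat{\mathsf{S}}$ has either $\mathbf{x}\in C_1\setminus C_2^\perp$ or $\mathbf{z}\in C_2\setminus C_1^\perp$; in the first case $\mathbf{x}$ is a nonzero codeword of $C_1$, so $\wt{E}\ge\wt{\mathbf{x}}\ge d_1$, and in the second $\wt{E}\ge\wt{\mathbf{z}}\ge d_2$. Taking the minimum over all such $E$ (the distance, by \cref{fact:gf(p)-stabilizer-code-properties}) yields $d\ge\min\{d_1,d_2\}$.

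I expect the one step requiring real care to be the $\field_q$-versus-$\field_p$ bookkeeping: commutation phases are valued in $\field_p$ through the trace, so commuting with a $Z$-generator only gives $\Tr[\mathbf{x}\cdot\gamma]=0$, and one must use closure of $\mathsf{S}$ under $\field_q$-scaling (so that $\lambda\gamma\in C_1^\perp$ for every $\lambda$) together with nondegeneracy of the trace form of $\field_q/\field_p$ to upgrade this to $\mathbf{x}\cdot\gamma=0$ in $\field_q$. This is exactly where the \emph{true} $\field_q$ stabilizer hypothesis is essential, both in proving $\mathsf{S}$ is well-formed and in characterizing the normalizer; once that is in place, the rest is a routine transcription of the standard qubit CSS argument into $\field_q$ linear algebra.
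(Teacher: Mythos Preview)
Your argument is correct and complete: you properly establish abelianness and phase-freeness, verify the true-$\field_q$ condition via $\field_q$-linearity of $C_1^\perp$ and $C_2^\perp$, compute $k$ from the CSS rank formula, and bound the distance by characterizing $\hat{\mathsf{N}}(\mathsf{S})\setminus\hat{\mathsf{S}}$. Your final paragraph about upgrading $\Tr[\mathbf{x}\cdot\gamma]=0$ to $\mathbf{x}\cdot\gamma=0$ via $\field_q$-scaling and nondegeneracy of the trace form is exactly the right observation and the only genuinely non-qubit ingredient.

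As for comparison with the paper: the paper does not actually prove this statement. It simply writes ``We refer the reader to~[Thm.~8.7, gottesman2024] for a proof of this statement.'' So you have supplied strictly more than the paper does; your proof is essentially the standard one that Gottesman gives, adapted to the $\field_q$ setting with the trace-form subtlety handled correctly.
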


We refer the reader to~\cite[Thm. 8.7]{gottesman2024} for a proof of this statement.

\end{document}